\let\old@tocline\@tocline
\let\section@tocline\@tocline
\newcommand{\subsection@dotsep}{4.5}
\newcommand{\subsubsection@dotsep}{4.5}
     \leaders\hbox{$\m@th
        \mkern \subsection@dotsep mu\hbox{.}\mkern \subsection@dotsep mu$}\hfill
\let\subsection@tocline\@tocline
\let\@tocline\old@tocline
     \leaders\hbox{$\m@th
        \mkern \subsubsection@dotsep mu\hbox{.}\mkern \subsubsection@dotsep mu$}\hfill
\let\subsubsection@tocline\@tocline
\let\@tocline\old@tocline
\let\old@l@subsection\l@subsection
\let\old@l@subsubsection\l@subsubsection
\def\@tocwriteb#1#2#3{%
  \begingroup
    \@xp\def\csname #2@tocline\endcsname##1##2##3##4##5##6{%
      \ifnum##1>\c@tocdepth
      \else \sbox\z@{##5\let\indentlabel\@tochangmeasure##6}\fi}%
    \csname l@#2\endcsname{#1{\csname#2name\endcsname}{\@secnumber}{}}%
  \endgroup
  \addcontentsline{toc}{#2}%
    {\protect#1{\csname#2name\endcsname}{\@secnumber}{#3}}}%
\newlength{\@tocsectionindent}
\newlength{\@tocsubsectionindent}
\newlength{\@tocsubsubsectionindent}
\newlength{\@tocsectionnumwidth}
\newlength{\@tocsubsectionnumwidth}
\newlength{\@tocsubsubsectionnumwidth}
\newcommand{\settocsectionnumwidth}[1]{\setlength{\@tocsectionnumwidth}{#1}}
\newcommand{\settocsubsectionnumwidth}[1]{\setlength{\@tocsubsectionnumwidth}{#1}}
\newcommand{\settocsubsubsectionnumwidth}[1]{\setlength{\@tocsubsubsectionnumwidth}{#1}}
\newcommand{\settocsectionindent}[1]{\setlength{\@tocsectionindent}{#1}}
\newcommand{\settocsubsectionindent}[1]{\setlength{\@tocsubsectionindent}{#1}}
\newcommand{\settocsubsubsectionindent}[1]{\setlength{\@tocsubsubsectionindent}{#1}}
\renewcommand{\l@section}{\section@tocline{1}{\@tocsectionvskip}{\@tocsectionindent}{\@tocsectionnumwidth}{\@tocsectionformat}}%
\renewcommand{\l@subsection}{\subsection@tocline{1}{\@tocsubsectionvskip}{\@tocsubsectionindent}{\@tocsubsectionnumwidth}{\@tocsubsectionformat}}%
\renewcommand{\l@subsubsection}{\subsubsection@tocline{1}{\@tocsubsubsectionvskip}{\@tocsubsubsectionindent}{\@tocsubsubsectionnumwidth}{\@tocsubsubsectionformat}}%
\newcommand{\@tocsectionformat}{}
\newcommand{\@tocsubsectionformat}{}
\newcommand{\@tocsubsubsectionformat}{}
\def\csname toc@1format\endcsname{\@tocsectionformat}
\def\csname toc@2format\endcsname{\@tocsubsectionformat}
\def\csname toc@3format\endcsname{\@tocsubsubsectionformat}
\newcommand{\settocsectionformat}[1]{\renewcommand{\@tocsectionformat}{#1}}
\newcommand{\settocsubsectionformat}[1]{\renewcommand{\@tocsubsectionformat}{#1}}
\newcommand{\settocsubsubsectionformat}[1]{\renewcommand{\@tocsubsubsectionformat}{#1}}
\newlength{\@tocsectionvskip}
\newcommand{\settocsectionvskip}[1]{\setlength{\@tocsectionvskip}{#1}}
\newlength{\@tocsubsectionvskip}
\newcommand{\settocsubsectionvskip}[1]{\setlength{\@tocsubsectionvskip}{#1}}
\newlength{\@tocsubsubsectionvskip}
\newcommand{\settocsubsubsectionvskip}[1]{\setlength{\@tocsubsubsectionvskip}{#1}}
\patchcmd{\tocsection}{\indentlabel}{\makebox[\@tocsectionnumwidth][l]}{}{}
\patchcmd{\tocsubsection}{\indentlabel}{\makebox[\@tocsubsectionnumwidth][l]}{}{}
\patchcmd{\tocsubsubsection}{\indentlabel}{\makebox[\@tocsubsubsectionnumwidth][l]}{}{}
\newcommand{\@sectypepnumformat}{}
\renewcommand{\contentsline}[1]{%
  \expandafter\let\expandafter\@sectypepnumformat\csname @toc#1pnumformat\endcsname%
  \csname l@#1\endcsname}
\newcommand{\@tocsectionpnumformat}{}
\newcommand{\@tocsubsectionpnumformat}{}
\newcommand{\@tocsubsubsectionpnumformat}{}
\newcommand{\setsectionpnumformat}[1]{\renewcommand{\@tocsectionpnumformat}{#1}}
\newcommand{\setsubsectionpnumformat}[1]{\renewcommand{\@tocsubsectionpnumformat}{#1}}
\newcommand{\setsubsubsectionpnumformat}[1]{\renewcommand{\@tocsubsubsectionpnumformat}{#1}}
\renewcommand{\@tocpagenum}[1]{%
  \hfill {\mdseries\@sectypepnumformat #1}}
\let\oldappendix\appendix
\renewcommand{\appendix}{%
  \leavevmode\oldappendix%
  \addtocontents{toc}{%
    \protect\settowidth{\protect\@tocsectionnumwidth}{\protect\@tocsectionformat\sectionname\space}%
    \protect\addtolength{\protect\@tocsectionnumwidth}{2em}}%
}
\let\oldtableofcontents\tableofcontents
\renewcommand{\tableofcontents}{%
  \vspace*{-\linespacing}
  \oldtableofcontents}
\numberwithin{equation}{section}
\theoremstyle{plain}
\newtheorem{thm}{Theorem}[section]
\newtheorem{prop}[thm]{Proposition}
\newtheorem{defi}[thm]{Definition}
\newtheorem{lem}[thm]{Lemma}
\newtheorem{cor}[thm]{Corollary}
\newtheorem{eg}[thm]{Example}
\theoremstyle{remark}
\newtheorem{rema}[thm]{Remark}
\newtheorem{conjec}[thm]{Conjecture}
\newcommand{\Z}{\mathbb{Z}} 
\newcommand{\R}{\mathbb{R}}
\newcommand{\C}{\mathbb{C}}
\newcommand{\V}{\mathbb{V}}
\newcommand{\AR}{\mathscr{A}(\mathcal{R})}
\newcommand{\dd}{\mathrm{d}}
\newcommand{\ii}{\mathrm{i}}
\newcommand{\ket}[1]{\left|#1\right\rangle}      
\newcommand{\bra}[1]{\left\langle #1\right|}     
\newcommand{\PROD}[3]{\mathop{\overrightarrow\prod}\limits_{#1 \le #2 \le #3 }}
\newcommand{\hypref}[2]{\ifx\href\asklfhas #2\else\href{#1}{#2}\fi}
\newcommand{\Secref}[1]{Section~\ref{#1}}
\newcommand{\Appref}[1]{Appendix~\ref{#1}}
\newcommand{\Figref}[1]{Figure~\ref{#1}}
\renewcommand{\eqref}[1]{(\ref{#1})}
\def\[{\begin{equation}}
\def\]{\end{equation}}
\def\<{\begin{eqnarray}}
\def\>{\end{eqnarray}}
\title[]{Six-vertex model and non-linear \\ differential equations I. Spectral problem}
\author{W. Galleas}
\address{Institut f\"ur Theoretische Physik, Eidgen\"ossische Technische Hochschule Z\"urich, Wolfgang-Pauli-Strasse 27, 8093 Z\"urich, Switzerland}
\email{galleasw@phys.ethz.ch}
\subjclass[2010]{82B23; 39B32}
\keywords{Six-vertex model, functional equations, non-linear differential equations, Riccati equation, Schr\"odinger equation}
\thanks{The work of W.G. is partially supported by the Swiss National Science Foundation through the NCCR SwissMAP}
\begin{document}

\begin{abstract}
In this work we relate the spectral problem of the toroidal six-vertex model's transfer matrix with the theory of integrable non-linear differential
equations. More precisely, we establish an analogy between the \emph{Classical Inverse Scattering Method} and previously proposed functional equations
originating from the Yang-Baxter algebra. The latter equations are then regarded as an \emph{Auxiliary Linear Problem} allowing us to show that the 
six-vertex model's spectrum solves Riccati-type non-linear differential equations. Generating functions of conserved quantities are expressed in terms of
determinants and we also discuss a relation between our Riccati equations and a stationary Schr\"odinger equation.

\end{abstract}

\maketitle

\tableofcontents

\section{Introduction} \label{sec:INTRO}

Solving differential equations is one of the main problems one needs to deal with in order to study a large amount of physical theories. In particular, important physical phenomena
are in their turn described by non-linear differential equations. \emph{Solitons} are remarkable examples of such non-linear phenomena and their study have
precipitated a series of developments in both physics and mathematics. The \emph{Korteweg - de Vries} (KdV) equation \cite{KdV_1895} is a distinguished 
example of equation describing solitons and the problem of solving it exactly has led to the formulation of the \emph{Classical Inverse Scattering Method} (CISM)
by Gardner, Greene, Kruskal and Miura \cite{inverse_scattering_1967}. In fact, conditions allowing for the exact integration of differential 
equations originating in \emph{Hamiltonian systems} have been previously formulated by Liouville \cite{Liouville_1855}. As for 
generic \emph{differential systems} we have Frobenius' criteria stating that a system of differential equations, regarded as a collection
of \emph{differential forms} on a manifold $M$, is integrable if $M$ admits a foliation by maximal integral manifolds \cite{Frobenius_1877}. 

On the other hand, several quantum systems are not \emph{a priori} described by differential equations and, in such cases, one can not immediately
resort to Liouville's conditions. However, for quantum systems satisfying special properties we have available a quantum version of the CISM 
\cite{Sk_Faddeev_1979, Takh_Faddeev_1979} in which Bethe ansatz methods play a fundamental role.

\subsection{Bethe ansatz} 
The modern theory of integrable systems has been largely shaped by Bethe's seminal solution of the one-dimensional isotropic Heisenberg 
spin-chain with nearest-neighbors interaction \cite{Bethe_1931}. The latter is a paradigmatic model of quantum magnetism and it is also refereed to in the
literature as $\mathrm{XXX}$ model. Bethe's method contains a significant amount of physical insights and its starting point is the proposal of an
ansatz for the model's eigenvectors. Such ansatz is parameterized by additional variables which are then fine tuned by 
the eigenvalue problem of the associated Hamiltonian operator. 
Those variables are usually denominated \emph{Bethe roots}; and finding a consistent way of fixing such parameters
is one of the crucial steps in Bethe's method. The constraints imposed on Bethe roots are then called \emph{Bethe ansatz equations}.
This methodology pioneered by Bethe has been extended to a large number of spin-chain Hamiltonians and we shall not attempt to presenting a complete list
of models solved through it. However, it is important to remark the solutions of the $\mathrm{XXZ}$ 
\cite{Yang_Yang_1966, Yang_Yang_1966I, Yang_Yang_1966II, Yang_Yang_1966III} and Hubbard \cite{Lieb_Wu_1968} models as fundamental 
contributions to the understanding of the mathematical structure underlying Bethe ansatz method.

\subsection{Transfer matrix diagonalization} 
Although Bethe ansatz was originally devised for the diagonalization of spin-chain hamiltonians, it was later understood that its range of applications is
in fact much wider. For instance, in the series of works \cite{Lieb_1967, Lieb_1967a, Lieb_1967b, Lieb_1967c} Lieb has shown that Bethe ansatz
can also be employed in problems of classical Statistical Mechanics. More precisely, Lieb showed that the \emph{transfer matrix} associated with 
the six-vertex model on a torus can also be diagonalized by means of Bethe ansatz; and that such solution offers access to the model's free-energy in 
the thermodynamical limit.

For the sake of precision, we stress here that the six-vertex model solved by Lieb was not a generic one. The statistical weights assigned for 
vertex configurations in the latter were chosen in such a way that particular identities were fulfilled; ultimately granting notable 
properties to the model. Lieb's result played a fundamental role in subsequent developments in the field; for instance, in the development of Baxter's concept 
of \emph{commuting transfer matrices} and \emph{\textsc{t-q} equations} \cite{Baxter_1971}. The latter consists of functional relations characterizing the spectrum of commuting transfer 
matrices. 

As far as \textsc{t-q} equations method is concerned, some comments are in order. For instance, in addition to the model's transfer matrix,
\textsc{t-q} method introduces an auxiliary operator such that Bethe roots are identified with zeroes of its eigenvalues. This auxiliary operator
is usually referred to as \textsc{q}-operator and it is constructed ensuring commutativity with the transfer matrix one wants to diagonalize.
Additional properties are also required but we shall not go into such details.

Despite all the success of Bethe ansatz methods, it is fair to say there is one fundamental pitfall in its use. Although it yields a single 
compact formula encoding all eigenvalues of the model's transfer matrix or Hamiltonian; such formula still depends on Bethe roots which need
to be determined by solving Bethe ansatz equations. The latter consists of a set of algebraic equations whose space of solutions are not yet 
fully and rigorously understood.

\subsection{Functional methods} It is not a simple task to mensurate the importance of Baxter's \textsc{t-q} method. Besides the introduction of
functional and analytical methods in the theory of exactly solvable models of Statistical Mechanics, it has precipitated a series of developments
in several other fields. For instance, traces of certain monodromy matrices in conformal field theories have been identified with \textsc{q}-operators 
in \cite{Bazhanov_1997}. Moreover, \textsc{t-q} relations were shown to be central objects in the so called ODE/IM correspondence \cite{Dorey_2007}
and, more recently, Baxter \textsc{q}-operators have also made their appearance in quantum \textsc{k}-theory \cite{Pushkar_Smirnov_Zeitlin_2016}.

Nevertheless, it is still sensible to ask if \textsc{t-q} relations are the only functional equations characterizing the spectrum of commuting transfer
matrices. As a matter of fact, the so called \emph{fusion hierarchy} \cite{Kulish_Reshetikhin_Sklyanin_1981, Kirillov_Reshetikhin_1987} 
and the \emph{inversion relation} \cite{Stroganov_1979} provide alternative functional relations characterizing eigenvalues of 
transfer matrices within their own range of application. 

Functional relations originating from the fusion hierarchy are usually solved in terms of Bethe ansatz equations but it is worth 
remarking they consist of a set of relations involving extra transfer matrices in addition to the one initially intended for diagonalization.
On the other hand, inversion relations seem to be available only for free-fermion models. 
A novel type of functional relation was recently put forward in \cite{Galleas_Twists} describing the spectrum of the transfer matrix
associated with the trigonometric six-vertex model with anti-periodic boundary twists. In particular, when the six-vertex model anisotropy
parameter is a \emph{root-of-unity}, the equation presented in \cite{Galleas_Twists} truncates and it can be regarded as a generalization of 
Stroganov's inversion relations \cite{Stroganov_1979}.

\subsection{Algebraic-functional approach} The study of spectral problems associated with two-dimensional vertex models through functional 
equations has been a successful endeavor. This is in particular due to the variety of mechanisms allowing for the derivation of such 
equations. Although we have a few functional equations methods available,  \textsc{t-q} equations and their generalizations still play a distinguished
role among them since, at the end of the day, one is often led to \textsc{t-q} type relations. In addition to that, it is important
to remark that the implementation of such methods usually involve strong use of \emph{representation theoretical} properties
of the model under consideration.

An alternative functional method in the theory of exactly solvable models was put forward in \cite{Galleas_2008} for spectral problems and
in \cite{Galleas_2010} for partition functions with domain-wall boundaries. The main idea of \cite{Galleas_2008, Galleas_2010} is to use the
Yang-Baxter algebra, which is a common algebraic structure underlying integrable vertex models, as a source of functional relations characterizing 
quantities of interest. We refer to this approach as \emph{Algebraic-Functional (AF) method} and, by construction, it requires very little 
information on the particular representation we are considering. In this way, this method has resulted into very general types of functional
equations whose structure was shown to accommodate both partition functions with domain-wall boundary conditions and scalar products of Bethe vectors.

However, it is fair to say that the AF method, at its current stage, is not as well developed for spectral problems as it is for
partition functions with domain-wall boundaries. As for spectral problems we remark the following results of the AF method:
\begin{itemize}
\item relation between eigenvalues of the anti-periodic six-vertex model and the partition function of the six-vertex model with
domain-wall boundaries \cite{Galleas_Twists};
\\
\item derivation of partial differential equations underlying the spectrum of the six-vertex model with periodic boundary conditions \cite{Galleas_2015}.
\end{itemize}
As a matter of fact, the present work will be based on the functional equations originally derived in \cite{Galleas_2015}.

\subsection{This work} The range of applicability of the AF method has been extended over the past years and we refer 
the reader to the  works \cite{Galleas_2011, Galleas_2012, Galleas_2013, Galleas_SCP, Galleas_Lamers_2014, Galleas_Lamers_2015, Galleas_2016, Lamers_2015, Galleas_2016a, Galleas_2016b, Galleas_2016c}
for a detailed account. In particular, in the recent works \cite{Galleas_2016a, Galleas_2016b, Galleas_2016c} we have put forward a quite
general method for solving the kind of functional equations deduced from this approach. This new method overcomes several difficulties and, in 
particular, allows one to naturally express solutions as determinants.

The aforementioned works \cite{Galleas_2016a, Galleas_2016b, Galleas_2016c} focus on partition functions with domain-wall boundaries and 
scalar products of Bethe vectors; and here our goal is to extend that approach for studying the functional equation derived in
\cite{Galleas_2015}. Although the equations derived in \cite{Galleas_2015, Galleas_2016b, Galleas_2016c} share the same structure,
the one of \cite{Galleas_2015} still exhibits some fundamental differences. For instance, its coefficients encode eigenvalues of the six-vertex
model transfer matrix. This feature provides the initial insight for drawing an analogy between the AF method and the \emph{Classical Inverse Scattering (CIS) method}.
In particular, here we intend to show that the linear functional equation obtained in \cite{Galleas_2015} plays the same role
as the \emph{auxiliary linear problem} within the CIS framework.

The auxiliary linear problem in the CIS method can assume many shapes. For instance, 
\begin{itemize}
\item the problems $L(t) \psi = \lambda \psi$ and $\frac{\partial \psi}{\partial t} = M \psi$ in Lax representation
\[
\frac{\dd L}{\dd t} = [ M, L] \; ; \nonumber 
\]
\item the set of first-order equations $\frac{\partial F}{\partial x} = U(x, t) F$ and $\frac{\partial F}{\partial t} = V(x, t) F$ in the 
zero-curvature representation
\[
\frac{\partial U}{\partial t} -  \frac{\partial V}{\partial x} + [U , V] = 0 \; . \nonumber
\]
\end{itemize}
Non-linear differential equations one would like to solve are then encoded in such representations. 

In the present paper we put forward another type of auxiliary linear problem whose consistency condition encodes non-linear
functional equations describing quantities of interest.
A particular specialization of such functional equations then yields non-linear differential equations. Using the AF method we 
shall exhibit an explicit realization of such auxiliary linear problem encoding the spectrum of the six-vertex model. In particular, we
find that the spectrum of the six-vertex model transfer matrix is governed by Riccati non-linear differential equations and higher-order
analogues.

\subsection{Outline} 
The present paper is devoted to the study of the eigenvalue problem associated with the six-vertex model's transfer matrix. 
Although this problem has been extensively discussed in the literature, here we would like to offer a new perspective on 
the diagonalization of transfer matrices and establish a relation with the theory of non-linear differential equations.
In order to clarify the proposed relation between vertex models and non-linear differential equations, we have organized this paper
as follows. In \Secref{sec:TRANS} we describe some algebraic aspects of the six-vertex model which will be required throughout this work. 
\Secref{sec:TSPEC}, in its turn, is devoted to the description of the aforementioned spectral problem by means of functional equations
originating from the Yang-Baxter algebra. In particular, in \Secref{sec:TSPEC} we also show that such functional equations can be
regarded as an \emph{auxiliary linear problem} along the lines of \emph{Lax} and \emph{zero-curvature} representations. One of the most important
results of the present paper is then described in \Secref{sec:QUANT}. More precisely, in \Secref{sec:QUANT} we present a general procedure
for constructing conserved quantities underlying certain non-linear functional equations encoded in our version of auxiliary linear problem.
In \Secref{sec:AUX} we describe a method for solving the aforementioned auxiliary linear problem and implement it for the 
spectral problem associated with the six-vertex model. The method put forward in \Secref{sec:AUX} has several byproducts and one of them is showing that
eigenvalues of the six-vertex model's transfer matrix satisfies Riccati equations. \Secref{sec:PDE} is then devoted to another byproduct
of \Secref{sec:AUX}. More precisely, in \Secref{sec:PDE} we describe a set of non-linear \emph{Partial Differential Equations} (PDEs)
solved by quantities introduced in \Secref{sec:AUX}. The generating functions of conserved quantities described in \Secref{sec:QUANT}
are not restricted to the six-vertex model's eigenvalue problem and such specialization is then investigated in details in \Secref{sec:6VC}.
Next, in \Secref{sec:DISCR} we discuss the issue of discretization of the transfer matrix's spectrum. The combination
of the aforementioned results allows us to unveil a relation between the six-vertex model's spectral problem and a particular
stationary Schr\"odinger equation. The latter relation is then made precise in \Secref{sec:SCHROD} and concluding remarks are
discussed in \Secref{sec:CONCL}. Some extra results discussed in the main text are then gathered in \Appref{app:FUN}.

\section{Six-vertex model's algebraic formulation} \label{sec:TRANS}

The origin of the six-vertex model is intimately related to the problem of the \emph{ice residual entropy} \cite{Pauling_1935}. 
The literature devoted to its study is quite extensive and we refer the reader to \cite{Baxter_book, Korepin_book} and references therein
for a more detailed account. Here we restrict our presentation to the algebraic aspects underlying the exactly solvable
six-vertex model which will be required throughout this work. 
In particular, in this section we shall consider a slight generalization of the results previously presented in \cite{Galleas_2015}. 
This generalization corresponds to a variation of strictly periodic boundary conditions also known as
\emph{boundary twists} \cite{deVega_1984}; which is intimately associated with automorphisms of the Yang-Baxter algebra.
In this work we shall also employ conventions already introduced in \cite{Galleas_2016c}.

\subsection{Yang-Baxter algebra} Let $\V_i$ with index $i \in \Z_{\geq 0}$ denote a complex vector space and let $\mathcal{L}_i \in \text{End}( \mathbb{V}_i )$ be a matrix with 
non-commutative entries. We then refer to the relation
\[ \label{yba}
\mathcal{R}_{i j} (x - y) \; \mathcal{L}_i (x)  \mathcal{L}_j (y) = \mathcal{L}_j (y) \mathcal{L}_i (x) \mathcal{R}_{i j} (x - y) \quad \in \text{End} (\V_i \otimes \V_j)
\]
as Yang-Baxter algebra and use $\AR$ to denote \eqref{yba} associated with a given operator $\mathcal{R}_{ij} \colon \C \to \text{End} (\V_i \otimes \V_j)$.
Representations of $\AR$ then consist of pairs $(\V_{\mathcal{Q}}, \mathcal{L})$ where $\V_{\mathcal{Q}}$ is a diagonalizable module and the entries of
$\mathcal{L}$ are meromorphic functions on $\C$ with values in $\text{End}( \V_{\mathcal{Q}})$.

\subsection{$\AR$-automorphisms and boundary twists} \label{sec:auto} 
Let $\Gamma_i \in \text{End} (\V_i)$ be an invertible matrix satisfying $\left[ \mathcal{R}_{i j} (x) , \; \Gamma_i \; \Gamma_j  \right] = 0$.
Thus one can readily show the map $\mathcal{L}_i (x) \mapsto \Gamma_i \; \mathcal{L}_i (x)$ is an automorphism of $\AR$. Matrices $\Gamma_i$ satisfying
the above described properties can be regarded as deviations of strictly periodic boundary conditions in exactly solvable vertex models \cite{deVega_1984}.
Elements $\Gamma_i$ are also refereed to in the literature as \emph{boundary twists}.

\subsection{Modules over $\AR$} \label{sec:modules}
Let $\V_{\mathcal{Q}}$ be a diagonalizable module and $\mathcal{L}_k \colon \C \to \text{End} (\V_k \otimes \V_{\mathcal{Q}})$ be meromorphic.
Representations of $\AR$ are given for instance by pairs $(\V_{\mathcal{Q}}, \mathcal{L}_0)$ such that $\mathcal{L}_k$ $(k = i,j)$ fulfills \eqref{yba} in 
$\text{End} (\V_i \otimes \V_j \otimes \V_{\mathcal{Q}})$. Pairs $(\V_{\mathcal{Q}}, \mathcal{L}_0)$ constitute $\AR$-modules
and one can readily show that $( \bigotimes_{i=1}^L \V_i , \widetilde{\mathcal{T}}_0)$ with 
\[ \label{mono}
\widetilde{\mathcal{T}}_0 (x) \coloneqq \PROD{1}{j}{L} \mathcal{R}_{0 j} (x - \mu_j) \qquad \quad L \in \Z_{>0} \; ,
\]
is an $\mathscr{A} (\mathcal{R})$-module. In the present paper we shall actually consider $\mathscr{A} (\mathcal{R})$
upon the automorphism described in \Secref{sec:auto}. In this way we shall use instead the $\mathscr{A} (\mathcal{R})$-module
$(\bigotimes_{i=1}^L \V_i, \mathcal{T}_0)$ with $\mathcal{T}_0 \coloneqq \Gamma_0 \widetilde{\mathcal{T}}_0$.

\subsection{Yang-Baxter equation} We further ask $\AR$ to be associative and this requires $\mathcal{R}$ to satisfy 
the relation
\< \label{ybe}
&& \mathcal{R}_{12} (x_1 - x_2) \mathcal{R}_{13} (x_1 - x_3) \mathcal{R}_{23} (x_2 - x_3) = \nonumber \\
&& \qquad \qquad \quad \mathcal{R}_{23} (x_2 - x_3) \mathcal{R}_{13} (x_1 - x_3) \mathcal{R}_{12} (x_1 - x_2)  
\>
in $\text{End} (\V_1 \otimes \V_2 \otimes \V_3)$. Relation \eqref{ybe} is the celebrated Yang-Baxter equation and a large 
literature is devoted to finding its solutions. See for instance \cite{Bazhanov_1984, Jimbo_1986b, Bazhanov_Shadrikov_1987, Ge_Wu_Xue_1991, Galleas_Martins_2006}.
As far as \eqref{yba} is concerned one can regard the entries of $\mathcal{R}$ as structure constants of the algebra $\AR$.
Within the context of classical vertex models of Statistical Mechanics one can also interpret $\mathcal{R}$ as a matrix encoding statistical weights
of allowed configurations of vertices \cite{Baxter_book}.

\subsection{The symmetric six-vertex model} \label{sec:6V}

In order to discuss the mechanism relating the spectrum of vertex models with non-linear differential equations we focus our analysis 
on the symmetric six-vertex model. The latter is a well studied exactly solvable model and this makes it a natural candidate for illustrating
this connection. The $\mathcal{R}$-matrix associated with the symmetric six-vertex model also intertwines tensor products of
evaluation modules of the $\mathcal{U}_q [\widehat{\mathfrak{gl}_2} ]$ quantum affine algebra. In that case we consider 
$\V_i = \V \cong \C^2$ and let
\[
e_1 \coloneqq \begin{pmatrix} 1 \\ 0 \end{pmatrix} \quad \mbox{and} \quad e_2 \coloneqq \begin{pmatrix} 0 \\ 1 \end{pmatrix}
\]
be standard basis vectors in $\C^2$. Next we write $\mathcal{R} \colon \C \to \text{End} (\V \otimes \V)$ explicitly as
\<
\label{rmat}
\mathcal{R} (x) = \begin{pmatrix} a(x) & 0 & 0 & 0 \\ 0 & b(x) & c(x) & 0 \\ 0 & c(x) & b(x) & 0 \\ 0 & 0 & 0 & a(x) \end{pmatrix} 
\>
with respect to the ordered basis $\{ e_1 \otimes e_1, e_1 \otimes e_2 , e_2 \otimes e_1 , e_2 \otimes e_2 \}$.
The non-null entries of \eqref{rmat} explicitly reads $a(x) \coloneqq \sinh{(x + \gamma)}$, $b(x) \coloneqq \sinh{(x)}$ and 
$c(x) \coloneqq \sinh{(\gamma)}$ with $x$ and $\gamma$ denoting complex parameters. The latter parameters are respectively refereed to as \emph{spectral} 
and \emph{anisotropy} parameters. 
Also, throughout this work we fix $\gamma$ and write $q \coloneqq e^{\gamma}$ in order to identify \eqref{rmat} as a
$\mathcal{U}_q [\widehat{\mathfrak{gl}_2} ]$-intertwiner.

As here $\V_i = \V \cong \C^2$  we can conveniently write 
\[ \label{ABCD}
\mathcal{L} (x) \eqqcolon \begin{pmatrix}  \mathcal{A}(x) & \mathcal{B}(x) \\ \mathcal{C}(x) & \mathcal{D}(x) \end{pmatrix} \; .
\]
Then we find a representation map $\pi \colon \AR \to \text{End}(\V_{\mathcal{Q}})$ through the identification of 
$\mathcal{L}$ with $\widetilde{\mathcal{T}}_0$ (or $\mathcal{T}_0$) previously defined in \Secref{sec:modules}. 

In order to consider $\mathcal{T}_0$ one first needs to investigate the automorphism described in \Secref{sec:auto} for the particular
$\mathcal{R}$-matrix \eqref{rmat}. This problem was studied in \cite{deVega_1984} and two distinct kinds of matrices matrices $\Gamma$ have
been found. One is diagonal while the second matrix is purely off-diagonal. 
Here we restrict our attention to the diagonal case $\Gamma = \text{diag}(\phi_1 , \phi_2)$ with $\phi_1 , \phi_2 \in \C^{\times}$.

\subsection{Highest-weight module} \label{sec:hwm}
We proceed with the construction of highest-weight modules and for that we first need to define \emph{singular vectors} in the
$\mathscr{A} (\mathcal{R})$-module $( \V_{\mathcal{Q}} , \mathcal{L} )$. 
Such vectors are then defined as non-zero elements $v_0 \in \V_{\mathcal{Q}}$ satisfying the condition $\mathcal{C} (x) v_0 = 0$ for all
$x \in \C$. The $\mathcal{R}$-matrix \eqref{rmat} has an underlying $\mathfrak{gl}_2$ algebra structure and here we shall
use $\mathfrak{h}$ to denote the corresponding Cartan subalgebra.
In this way we regard $\V_{\mathcal{Q}}$ as a diagonalizable $\mathfrak{h}$-module 
and say an element $v \in \V_{\mathcal{Q}}$ has $\mathfrak{h}$-weight $h$ if $H v = h v$ for all $H \in \mathfrak{h}$. 
As for $\mathfrak{gl}_2$, $\mathfrak{h}$ is one-dimensional, and we simply take $H \coloneqq E_{11} - E_{22}$ with matrix
units $E_{ij}$ defined through the action $E_{ij} (e_k) \coloneqq \delta_{j k} e_i$. 

Next we assign the weight $(h, \lambda_{\mathcal{A}}(x), \lambda_{\mathcal{D}}(x) )$ to an element $v \in \V_{\mathcal{Q}}$ if $v$ has $\mathfrak{h}$-weight 
$h$, $\mathcal{A}(x) v = \lambda_{\mathcal{A}}(x) v$ and $\mathcal{D}(x) v = \lambda_{\mathcal{D}}(x) v$. Given the above definitions, a 
highest-weight module is constituted of singular vectors $v_0 \in \V_{\mathcal{Q}}$ having weight $(h, \lambda_{\mathcal{A}}(x), \lambda_{\mathcal{D}}(x) )$.
Then, considering $\V_{\mathcal{Q}} = \V^{\otimes L}$, one can use \eqref{rmat}, \eqref{mono} and \eqref{ABCD} to show that $\ket{0} \coloneqq (e_1)^{\otimes L}$
is a highest-weight vector with $\mathfrak{h}$-weight $L$,
\[ \label{lambda}
\lambda_{\mathcal{A}} (x) \coloneqq \prod_{j=1}^L a(x - \mu_j) \qquad \text{and} \qquad \lambda_{\mathcal{D}} (x) \coloneqq \prod_{j=1}^L b(x - \mu_j)  \; .
\]

\subsection{Transfer matrix} \label{sec:TMAT}
A two-dimensional vertex model can be regarded as an \emph{edge-colored} graph $\mathcal{G}$ embedded in a two-dimensional 
lattice such that each vertex in $\mathcal{G}$ has degree four or one. In particular, let $g_i$ be a subgraph of $\mathcal{G}$ 
constituted of a degree four vertex, its four adjacent vertices (also having degree four each) and their four connecting edges.
Also, let $h_i$ denote a subgraph composed of two adjacent vertices (one having degree one) and their one connecting edge. In this way
we build our vertex model on a graph $\mathcal{G} = \mathcal{G}_{bulk} \cup \mathcal{G}_{boundary}$ such that $\mathcal{G}_{bulk} \supseteq g_i$ and
$\mathcal{G}_{boundary} \supseteq h_i$. 

Next we would like to associate a \emph{partition function} to the graph $\mathcal{G}$. For that we need to assign statistical weights to the
bulk subgraphs $g_i$. Statistical weights for $h_i$ are usually described by certain \emph{boundary vectors}. 
The vertex model's bulk partition function is then given by the product of all weights of subgraphs in $\mathcal{G}_{bulk}$ summed over all
possible \emph{edge-coloring}. If $\mathcal{G}_{boundary}$ is non-empty we also need to include contributions from the boundary
weights in order to having the model's partition function fully defined.

The choice of lattice embedding $\mathcal{G}$ also plays an important role when defining a vertex model. 
For instance, some choices even allows one to completely characterize the model's partition function in terms of linear functionals 
acting on given vectors spaces. Here we shall consider a cylindrical embedding for the symmetric six-vertex model. In that case one can write the
model's bulk partition function as a trace functional on the $\AR$-module $(\bigotimes_{i=1}^L \V_i, \mathcal{T}_0)$
defined in \Secref{sec:6V}.
More precisely, we shall embed $\mathcal{G}$ in the cylinder $\mathscr{C}_L \coloneqq \Z_{>0} \times \Z_{>0} / \{ (i,j+L) \sim (i,j)\}$
in such a way that the relevant algebraic object is the transfer matrix $\mathrm{T} \coloneqq (\mathrm{tr} \otimes \mathrm{Id}) \mathcal{T}_0$. 
As for our conventions, here we are using $\mathrm{Id}$ to denote the identity in $\bigotimes_{i=1}^L \V_i$. 
In this way, and keeping in mind \eqref{ABCD}, one can conveniently write
\[ \label{tmat}
\mathrm{T}(x) = \phi_1 \; \mathcal{A}(x) + \phi_2 \; \mathcal{D}(x) \qquad \in \text{End} ((\C^2)^{\otimes L}) \; .
\]
So far we have ignored the contributions of $\mathcal{G}_{boundary}$ and that is justifiable if we further fold $\mathscr{C}_L$ into
the torus $\mathscr{T}_{M L} \coloneqq \mathscr{C}_L / \{ (i+M,j) \sim (i,j) \}$. In that case our partition function reads
$Z = \mathrm{tr}^{\otimes L} \; ( \mathrm{T}(x_1) \mathrm{T}(x_2) \dots \mathrm{T}(x_M))$ and the problem of computing $Z$ can be formulated as the 
eigenvalue problem of the transfer matrix \eqref{tmat}.

\section{Spectral problem and functional equations} \label{sec:TSPEC}

In \Secref{sec:TMAT} we have described the partition function of the toroidal six-vertex model through the action of a linear functional on 
products of transfer matrices. In particular, our linear functional takes the form of a trace and such formulation maps the problem of evaluating
the model's partition function to an eigenvalue problem. It is worth remarking that this approach goes back to Kramers and Wannier works on the 
two-dimensional Ising model \cite{Kramers_1941a, Kramers_1941b}. 
The eigenvalue problem of the transfer matrix \eqref{tmat} has been tackled through Bethe ansatz in several formulations, see for instance 
\cite{Lieb_1967, Takh_Faddeev_1979, Reshet_1987, Sklyanin_1985b}, and here we propose an alternative way of dealing with that same problem.
Our method can be regarded as an extension of the AF approach previously put forward in \cite{Galleas_2015} for the symmetric 
six-vertex model and, in what follows, we shall describe the derivation of linear functional equations encoding the eigenvalue problem of the 
transfer matrix \eqref{tmat}.

\subsection{Auxiliary linear problem} \label{sec:ALP}

The mechanism we shall describe here has a direct counterpart in the Classical Inverse Scattering Method (CISM). 
In fact, this analogy will assist us through our analysis of the transfer matrix spectral problem. In order to make clearer statements,
let us first elaborate on the procedure usually employed within the CISM. The latter aims to produce exact solutions of evolution problems described by (non-linear) differential equations which can
be expressed as compatibility conditions between certain linear problems.
For instance, in the original proposal of the CISM by Gardner, Greene, Kruskal and Miura \cite{inverse_scattering_1967},
the authors have shown that the KdV equation can be reformulated as the compatibility condition between
two linear differential equations -- one of them being the Schr\"odinger equation. 
More precisely, such embedding uses solutions of the KdV equation as the potential function entering the linear Schr\"odinger equation.
In more general settings this type of auxiliary linear problem gives rise to the so called \emph{Lax pair} \cite{Lax_1968}.

In the present paper we shall investigate the use of the AF method as a source of auxiliary linear problems. A schematic description of the
announced analogy between the CISM and the AF approach can be found in \Figref{fig:analogy}. 
Interestingly, at the end of the day one naturally finds within our approach non-linear differential equations describing eigenvalues of 
the transfer matrix \eqref{tmat}. Such non-linear equations also emerge as the compatibility condition of our proposed auxiliary linear problem.

\begin{prop}[Auxiliary Linear Problem] \label{aux}
Let $\mathfrak{S}_{n+1}$ denote the symmetric group of degree $n+1$ on $\{ x_0 , x_1 , \dots , x_n  \}$ and let 
$\pi_{i,j} \in \mathfrak{S}_{n+1}$ be a $2$-cycle acting as permutation of variables $x_i$ and $x_j$. In addition to that, let
$\mathfrak{S}_{n} \subset \mathfrak{S}_{n+1}$ act on  $\{ x_1 , x_2 , \dots , x_n  \}$ and write 
$\mathcal{F}_n \in \C \llbracket x_1^{\pm 1} , x_2^{\pm 1} , \dots , x_n^{\pm 1} \rrbracket^{\mathfrak{S}_n}$
for a symmetric function on $\C^n$ satisfying the linear equation
\[ \label{MF}
\sum_{i=0}^n \mathrm{M}_i \; \mathcal{F}_n (x_0, x_1 , \dots , \widehat{x_i} , \dots , x_n) = 0
\]
for given coefficients $\mathrm{M}_i = \mathrm{M}_i (x_0, x_1 , \dots , x_n) \in \C \llbracket x_0^{\pm 1} , x_1^{\pm 1} , \dots , x_n^{\pm 1} \rrbracket$.
Then \eqref{MF} naturally extends to the system of equations
\[ \label{MFF}
\sum_{i=0}^n \mathrm{M}_{i,j} \; \mathcal{F}_n (x_0, x_1 , \dots , \widehat{x_i} , \dots , x_n) = 0 \qquad \qquad j \in \{ 0,1, \dots , n \}
\]
with coefficients
\< \label{mij}
\mathrm{M}_{i,j} = \begin{cases}
\pi_{0, j} \mathrm{M}_j \,\qquad i = 0 \cr
\pi_{0, j} \mathrm{M}_0 \qquad i = j \cr
\pi_{0, j} \mathrm{M}_i \qquad \text{otherwise} 
\end{cases} \; .
\>
\end{prop}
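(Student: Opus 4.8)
The plan is to obtain the whole system \eqref{MFF} from the single relation \eqref{MF} by exploiting the fact that \eqref{MF} is an \emph{identity} in the variables $x_0, x_1, \dots, x_n$, valid throughout the ring $\C \llbracket x_0^{\pm 1}, \dots, x_n^{\pm 1} \rrbracket$ into which both the coefficients $\mathrm{M}_i$ and the evaluations of $\mathcal{F}_n$ are placed. Every permutation $\sigma \in \mathfrak{S}_{n+1}$ of the variables acts as a ring automorphism of this ring, so applying $\sigma$ to both sides of \eqref{MF} again yields zero. I would simply apply the transposition $\pi_{0,j}$ to \eqref{MF}, for each fixed $j \in \{0,1,\dots,n\}$, and show that the result is precisely the $j$-th equation of \eqref{MFF}.

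The key step is to track the action of $\pi_{0,j}$ on each summand; since $\pi_{0,0}$ is the identity we may assume $j \geq 1$. In the $i$-th term, the coefficient $\mathrm{M}_i$ becomes $\pi_{0,j}\mathrm{M}_i$ by definition, while the factor $\mathcal{F}_n(x_0, \dots, \widehat{x_i}, \dots, x_n)$ has argument set $S_i \coloneqq \{x_0, \dots, x_n\}\setminus\{x_i\}$ and is sent to $\mathcal{F}_n$ evaluated on $\pi_{0,j}(S_i)$. Here the $\mathfrak{S}_n$-symmetry of $\mathcal{F}_n$ is essential: the evaluation depends only on the \emph{set} of inserted variables, not on their order, so it suffices to compute the image set $\pi_{0,j}(S_i)$. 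There are exactly three cases. If $i=0$ then $x_0 \notin S_0$ but $x_j \in S_0$, and swapping $x_0 \leftrightarrow x_j$ turns $S_0$ into $S_j$; if $i=j$ then symmetrically $S_j \mapsto S_0$; and if $i \neq 0,j$ then both $x_0$ and $x_j$ lie in $S_i$, so $\pi_{0,j}$ merely permutes elements within $S_i$ and fixes it setwise.

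Reassembling, the transposed relation reads $\pi_{0,j}\mathrm{M}_0 \cdot \mathcal{F}_n(\text{on } S_j) + \pi_{0,j}\mathrm{M}_j \cdot \mathcal{F}_n(\text{on } S_0) + \sum_{i\neq 0,j} \pi_{0,j}\mathrm{M}_i \cdot \mathcal{F}_n(\text{on } S_i) = 0$. Relabelling so that the coefficient multiplying $\mathcal{F}_n(x_0,\dots,\widehat{x_i},\dots,x_n)$ is called $\mathrm{M}_{i,j}$, I read off $\mathrm{M}_{0,j} = \pi_{0,j}\mathrm{M}_j$, $\mathrm{M}_{j,j} = \pi_{0,j}\mathrm{M}_0$, and $\mathrm{M}_{i,j} = \pi_{0,j}\mathrm{M}_i$ otherwise, which is exactly \eqref{mij}; taking $j=0$ recovers \eqref{MF} since $\pi_{0,0}$ is the identity, confirming consistency. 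The only real obstacle is the bookkeeping in this last reindexing: the $i=0$ and $i=j$ terms exchange their $\mathcal{F}_n$-factors under $\pi_{0,j}$, so one must be careful to attribute the transformed coefficients to the correct omitted-variable evaluation, whereas all remaining terms are setwise invariant and pass through unchanged.
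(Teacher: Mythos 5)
Your proof is correct and follows exactly the same route as the paper, whose proof is the one-line remark that one applies $\pi_{0,j}$ to \eqref{MF} and uses the symmetry of $\mathcal{F}_n$; you have simply spelled out the case analysis ($i=0$, $i=j$, and the setwise-invariant remaining terms) that the paper leaves implicit. No gaps.
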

\begin{proof}
Straightforward application of $\pi_{0,j}$ on \eqref{MF} taking into account that $\mathcal{F}_n$ is a symmetric function.
\end{proof}
\begin{rema}
For particular coefficients $\mathrm{M}_i$, there is still the possibility that not all equations in \eqref{MFF} are linearly independent . 
\end{rema}

\begin{lem}[Compatibility condition] \label{det}
The system of linear  equations \eqref{MFF} is compatible iff
\[ \label{CC}
\mathrm{det} \left( \mathrm{M}_{i,j}  \right)_{0 \leq i, j \leq n} = 0 \; .
\]
\end{lem}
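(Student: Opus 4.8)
The plan is to recognise \eqref{MFF} as a \emph{square} homogeneous linear system and then apply the classical determinant criterion for the existence of a nontrivial solution. First I would single out the $n+1$ specializations $F_i \coloneqq \mathcal{F}_n(x_0, x_1, \dots, \widehat{x_i}, \dots, x_n)$ for $i \in \{0,1,\dots,n\}$ and treat them as the unknowns of the system. With this notation \eqref{MFF} reads $\sum_{i=0}^n \mathrm{M}_{i,j}\, F_i = 0$ for every $j \in \{0,1,\dots,n\}$, which is exactly the matrix equation $\mathrm{M}^{\mathrm{T}} \vec{F} = 0$, where $\mathrm{M} = (\mathrm{M}_{i,j})_{0 \leq i,j \leq n}$ is the coefficient matrix and $\vec{F} = (F_0, \dots, F_n)^{\mathrm{T}}$. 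Since the unknown index $i$ and the equation index $j$ both range over $\{0,\dots,n\}$, the matrix $\mathrm{M}$ is square of size $(n+1)\times(n+1)$, so the determinant in \eqref{CC} is meaningful.

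The substance is then the standard fact that a square homogeneous system over a field admits a nontrivial solution iff the determinant of its coefficient matrix vanishes. To invoke it I would fix the field in which the linear algebra is carried out: the entries $\mathrm{M}_{i,j}$ and the unknowns $F_i$ live in $\C\llbracket x_0^{\pm 1}, \dots, x_n^{\pm 1}\rrbracket$, an integral domain, so I pass to its fraction field $\mathbb{K}$ (equivalently, the field of meromorphic functions) and work there. Over $\mathbb{K}$ one has the chain $\ker \mathrm{M}^{\mathrm{T}} \neq 0 \iff \det \mathrm{M}^{\mathrm{T}} = 0 \iff \det \mathrm{M} = 0$, the last step using $\det \mathrm{M}^{\mathrm{T}} = \det \mathrm{M}$.

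The two directions then fall out quickly. For the forward implication I would use that Proposition~\ref{aux} already certifies that the genuine function $\mathcal{F}_n$ solves \emph{every} equation of \eqref{MFF}; since $\mathcal{F}_n$ is not identically zero, at least one of its hatted restrictions $F_i$ is nonzero, so $\vec{F}$ is a nontrivial element of $\ker \mathrm{M}^{\mathrm{T}}$ and hence $\det \mathrm{M} = 0$. For the converse, vanishing of $\det \mathrm{M}$ makes $\mathrm{M}^{\mathrm{T}}$ singular over $\mathbb{K}$, so the system possesses a nonzero solution and is therefore compatible.

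The only point requiring genuine care — what I would flag as the main obstacle — is the ring-theoretic setting: the coefficients are formal Laurent series, not elements of a field, so the determinant criterion cannot be quoted verbatim. I would resolve this by performing all computations in the fraction field $\mathbb{K}$ and noting that, the base ring being an integral domain, $\det \mathrm{M}$ is a well-defined element whose vanishing is unambiguous. A secondary remark worth making is that the $F_i$ are not truly independent unknowns — they are the $n+1$ hatted restrictions of one symmetric function $\mathcal{F}_n$ — but this interdependence is invisible to the existence-of-nontrivial-solution criterion, which only reads off the rank of $\mathrm{M}$; it merely explains why the abstract solution produced in the converse direction need not itself arise from a symmetric function.
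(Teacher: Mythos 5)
Your proof is correct and follows essentially the same route as the paper: both recast \eqref{MFF} as a square homogeneous linear system in the $n+1$ unknowns $\mathcal{F}_n(x_0,\dots,\widehat{x_i},\dots,x_n)$ and invoke the standard criterion that such a system has a non-trivial solution iff the coefficient determinant vanishes. Your added care about the transpose (immaterial since $\det \mathrm{M}^{\mathrm{T}} = \det \mathrm{M}$) and about passing to the fraction field of the coefficient ring only makes explicit what the paper's one-line proof leaves implicit.
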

\begin{proof}
Equations \eqref{MFF} can be written in matricial form as $\mathcal{M} \; \vec{\mathcal{F}} = 0$, where $\mathcal{M}$ is a matrix of 
dimension $(n+1) \times (n+1)$ with entries $\mathrm{M}_{i,j}$ and $\vec{\mathcal{F}}$ is a column-vector having $\mathcal{F}_n (x_0, x_1 , \dots , \widehat{x_i} , \dots , x_n)$
at its $i$-th position. Thus \eqref{MFF} has non-trivial solution iff $\mathrm{det} (\mathcal{M}) = 0$.
\end{proof}

\subsubsection{Non-linear functional equations} \label{nlfe}
Suppose explicit formulae for the coefficients $\mathrm{M}_{i}$ are given in terms
of \emph{dependent variables} $\Lambda (x_i)$. In this way the compatibility condition \eqref{CC} gives rise to a non-linear functional
equation for the function $\Lambda$. Hence, the functional equation \eqref{MF} can be regarded as a linear embedding of the non-linear problem
describing such function. This mechanism is analogous to the one employed within the CISM and in the present work we shall present explicit
realizations of \eqref{MF}. In particular, our coefficients $\mathrm{M}_{i}$ will encode eigenvalues of the transfer matrix \eqref{tmat}.

\begin{figure} \centering
\scalebox{1}{
\begin{tikzpicture}[>=stealth]
\path (-0.4,-1) node[rectangle,fill=gray!20!white,draw,align=center] (p1) {\textsc{Classical Inverse Scattering} \\ \textsc{Method}}
      (-0.4,-2.5) node[rectangle,rounded corners=7pt,fill=gray!20!white,draw,align=center] (p2) {Auxiliary Linear \\ Problems}
      (-0.4,-4.5) node[rectangle,rounded corners=7pt,fill=gray!20!white,draw,align=center] (p3) {Lax/Zero-curvature \\ equation}
      (-0.4,-6.5) node[rectangle,rounded corners=7pt,fill=gray!20!white,draw,align=center] (p4) {Non-Linear Differential \\ Equation};
\begin{scope}[xshift = 7.2cm]
\path (0,-1) node[rectangle,fill=gray!20!white,draw,align=center] (p5) {\textsc{Algebraic-Functional} \\ \textsc{Method}}
      (0,-2.5) node[rectangle,rounded corners=7pt,fill=gray!20!white,draw,align=center] (p6) {Proposition \ref{aux}}
      (0,-4.5) node[rectangle,rounded corners=7pt,fill=gray!20!white,draw,align=center] (p7) {Determinantal eq. \\ Lemma \ref{det}}
      (0,-6.5) node[rectangle,rounded corners=7pt,fill=gray!20!white,draw,align=center] (p8) {Non-Linear Functional \\ Equation};
\end{scope}
\draw [->, thick]  (p2.south) -- (p3.north);
\draw [->, thick]  (p3.south) -- (p4.north);

\draw [->, thick]  (p6.south) -- (p7.north);
\draw [->, thick]  (p7.south) -- (p8.north);

\draw [<->, shorten <=0.2cm, shorten >=0.2cm, thick] (p2.east) -- (p6.west);
\draw [<->, shorten <=0.2cm, shorten >=0.2cm, thick] (p3.east) -- (p7.west);
\draw [<->, shorten <=0.2cm, shorten >=0.2cm, thick] (p4.east) -- (p8.west);
\end{tikzpicture}}
\caption{Analogy between the classical inverse scattering method and the algebraic-functional framework.}
\label{fig:analogy}
\end{figure}
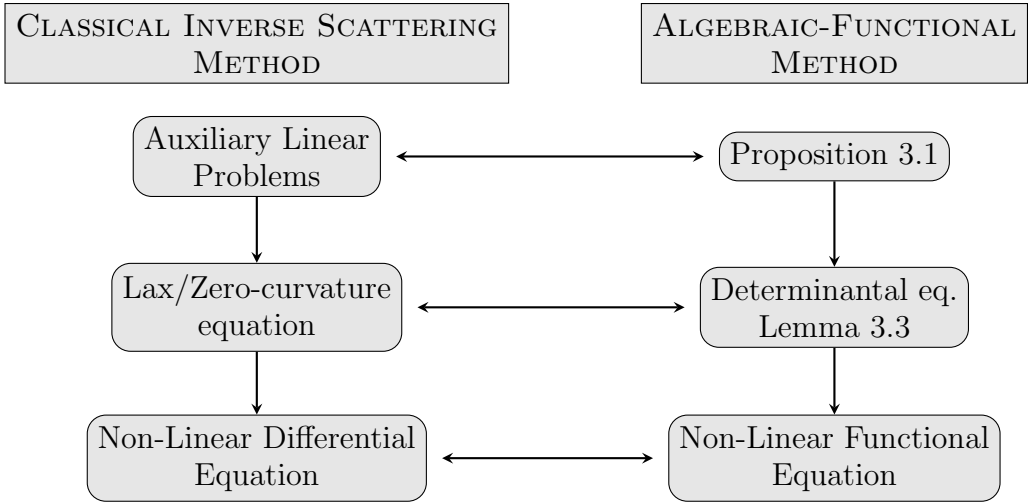

\subsection{Algebraic-Functional method} \label{sec:AFM}
The algebra $\mathscr{A} (\mathcal{R})$ is one of the corner stones of the algebraic Bethe ansatz method and the AF framework 
can be viewed as an alternative way of exploiting it. As for the symmetric six-vertex model, it is an algebra over $\C$ generated by elements $\mathcal{A}$, $\mathcal{B}$, $\mathcal{C}$ and 
$\mathcal{D}$ defined in \eqref{ABCD}. Moreover, here we shall regard \eqref{yba} as a matrix algebra with elements in 
$\C \llbracket x, x^{-1} \rrbracket \otimes \text{End} (\V^{\otimes L})$ which we also refer to as $\mathscr{A}_2 (\mathcal{R})$.
In this way, we write $\mathscr{M}_n \coloneqq \{ \mathcal{A} , \mathcal{B} , \mathcal{C} , \mathcal{D} \} (x_n)$ such that the repeated use of
$\mathscr{A} (\mathcal{R})$ produces relations in $\mathscr{A}_n (\mathcal{R}) \cong \mathscr{A}_{n-1} (\mathcal{R}) \otimes \mathscr{M}_n / \mathscr{A}_2 (\mathcal{R})$
for $n>2$. Next we look for a suitable linear functional $\Phi \colon \mathscr{A}_n (\mathcal{R}) \to \C [x_1^{\pm 1} , x_2^{\pm 1} , \dots , x_n^{\pm 1}] $
allowing us to use $\mathscr{A}_n (\mathcal{R})$ as a source of functional equations. Here we are interested in describing the spectrum of 
the transfer matrix \eqref{tmat}, and the direct inspection of relations in $\mathscr{A}_n (\mathcal{R})$ suggests looking for a linear functional
$\Phi$ satisfying the property
\[ \label{MAP}
\Phi (J_n) = \omega_J (x_1, x_2, \dots , x_n) \; \Phi (J_{n-1}) 
\]
for fixed meromorphic functions $\omega_J$ associated with particular elements $J_n \subseteq \mathscr{A}_n (\mathcal{R})$.

\subsubsection{Subalgebra $\mathscr{S}_{\mathcal{A} , \mathcal{B}}$}

In order to apply the AF method along the lines discussed in \Secref{sec:ALP} and \Secref{sec:AFM}, it is convenient to first delimit
subalgebras of $\AR$ which will be relevant for studying the spectrum of the transfer matrix \eqref{tmat}. Although $\AR$ contains 
four generators only two of them, namely $\mathcal{A}$ and $\mathcal{D}$, appears in the construction of the transfer matrix \eqref{tmat}.
Hence, we first consider the subalgebra $\mathscr{S}_{\mathcal{A} , \mathcal{B}} \subset \mathscr{A}_2 (\mathcal{R})$ formed by the following relations:
\< \label{AB}
\mathrm{F} (x_1) \mathrm{F} (x_2) &=& \mathrm{F} (x_2) \mathrm{F} (x_1) \qquad \mathrm{F} \in \{ \mathcal{A} , \mathcal{B}  \} \nonumber \\
\mathcal{A} (x_1) \mathcal{B} (x_2) &=& \frac{a(x_2 - x_1)}{b(x_2 - x_1)} \mathcal{B} (x_2) \mathcal{A} (x_1) + \frac{c(x_1 - x_2)}{b(x_1 - x_2)} \mathcal{B} (x_1) \mathcal{A} (x_2) \nonumber \\
\mathcal{B} (x_1) \mathcal{A} (x_2) &=& \frac{a(x_2 - x_1)}{b(x_2 - x_1)} \mathcal{A} (x_2) \mathcal{B} (x_1) + \frac{c(x_1 - x_2)}{b(x_1 - x_2)} \mathcal{A} (x_1) \mathcal{B} (x_2) \; .
\>
\begin{rema}
$\mathscr{S}_{\mathcal{A} , \mathcal{B}}$ comprises four relations out of the sixteen contained in $\mathscr{A}_2 (\mathcal{R})$.
\end{rema}

For latter convenience we also write $X \coloneqq \{ x_1 , x_2 , \dots , x_n \}$. Here $n \in \Z_{\geq 0}$ and we fix $L \in \Z_{\geq 1}$ such
that $n \leq L$. Also, let us introduce the shorthand notation
\[
X_{\alpha_1 , \alpha_2 , \dots , \alpha_l}^{\beta_1 , \beta_2 , \dots , \beta_m} \coloneqq X \cup \{ x_{\beta_1}, x_{\beta_2} , \dots , x_{\beta_m}  \} \backslash \{ x_{\alpha_1}, x_{\alpha_2} , \dots , x_{\alpha_l}  \}    \qquad l, m \in \Z_{\geq 0} \; .
\]
In this way we find the relation
\< \label{ABn}
\mathcal{A} (x_0) \; \mathcal{Y}(X) &=& \prod_{j=1}^n \frac{a(x_j - x_0)}{b(x_j - x_0)} \mathcal{Y}(X) \; \mathcal{A} (x_0) \nonumber \\
&& \quad + \; \sum_{i=1}^n \frac{c(x_0 - x_i)}{b(x_0 - x_i)} \prod_{\substack{j=1 \\ j \neq i}}^n \frac{a(x_j - x_i)}{b(x_j - x_i)} \mathcal{Y}(X_i^0) \; \mathcal{A} (x_i)
\>
in $\mathscr{A}_{n+1} (\mathcal{R})$ through the iteration of $\mathscr{S}_{\mathcal{A} , \mathcal{B}}$. 
The \emph{word} $\mathcal{Y}(X)$ in \eqref{ABn} is defined as
\[ \label{YX}
\mathcal{Y}(X) \coloneqq \prod_{x \in X} \mathcal{B} (x) \; .
\]
\begin{rema}
Although $\mathscr{A}_n (\mathcal{R})$ is non-abelian, the subalgebra containing only $\mathcal{B}$'s is abelian. Hence, we do not need to worry
about ordering in \eqref{YX}.
\end{rema}

\subsubsection{Subalgebra $\mathscr{S}_{\mathcal{D} , \mathcal{B}}$}

We then proceed by including the operator $\mathcal{D}$ in our analysis as it is also a constituent of the transfer matrix \eqref{tmat}. 
For that we additionally single out the subalgebra $\mathscr{S}_{\mathcal{D} , \mathcal{B}} \subset \mathscr{A}_2 (\mathcal{R})$
generated by the following relations:
\< \label{DB}
\bar{\mathrm{F}} (x_1) \bar{\mathrm{F}} (x_2) &=& \bar{\mathrm{F}} (x_2) \bar{\mathrm{F}} (x_1) \qquad \bar{\mathrm{F}} \in \{ \mathcal{D} , \mathcal{B}  \} \nonumber \\
\mathcal{D} (x_1) \mathcal{B} (x_2) &=& \frac{a(x_1 - x_2)}{b(x_1 - x_2)} \mathcal{B} (x_2) \mathcal{D} (x_1) + \frac{c(x_2 - x_1)}{b(x_2 - x_1)} \mathcal{B} (x_1) \mathcal{D} (x_2) \nonumber \\
\mathcal{B} (x_1) \mathcal{D} (x_2) &=& \frac{a(x_1 - x_2)}{b(x_1 - x_2)} \mathcal{D} (x_2) \mathcal{B} (x_1) + \frac{c(x_2 - x_1)}{b(x_2 - x_1)} \mathcal{D} (x_1) \mathcal{B} (x_2) \; .
\>

\begin{rema}
The subalgebra $\mathscr{S}_{\mathcal{D} , \mathcal{B}}$ also encloses four relations.
\end{rema}

The repeated iteration of $\mathscr{S}_{\mathcal{D} , \mathcal{B}}$ leaves us with the following relation in $\mathscr{A}_{n+1} (\mathcal{R})$,
\< \label{DBn}
\mathcal{D} (x_0) \; \mathcal{Y}(X) &=& \prod_{j=1}^n \frac{a(x_0 - x_j)}{b(x_0 - x_j)} \mathcal{Y}(X) \; \mathcal{D} (x_0) \nonumber \\
&& \quad + \; \sum_{i=1}^n \frac{c(x_i - x_0)}{b(x_i - x_0)} \prod_{\substack{j=1 \\ j \neq i}}^n \frac{a(x_i - x_j)}{b(x_i - x_j)} \mathcal{Y}(X_i^0) \; \mathcal{D} (x_i) \; .
\>

\subsubsection{The functional $\Phi$}

As far as relations \eqref{ABn} and \eqref{DBn} are concerned, there is not much difference in what we have done so far and the
methodology used for the Algebraic Bethe Ansatz. The difference between both approaches starts with the introduction of the functional
$\Phi \colon \mathscr{A}_{n+1} (\mathcal{R}) \to \C [x_0^{\pm 1} , x_1^{\pm 1} , \dots , x_n^{\pm 1}] $ in order to study the eigenvalue
problem of the operator \eqref{tmat}.  
For that we first notice that the transfer matrix $\mathrm{T} (x_0)$, as defined in \eqref{tmat}, lives in $\text{span}(\mathscr{M}_0)$. 
Next we remark that, although $\mathscr{A}_2 (\mathcal{R})$ has no subalgebra generated solely by $\{ \mathrm{T} , \mathcal{B} \}$,
one can still combine \eqref{ABn} and \eqref{DBn} in order to find appropriate relations in $\mathscr{A}_{n+1} (\mathcal{R})$.
Therefore, we multiply \eqref{ABn} by $\phi_1$ and add it to \eqref{DBn} multiplied by $\phi_2$ to obtain
\< \label{TBn}
&& \mathrm{T} (x_0) \; \mathcal{Y}(X) = \nonumber \\
&& \mathcal{Y}(X) \left[ \phi_1 \prod_{j=1}^n \frac{a(x_j - x_0)}{b(x_j - x_0)}  \; \mathcal{A} (x_0) + \phi_2 \prod_{j=1}^n \frac{a(x_0 - x_j)}{b(x_0 - x_j)} \; \mathcal{D} (x_0) \right] \nonumber \\
&&  + \; \sum_{i=1}^n \frac{c(x_0 - x_i)}{b(x_0 - x_i)} \mathcal{Y}(X_i^0) \left[ \phi_1 \prod_{\substack{j=1 \\ j \neq i}}^n \frac{a(x_j - x_i)}{b(x_j - x_i)}  \; \mathcal{A} (x_i) - \phi_2 \prod_{\substack{j=1 \\ j \neq i}}^n \frac{a(x_i - x_j)}{b(x_i - x_j)} \; \mathcal{D} (x_i) \right] \; . \nonumber \\
\>

\begin{rema}
Relation \eqref{TBn} belongs to $\mathscr{A}_{n+1} (\mathcal{R})$ and its action on the vector $\ket{0}$, as defined in \Secref{sec:hwm}, yields the so
called \emph{off-shell Bethe ansatz relation} within the context of the Algebraic Bethe Ansatz.
\end{rema}

\begin{prop}[Realization of $\Phi$]
Let $\ket{\Psi}$ be an eigenvector of $\mathrm{T}(x_0)$ with eigenvalue $\Lambda (x_0)$. More precisely, assume 
$\ket{\Psi}$ solves the equation $\mathrm{T}(x_0) \ket{\Psi} = \Lambda(x_0) \ket{\Psi}$ and write $\bra{\Psi}$ for its dual vector. Then the map 
\[ \label{PJ}
\Phi (J_{n+1}) \coloneqq \bra{\Psi} J_{n+1} \ket{0}
\]
with $J_{n+1} \subseteq \mathscr{A}_{n+1} (\mathcal{R})$ identified with \eqref{TBn} satisfies property \eqref{MAP}. 
\end{prop}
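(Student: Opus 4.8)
The plan is to read off property \eqref{MAP} directly from the realization \eqref{PJ} together with the eigenvector hypothesis on $\ket{\Psi}$. First I would identify the element $J_{n+1}$ with the left-hand side of \eqref{TBn}, namely the single word $\mathrm{T}(x_0)\,\mathcal{Y}(X)$, so that by the very definition \eqref{PJ} we have $\Phi(J_{n+1}) = \bra{\Psi}\,\mathrm{T}(x_0)\,\mathcal{Y}(X)\,\ket{0}$. The whole argument then reduces to moving the factor $\mathrm{T}(x_0)$ through the covector $\bra{\Psi}$ on the left.

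The central step is therefore to use that $\bra{\Psi}$ is a \emph{left} eigenvector of the transfer matrix carrying the same eigenvalue, i.e.\ $\bra{\Psi}\,\mathrm{T}(x_0) = \Lambda(x_0)\,\bra{\Psi}$. This is the only point requiring genuine input, and it is where I expect the real content to sit. It follows from the fact that the $\mathcal{R}$-matrix \eqref{rmat} is invariant under transposition on $\V\otimes\V$, whence the transfer matrix \eqref{tmat} is a symmetric operator on $(\C^2)^{\otimes L}$; consequently the dual vector $\bra{\Psi}$ of the right eigenvector $\ket{\Psi}$ is automatically a left eigenvector for the same eigenvalue $\Lambda(x_0)$. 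I would state this symmetry explicitly, since without it $\Phi$ would fail to obey \eqref{MAP}. If instead $\bra{\Psi}$ is taken to be the left eigenvector by definition, this step is immediate.

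Granting the left-eigenvector relation, the computation collapses in one line:
\[
\Phi(J_{n+1}) = \bra{\Psi}\,\mathrm{T}(x_0)\,\mathcal{Y}(X)\,\ket{0} = \Lambda(x_0)\,\bra{\Psi}\,\mathcal{Y}(X)\,\ket{0} = \Lambda(x_0)\,\Phi(\mathcal{Y}(X)) \; . \nonumber
\]
Identifying $J_n \coloneqq \mathcal{Y}(X) \in \mathscr{A}_n(\mathcal{R})$, which is the word \eqref{YX} involving only $x_1,\dots,x_n$, together with $\omega_J(x_0,x_1,\dots,x_n) \coloneqq \Lambda(x_0)$, this is precisely the factorization \eqref{MAP} with its indices shifted upward by one. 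I would stress that only the left action of $\bra{\Psi}$ is needed for this proposition: the highest-weight eigenrelations $\mathcal{A}(x_0)\ket{0} = \lambda_{\mathcal{A}}(x_0)\ket{0}$ and $\mathcal{D}(x_0)\ket{0} = \lambda_{\mathcal{D}}(x_0)\ket{0}$ from \Secref{sec:hwm} play no role here; they enter only at the next stage, when the right-hand side of \eqref{TBn} is evaluated on $\ket{0}$ to produce the explicit coefficients $\mathrm{M}_i$ of the linear functional equation \eqref{MF}.

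In summary, the obstacle is conceptual rather than computational: one must ensure that the dual vector $\bra{\Psi}$ genuinely diagonalizes $\mathrm{T}(x_0)$ from the left with eigenvalue $\Lambda(x_0)$. Because the operators $\mathrm{T}(x)$ form a commuting family and are symmetric for the symmetric six-vertex model, this is guaranteed, and the factorization property \eqref{MAP} then follows by the elementary manipulation above.
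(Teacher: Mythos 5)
Your argument follows the same route as the paper's (one-line) proof: pull $\mathrm{T}(x_0)$ through $\bra{\Psi}$ using the dual-eigenvector relation $\bra{\Psi}\mathrm{T}(x_0)=\Lambda(x_0)\bra{\Psi}$, so that $\Phi$ applied to the word $\mathrm{T}(x_0)\mathcal{Y}(X)\in\mathscr{A}_{n+1}(\mathcal{R})$ factors as $\Lambda(x_0)\,\Phi(\mathcal{Y}(X))$ with $\mathcal{Y}(X)\in\mathscr{A}_n(\mathcal{R})$, which is exactly \eqref{MAP}. Two caveats. First, your justification of the left-eigenvector relation via ``$\mathcal{R}$ symmetric $\Rightarrow$ $\mathrm{T}$ symmetric'' does not hold as stated: for the twisted, inhomogeneous transfer matrix \eqref{tmat} one finds $\mathrm{T}(x)^{t}=\mathrm{tr}_0\left(\Gamma_0\,\mathcal{R}_{0L}(x-\mu_L)\cdots\mathcal{R}_{01}(x-\mu_1)\right)$, i.e.\ the site-reversed transfer matrix, which is conjugate to $\mathrm{T}(x)$ but not equal to it in general. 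The clean statement — and what the paper actually uses — is your fallback: since the $\mathrm{T}(x)$ form a commuting, generically diagonalizable family with an $x$-independent eigenbasis, the dual basis vector $\bra{\Psi}$ is by construction a left eigenvector with the same eigenvalue $\Lambda(x_0)$; you should drop the symmetry claim and keep only this. Second, the paper's proof also invokes the highest-weight property of $\ket{0}$, whereas you explicitly defer it to Theorem \ref{6V}. This is a difference of reading: if $J_{n+1}$ ``identified with \eqref{TBn}'' is taken to mean the whole collection of words appearing there (note the subset notation $J_{n+1}\subseteq\mathscr{A}_{n+1}(\mathcal{R})$), then the words $\mathcal{Y}(X_i^0)\,\mathcal{A}(x_i)$ and $\mathcal{Y}(X_i^0)\,\mathcal{D}(x_i)$ must also reduce via \eqref{MAP}, and for those one needs $\mathcal{A}(x)\ket{0}=\lambda_{\mathcal{A}}(x)\ket{0}$ and $\mathcal{D}(x)\ket{0}=\lambda_{\mathcal{D}}(x)\ket{0}$ from \Secref{sec:hwm}. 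Your restriction to the single word $\mathrm{T}(x_0)\mathcal{Y}(X)$ is defensible but covers less than what the paper's proof is implicitly asserting; adding the two highest-weight reductions costs one line and makes the proposition self-contained.
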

\begin{proof}
Recall $\ket{0}$ is a highest-weight vector as described in \Secref{sec:hwm} and use that $\bra{\Psi}$ is a dual eigenvector of $\mathrm{T}(x_0)$.
\end{proof}

\begin{rema}
Due to \eqref{yba} the six-vertex model transfer matrix satisfies $[ \mathrm{T}(x) , \mathrm{T}(y) ] = 0$ for all $x, y \in \C$. Consequently we can 
assume $\ket{\Psi}$ is $x_0$-independent.
\end{rema}

\subsubsection{Linear Functional Relation} 
As anticipated in \Secref{nlfe} here we intend to exhibit a realization of Proposition \ref{aux}. In particular, we shall describe coefficients 
$\mathrm{M}_i$ encoding transfer matrix's eigenvalues $\Lambda$. For that, in addition to the coefficients $\mathrm{M}_i$, we also need to ensure the existence
of a symmetric function $\mathcal{F}_n$ satisfying \eqref{MF}.

\begin{thm} \label{6V}
There exists $\mathcal{F}_n \in \C \llbracket x_1^{\pm 1} , x_2^{\pm 1} , \dots , x_n^{\pm 1} \rrbracket^{\mathfrak{S}_n}$ satisfying \eqref{MF}
with coefficients 
\< \label{MI}
&& \mathrm{M}_i \coloneqq \nonumber \\
&& \begin{cases}
\displaystyle \phi_1 \prod_{j=1}^n \frac{a(x_j - x_0)}{b(x_j - x_0)}  \; \lambda_{\mathcal{A}} (x_0) + \phi_2 \prod_{j=1}^n \frac{a(x_0 - x_j)}{b(x_0 - x_j)} \; \lambda_{\mathcal{D}} (x_0) - \Lambda(x_0) \qquad \qquad \; i = 0 \cr
\displaystyle \frac{c(x_0 - x_i)}{b(x_0 - x_i)} \left[ \phi_1 \prod_{\substack{j=1 \\ j \neq i}}^n \frac{a(x_j - x_i)}{b(x_j - x_i)}  \; \lambda_{\mathcal{A}} (x_i) - \phi_2 \prod_{\substack{j=1 \\ j \neq i}}^n \frac{a(x_i - x_j)}{b(x_i - x_j)} \; \lambda_{\mathcal{D}} (x_i) \right] \qquad \text{otherwise}
\end{cases} \nonumber \\
\>
and it is given by $\mathcal{F}_n (x_1 , x_2 , \dots , x_n) = \bra{\Psi} \mathcal{Y} (X) \ket{0}$.
\end{thm}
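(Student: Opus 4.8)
The plan is to take the stated formula $\mathcal{F}_n(x_1,\dots,x_n) = \bra{\Psi}\mathcal{Y}(X)\ket{0}$ as a \emph{definition} and then verify directly that it has the three required properties: it is symmetric in $x_1,\dots,x_n$, it lies in the stated ring of Laurent series, and it solves \eqref{MF} with the coefficients \eqref{MI}. The essential computational input is already packaged in relation \eqref{TBn}, so the argument reduces to sandwiching that relation between $\bra{\Psi}$ and $\ket{0}$ and reading off the result.

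First I would settle the two structural claims. Since $\mathcal{Y}(X) = \prod_{x\in X}\mathcal{B}(x)$ and the $\mathcal{B}$'s mutually commute (the abelian subalgebra noted after \eqref{YX}), the product $\mathcal{Y}(X)$ is invariant under any permutation of $x_1,\dots,x_n$; hence so is the matrix element, which places $\mathcal{F}_n$ in $\C\llbracket x_1^{\pm1},\dots,x_n^{\pm1}\rrbracket^{\mathfrak{S}_n}$. Membership in the Laurent ring itself follows because the entries of $\mathcal{L}$, and therefore of each $\mathcal{B}(x_i)$, are meromorphic and admit Laurent expansions, so the finite matrix element is again of this type.

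Next I would apply the functional $\Phi(\,\cdot\,)=\bra{\Psi}\,\cdot\,\ket{0}$ to both sides of \eqref{TBn}. On the left, using that $\bra{\Psi}$ is a left eigenvector of $\mathrm{T}(x_0)$ with eigenvalue $\Lambda(x_0)$, one obtains $\Lambda(x_0)\,\mathcal{F}_n(x_1,\dots,x_n)$. On the right, the operators $\mathcal{A}(x_0),\mathcal{D}(x_0)$ in the first bracket and $\mathcal{A}(x_i),\mathcal{D}(x_i)$ in the sum all act to the right on the highest-weight vector $\ket{0}$, where by \Secref{sec:hwm} and \eqref{lambda} they become the scalars $\lambda_{\mathcal{A}}$ and $\lambda_{\mathcal{D}}$. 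Recognising $\bra{\Psi}\mathcal{Y}(X)\ket{0}=\mathcal{F}_n(x_1,\dots,x_n)$ and $\bra{\Psi}\mathcal{Y}(X_i^0)\ket{0}=\mathcal{F}_n(x_0,x_1,\dots,\widehat{x_i},\dots,x_n)$ turns the identity into
\[
\Lambda(x_0)\,\mathcal{F}_n(x_1,\dots,x_n) = \widetilde{\mathrm{M}}_0\,\mathcal{F}_n(x_1,\dots,x_n) + \sum_{i=1}^n \mathrm{M}_i\,\mathcal{F}_n(x_0,x_1,\dots,\widehat{x_i},\dots,x_n),
\]
where $\widetilde{\mathrm{M}}_0$ is the first bracket of \eqref{TBn} with $\mathcal{A},\mathcal{D}$ replaced by $\lambda_{\mathcal{A}},\lambda_{\mathcal{D}}$. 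Moving $\Lambda(x_0)\mathcal{F}_n$ to the right produces exactly the $i=0$ coefficient $\mathrm{M}_0 = \widetilde{\mathrm{M}}_0 - \Lambda(x_0)$ of \eqref{MI}, while comparison of the summand gives the remaining $\mathrm{M}_i$, the minus sign in the $\mathcal{D}$-term being inherited directly from \eqref{TBn}. This is precisely \eqref{MF}.

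The computation is routine once \eqref{TBn} is in hand, so the only genuine points of care are the bookkeeping of the commutation prefactors — all of which are already absorbed into \eqref{TBn} — and the justification that $\bra{\Psi}$ carries the same eigenvalue $\Lambda(x_0)$ as $\ket{\Psi}$. I expect this last point to be the most delicate step: it relies on the commutativity $[\mathrm{T}(x),\mathrm{T}(y)]=0$, which allows a single eigenvector to diagonalise the whole family and makes $\Lambda(x_0)$ a well-defined function of $x_0$ uniform enough to be matched against the coefficients \eqref{MI}.
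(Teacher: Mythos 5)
Your proposal is correct and follows essentially the same route as the paper: the paper likewise takes $\mathcal{F}_n=\bra{\Psi}\mathcal{Y}(X)\ket{0}$, deduces symmetry from $\pi_{i,j}\mathcal{Y}(X)=\mathcal{Y}(X)$, and obtains \eqref{MF} with \eqref{MI} by applying the functional $\Phi$ of \eqref{PJ} to relation \eqref{TBn}, using the highest-weight property of $\ket{0}$ and that $\bra{\Psi}$ is a dual eigenvector (the point you flag about $\bra{\Psi}$, resting on $[\mathrm{T}(x),\mathrm{T}(y)]=0$, is exactly how the paper handles it).
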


\begin{proof}
Since $[ \mathrm{T}(x) , \mathrm{T}(y) ] = 0$ we can take $\bra{\Psi}$ independent of spectral parameters. Also, we notice
$\pi_{i,j} \mathcal{Y} (X) = \mathcal{Y} (X)$ which implies that $\mathcal{F}_n (x_1 , x_2 , \dots , x_n) = \bra{\Psi} \mathcal{Y} (X) \ket{0}$
lives in $\C \llbracket x_1^{\pm 1} , x_2^{\pm 1} , \dots , x_n^{\pm 1} \rrbracket^{\mathfrak{S}_n}$. Next we apply the functional 
$\Phi$ defined in \eqref{PJ} to the Yang-Baxter algebra relation \eqref{TBn} taking into account that $\ket{0}$ is a highest-weight vector with weight
$(L , \lambda_{\mathcal{A}}, \lambda_{\mathcal{D}})$. The functions $\lambda_{\mathcal{A} , \mathcal{D}}$ have been defined in 
\eqref{lambda}. By doing so we find relation \eqref{MF} with coefficients $\mathrm{M}_i$ defined by \eqref{MI}. This completes our proof.
\end{proof}

\subsubsection{Non-linear functional equation for $\Lambda$}
As described in Proposition \ref{aux} one can extend \eqref{MI} to coefficients of a system of linear equations through
the action of permutations $\pi_{0,j}$. The extended coefficients are then found from \eqref{mij} and the compatibility condition
stated in Lemma \ref{det} yields a non-linear functional equation governing the eigenvalue $\Lambda$. 

\begin{rema} \label{rema:lab}
It is important to remark that we are not being rigorous with our notation. 
The transfer matrix $\mathrm{T}(x_0)$ commutes with the $\mathfrak{gl}_2$ Cartan subalgebra $\mathfrak{h}$ and this implies 
$\V_{\mathcal{Q}}$ decomposes into $\mathfrak{h}$-modules. Hence, we should have added an appropriate label to the 
transfer matrix's eigenvector $\ket{\Psi}$ in order to identify the particular $\mathfrak{h}$-module where $\ket{\Psi}$ lives. Consequently, 
the same label should also have been used for the corresponding eigenvalue $\Lambda$ but we will refrain to do so for the sake of simplicity.
\end{rema}

The linear problem \eqref{MF} holds independently for each $n \in \{ 0, 1, \dots , L \}$. The case $n=0$ is trivial and we are left with 
a total of $L$ non-linear equations describing the transfer matrix's spectrum. Each equation governs the eigenvalues in a particular
$\mathfrak{h}$-module although we are not using an appropriate label for $\Lambda$ as pointed out in Remark \ref{rema:lab}.
The determinantal condition \eqref{CC} for each value of $n$ encodes such non-linear functional equations in a compact way. In what follows
we inspect some of its particular cases.

\begin{eg}
Relation \eqref{CC} for $n = 1$ reads
\< \label{egn1}
\Lambda(x_0) \Lambda(x_1) &=& \Lambda(x_0) \left[ \phi_1 \frac{a(x_0 - x_1)}{b(x_0 - x_1)} \lambda_{\mathcal{A}}(x_1) + \phi_2 \frac{a(x_1 - x_0)}{b(x_1 - x_0)} \lambda_{\mathcal{D}}(x_1) \right] \nonumber \\
&& + \; \Lambda(x_1) \left[ \phi_1 \frac{a(x_1 - x_0)}{b(x_1 - x_0)} \lambda_{\mathcal{A}}(x_0) + \phi_2 \frac{a(x_0 - x_1)}{b(x_0 - x_1)} \lambda_{\mathcal{D}}(x_0) \right] \nonumber \\
&& - \; \left[ \frac{c(x_0 - x_1)}{b(x_0 - x_1)}  \right]^2 \left[ \phi_1 \lambda_{\mathcal{A}}(x_0) - \phi_2 \lambda_{\mathcal{D}}(x_0)  \right] \left[ \phi_1 \lambda_{\mathcal{A}}(x_1) - \phi_2 \lambda_{\mathcal{D}}(x_1)  \right] \; .  
\>
\end{eg}

\bigskip

\begin{eg}
The case $n=2$ is explicitly given by
\< \label{egn2}
\Lambda(x_0) \Lambda(x_1) \Lambda(x_2) &=& \mathfrak{m}_{2,2} \; \Lambda(x_0) \Lambda(x_1) + \mathfrak{m}_{1,1} \; \Lambda(x_0) \Lambda(x_2) + \mathfrak{m}_{0,0} \; \Lambda(x_1) \Lambda(x_2) \nonumber \\
&& + \; [ \mathfrak{m}_{1,2} \mathfrak{m}_{2,1} - \mathfrak{m}_{1,1} \mathfrak{m}_{2,2} ] \Lambda(x_0) + [ \mathfrak{m}_{0,2} \mathfrak{m}_{2,0} - \mathfrak{m}_{0,0} \mathfrak{m}_{2,2} ] \Lambda(x_1) \nonumber \\
&& + \; [ \mathfrak{m}_{0,1} \mathfrak{m}_{1,0} - \mathfrak{m}_{0,0} \mathfrak{m}_{1,1} ] \Lambda(x_2) + \mathfrak{M} \; . 
\>
As for the coefficients in \eqref{egn2} we have
\<
\mathfrak{m}_{i , j} &\coloneqq& \begin{cases}
\displaystyle \phi_1 \lambda_{\mathcal{A}} (x_i) \prod_{x \in X^0_i} \frac{a(x - x_i)}{b(x - x_i)} + \phi_2 \lambda_{\mathcal{D}} (x_i) \prod_{x \in X^0_i} \frac{a(x_i - x)}{b(x_i - x)} \qquad\qquad\qquad\qquad\qquad i = j \nonumber \\
\displaystyle \frac{c(x_j - x_i)}{b(x_j - x_i)} \left[ \phi_1 \lambda_{\mathcal{A}} (x_j) \prod_{x \in X^0_{i,j}} \frac{a(x - x_j)}{b(x - x_j)} - \phi_2 \lambda_{\mathcal{D}} (x_j) \prod_{x \in X^0_{i,j}} \frac{a(x_j - x)}{b(x_j - x)}   \right] \qquad\quad \mathrm{otherwise}
\end{cases} \nonumber \\
\>
while 
\<
\mathfrak{M} &\coloneqq& \left| \begin{matrix} \mathfrak{m}_{0,0} & \mathfrak{m}_{0,1} & \mathfrak{m}_{0,2} \cr  \mathfrak{m}_{1,0} & \mathfrak{m}_{1,1} & \mathfrak{m}_{1,2} \cr \mathfrak{m}_{2,0} & \mathfrak{m}_{2,1} & \mathfrak{m}_{2,2} \end{matrix} \right| \; .
\>

\end{eg}

\section{Generating function of conserved quantities} \label{sec:QUANT}

The formulation of integrable differential equations as compatibility conditions of auxiliary linear problems has several practical consequences. 
One of them is a general and systematic procedure for constructing conserved quantities underlying the differential equation of interest. As for 
the auxiliary linear problem \eqref{MF}, one can also obtain conserved quantities in a systematic way. This is what we intend to
demonstrate in this section. 

\begin{defi}[Transport Function] \label{TF}
Let $\mathfrak{T}_{i \to j} \colon \C^{n+1} \to \C$ satisfy the property
\[ \label{tij}
\mathcal{F}_n (X_j^0) =  \mathfrak{T}_{i \to j} \; \mathcal{F}_n (X_i^0) \; .
\]
We refer to $\mathfrak{T}_{i \to j}$ as \emph{Transport Function} and it will be a key ingredient for constructing 
conserved quantities.
\end{defi}

\begin{thm} \label{FIF0a}
Let $1 \leq \alpha , \beta \leq n$ and write
\[ \label{VI}
\left( \mathcal{V}_i  \right)_{\alpha , \beta} \coloneqq \begin{cases}
- \mathrm{M}_{0, \alpha} \qquad\qquad \beta = i \cr
\mathrm{M}_{\beta, \alpha} \qquad\quad \text{otherwise}
\end{cases} 
\]
for matrix entries spanning the set $\{ \mathcal{V}_i \mid 0 \leq i \leq n \}$. As for $0 \leq i, j \leq n$, the transport function
associated with the auxiliary linear problem \eqref{MF} is	 given by
\[ \label{vivj}
\mathfrak{T}_{i \to j} = \frac{\mathrm{det}(\mathcal{V}_j)}{\mathrm{det}(\mathcal{V}_i)} \; .
\]
\end{thm}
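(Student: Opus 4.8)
The plan is to solve the linear system \eqref{MFF} explicitly by Cramer's rule and simply read off the ratio $\mathcal{F}_n(X_j^0)/\mathcal{F}_n(X_i^0)$ demanded by the defining property \eqref{tij} of Definition \ref{TF}. First I would regard \eqref{MFF} as an \emph{inhomogeneous} system for the unknowns $\mathcal{F}_n(X_i^0)$ with $1 \leq i \leq n$, treating the remaining quantity $\mathcal{F}_n(X_0^0)$ as given. Concretely, isolating the $i=0$ term in each equation rewrites \eqref{MFF} as
\[
\sum_{i=1}^n \mathrm{M}_{i,j} \; \mathcal{F}_n(X_i^0) = - \mathrm{M}_{0,j} \; \mathcal{F}_n(X_0^0) \; , \qquad j \in \{ 0, 1, \dots, n \} \; .
\]
Discarding the single equation $j=0$ then leaves an $n \times n$ system whose coefficient matrix $A$ has entries $A_{\alpha,\beta} = \mathrm{M}_{\beta,\alpha}$ for $1 \leq \alpha, \beta \leq n$ (with row label $\alpha$ identified with the retained equation label $j$ and column label $\beta$ with the variable label $i$), and whose right-hand side is the column $-\mathcal{F}_n(X_0^0) \, (\mathrm{M}_{0,1}, \dots, \mathrm{M}_{0,n})^{\mathrm{T}}$.

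Next I would apply Cramer's rule to this reduced system. Solving for $\mathcal{F}_n(X_i^0)$ replaces the $i$-th column of $A$ by the right-hand side, and the overall scalar $\mathcal{F}_n(X_0^0)$ factors out of the determinant. The key observation is that the resulting numerator matrix is \emph{exactly} $\mathcal{V}_i$ as defined in \eqref{VI}: its column $\beta = i$ carries the entries $-\mathrm{M}_{0,\alpha}$ coming from the right-hand side, while every other column $\beta \neq i$ retains $\mathrm{M}_{\beta,\alpha}$, which matches the two branches of \eqref{VI} line by line. In the same way one sees that the base matrix $A$ coincides with $\mathcal{V}_0$, since the value $\beta = 0$ never occurs for $1 \leq \beta \leq n$ and the defining formula \eqref{VI} then always selects the ``otherwise'' branch $(\mathcal{V}_0)_{\alpha,\beta} = \mathrm{M}_{\beta,\alpha}$. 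Cramer's rule therefore yields, for every $i$,
\[
\mathcal{F}_n(X_i^0) = \mathcal{F}_n(X_0^0) \; \frac{\mathrm{det}(\mathcal{V}_i)}{\mathrm{det}(\mathcal{V}_0)} \; ,
\]
the case $i=0$ being a trivial identity.

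Finally, forming the quotient of this relation for $j$ and for $i$ makes the common factors $\mathcal{F}_n(X_0^0)$ and $\mathrm{det}(\mathcal{V}_0)$ cancel, leaving $\mathcal{F}_n(X_j^0)/\mathcal{F}_n(X_i^0) = \mathrm{det}(\mathcal{V}_j)/\mathrm{det}(\mathcal{V}_i)$, which is precisely \eqref{vivj} once compared with \eqref{tij}. I expect the only genuine obstacle to be bookkeeping rather than mathematics: one must track the transpose-style index convention relating the entries $\mathrm{M}_{i,j}$ of the matrix $\mathcal{M}$ of Lemma \ref{det} to the rows and columns of $\mathcal{V}_i$, and in particular check that the minus sign on the $-\mathrm{M}_{0,\alpha}$ column in \eqref{VI} is exactly the sign produced by isolating the $i=0$ term above. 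The admissibility of discarding the $j=0$ equation is itself guaranteed by the compatibility condition $\mathrm{det}(\mathcal{M}) = 0$ of Lemma \ref{det}, so the solution of the reduced system automatically satisfies the dropped equation and the transport function is well defined wherever $\mathrm{det}(\mathcal{V}_i) \neq 0$.
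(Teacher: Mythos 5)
Your proposal is correct and follows essentially the same route as the paper's own proof: restrict \eqref{MFF} to the equations $j=1,\dots,n$, solve for $\mathcal{F}_n(X_i^0)$ in terms of $\mathcal{F}_n(X)=\mathcal{F}_n(X_0^0)$ by Cramer's rule so that $\mathcal{F}_n(X_i^0)/\mathcal{F}_n(X)=\det(\mathcal{V}_i)/\det(\mathcal{V}_0)$, and take quotients. The paper states this tersely; you have merely made the index bookkeeping and the identification of the Cramer numerator with $\mathcal{V}_i$ explicit, which is a faithful elaboration rather than a different argument.
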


\begin{proof}
Consider the system of linear equations \eqref{MFF} for $j = 1, 2, \dots , n$ and use Cramer's rule to solve $\mathcal{F}_n (X_i^0)$ in terms
of $\mathcal{F}_n (X)$. By doing so we find
\[ \label{FiF0}
\frac{\mathcal{F}_n (X_i^0)}{\mathcal{F}_n (X)} = \frac{\mathrm{det} (\mathcal{V}_i)}{\mathrm{det} (\mathcal{V}_0)} \; .
\]
Since \eqref{FiF0} is valid for $i \in \{1 ,2 , \dots , n \}$ we can readily write 
\[ \label{FiFj}
\frac{\mathcal{F}_n (X_i^0)}{\mathcal{F}_n (X_j^0)} = \frac{\mathrm{det} (\mathcal{V}_i)}{\mathrm{det} (\mathcal{V}_j)} 
\]
for $i, j \in \{ 1, 2, \dots , n \}$. This concludes our proof.
\end{proof}

The transport function defined by property \eqref{tij} exhibits several interesting properties. For instance, one can readily show it satisfies the composition 
property $\mathfrak{T}_{i \to k} = \mathfrak{T}_{j \to k} \mathfrak{T}_{i \to j}$.
Such property can be generalized to sequences $\mathcal{I}_m \coloneqq ( i_1, i_2 , \dots , i_m )$ as depicted in the following diagram
\[
\mathcal{F}_n (X_{i_1}^0) \xrightarrow{\mathfrak{T}_{i_1 \to i_2}} \mathcal{F}_n (X_{i_2}^0) \xrightarrow{\mathfrak{T}_{i_2 \to i_3}} \;\; \dots \;\; \mathcal{F}_n (X_{i_{m-1}}^0) \xrightarrow{\mathfrak{T}_{i_{m-1} \to i_m}} \mathcal{F}_n (X_{i_{m}}^0) \; . \nonumber 
\]

\medskip

\begin{lem}[Composition map] 
For sequences $\mathcal{I}_m$ one has the composition property
\[ \label{comp}
\mathfrak{T}_{i_1 \to i_m} = \mathop{\overleftarrow\prod}\limits_{1 \le l < m} \mathfrak{T}_{i_l \to i_{l+1}} \; .
\]
\end{lem}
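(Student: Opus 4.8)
The plan is to reduce the general composition formula to the two-step property $\mathfrak{T}_{i \to k} = \mathfrak{T}_{j \to k}\,\mathfrak{T}_{i \to j}$ already recorded just above the statement, and then iterate. The cleanest route, though, is to bypass the recursion entirely and invoke the determinantal representation \eqref{vivj} of Theorem \ref{FIF0a}, which writes each transport function as a ratio of determinants; the ordered product on the right-hand side of \eqref{comp} then telescopes.

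Concretely, I would first recall that, by \eqref{vivj}, each factor in the product satisfies $\mathfrak{T}_{i_l \to i_{l+1}} = \mathrm{det}(\mathcal{V}_{i_{l+1}})/\mathrm{det}(\mathcal{V}_{i_l})$. Next I would note that the transport functions are scalar-valued (they are maps $\C^{n+1} \to \C$), so they commute and the ordered product $\overleftarrow\prod$ coincides with the ordinary product. Substituting the ratios, everything cancels in a chain:
\[
\mathop{\overleftarrow\prod}\limits_{1 \le l < m} \mathfrak{T}_{i_l \to i_{l+1}} = \prod_{l=1}^{m-1} \frac{\mathrm{det}(\mathcal{V}_{i_{l+1}})}{\mathrm{det}(\mathcal{V}_{i_l})} = \frac{\mathrm{det}(\mathcal{V}_{i_m})}{\mathrm{det}(\mathcal{V}_{i_1})} = \mathfrak{T}_{i_1 \to i_m},
\]
the last equality being \eqref{vivj} once more. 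This proves the lemma.

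Alternatively, and more in keeping with the diagram preceding the statement, I would run an induction on the length $m$ of the sequence $\mathcal{I}_m$: the base case $m = 2$ is the defining property \eqref{tij}, and for the inductive step I would split off the final arrow, apply the two-step composition property as $\mathfrak{T}_{i_1 \to i_m} = \mathfrak{T}_{i_{m-1} \to i_m}\,\mathfrak{T}_{i_1 \to i_{m-1}}$, and insert the inductive hypothesis for $\mathfrak{T}_{i_1 \to i_{m-1}}$. I do not anticipate a genuine obstacle, since the identity is essentially a telescoping cancellation; the only points requiring care are the bookkeeping of the arrow direction in $\overleftarrow\prod$ against the numerator/denominator placement in \eqref{vivj}, and the tacit nonvanishing of $\mathrm{det}(\mathcal{V}_{i_l})$ along the chain, which is built into the transport function being well defined through \eqref{tij}.
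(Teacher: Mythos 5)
Your proposal is correct, but your preferred route is not the paper's. The paper proves the lemma by ``repeated use of \eqref{tij}'', i.e.\ exactly your \emph{alternative} argument: iterate the defining property $\mathcal{F}_n (X_{i_{l+1}}^0) = \mathfrak{T}_{i_l \to i_{l+1}} \, \mathcal{F}_n (X_{i_l}^0)$ along the chain and compare with $\mathcal{F}_n (X_{i_m}^0) = \mathfrak{T}_{i_1 \to i_m} \, \mathcal{F}_n (X_{i_1}^0)$ (tacitly dividing by $\mathcal{F}_n (X_{i_1}^0)$, the same nonvanishing caveat you flag). This argument is purely at the level of Definition \ref{TF} and therefore holds for \emph{any} transport function, which is the generality the lemma is stated in. Your primary route --- substituting the determinantal formula \eqref{vivj} and telescoping --- is also valid, but it proves the composition property only for the specific realization of Theorem \ref{FIF0a}; note that the paper treats that verification as a separate observation in the remark immediately following the lemma (``Showing that the realization \eqref{vivj} \dots indeed satisfies \eqref{comp} is straightforward''). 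So your telescoping computation is really a proof of that remark, while your induction via \eqref{tij} is the proof of the lemma itself; keeping the two levels distinct is the only point of substance here, and your write-up already contains both pieces.
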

\begin{proof}
The proof is trivial and follows from the repeated use of \eqref{tij}.
\end{proof}

\begin{rema}
The transport function for a closed sequence or loop  $(i_1, i_2 , \dots, i_{m-1}, i_1)$ is trivial and equals to $1$.
\end{rema}

\begin{rema}
Showing that the realization \eqref{vivj} for the auxiliary linear problem \eqref{MF} indeed satisfies \eqref{comp} is straightforward.
\end{rema}

\begin{thm}[Conserved quantities] \label{CQ}
Let $\partial_i \coloneqq \frac{\partial}{\partial x_i}$ denote the partial derivative with respect to $x_i$. Then the quantity 
\[ \label{gen}
\Theta_{i,j} \coloneqq \log{(\partial_i \log{(\mathfrak{T}_{i \to j})})}
\]
is a generating function of conserved quantities along the variable $x_j$. In other words, $\partial_j \Theta_{i,j} = 0$.
\end{thm}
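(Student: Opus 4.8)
The plan is to reduce everything to the ratio form of the transport function and then track precisely which spectral variables appear in each factor. First I would rewrite the defining property \eqref{tij} of the transport function as
\[
\mathfrak{T}_{i \to j} = \frac{\mathcal{F}_n (X_j^0)}{\mathcal{F}_n (X_i^0)} \, ,
\]
which is also what the determinantal realization \eqref{vivj} gives once combined with \eqref{FiFj}. Taking a logarithm converts this ratio into a difference,
\[
\log (\mathfrak{T}_{i \to j}) = \log \mathcal{F}_n (X_j^0) - \log \mathcal{F}_n (X_i^0) \, ,
\]
so that the two arguments $X_i^0$ and $X_j^0$ become completely decoupled and can be analyzed separately.

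Next I would apply $\partial_i$ and exploit the combinatorics of the shorthand $X_\bullet^0$. By definition $X_i^0 = X \cup \{ x_0 \} \backslash \{ x_i \}$ does \emph{not} contain the variable $x_i$; hence $\mathcal{F}_n(X_i^0)$ is independent of $x_i$ and the second term is annihilated by $\partial_i$. This isolates
\[
\partial_i \log (\mathfrak{T}_{i \to j}) = \partial_i \log \mathcal{F}_n (X_j^0) = \frac{\partial_i \mathcal{F}_n (X_j^0)}{\mathcal{F}_n (X_j^0)} \, ,
\]
so that $\Theta_{i,j}$ is built solely out of $\mathcal{F}_n(X_j^0)$ and its $x_i$-derivative.

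The final—and conceptually decisive—step is the symmetric observation that $X_j^0 = X \cup \{ x_0 \} \backslash \{ x_j \}$ does not contain $x_j$. Consequently both $\mathcal{F}_n(X_j^0)$ and $\partial_i \mathcal{F}_n(X_j^0)$ (a derivative in the distinct variable $x_i$) are manifestly $x_j$-independent. Therefore $\Theta_{i,j} = \log \bigl( \partial_i \mathcal{F}_n(X_j^0) / \mathcal{F}_n(X_j^0) \bigr)$ carries no $x_j$-dependence whatsoever, giving $\partial_j \Theta_{i,j} = 0$ as claimed.

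I expect no serious obstacle: the entire argument is driven by the index bookkeeping of the sets $X_i^0$ and $X_j^0$, and the only point requiring a little care is the implicit restriction $i \neq j$ (so that the degenerate case $\log(\mathfrak{T}_{i \to i}) = \log 0$ is excluded and $x_i$ genuinely appears in $X_j^0$). The real content is simply that the logarithm separates numerator from denominator, that $\partial_i$ kills the denominator through variable-absence, and that the surviving numerator depends on $x_j$ only through the slot that has been removed.
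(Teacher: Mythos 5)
Your proof is correct and follows essentially the same route as the paper: both arguments rest on the single fact that $\mathcal{F}_n(X_k^0)$ is independent of $x_k$, applied to the product/ratio structure of the defining relation \eqref{tij}. Your reorganization (take the logarithm first so that $\partial_i$ kills the denominator and the surviving term $\partial_i \log \mathcal{F}_n(X_j^0)$ is manifestly $x_j$-free) is just a cleaner writing of the paper's "apply $\partial_i$ and $\partial_j$ and manipulate the two identities," with the same implicit genericity assumptions ($i \neq j$ and nonvanishing of the quantities inside the logarithms).
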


\begin{proof}
Apply $\partial_j$ and $\partial_i$ to \eqref{tij} keeping in mind that $\partial_k f (X_k^0) = 0$ for arbitrary functions $f$.
By doing so we are left with two identities which can be manipulated to showing that $\partial_j \Theta_{i,j} = 0$.
\end{proof} 

The RHS of \eqref{gen} is given in terms of the set of variables $\{x_0 , x_1 , \dots , x_n \}$. However, it does not depend on $x_j$ on the solutions
manifold according to Theorem \ref{CQ}. In this way, we can conveniently write $\vec{r}_j \coloneqq (r_0 , r_1 , \dots , \hat{r_j}, \dots , r_n)$
for a $n$-tuple of integers and expand the RHS of \eqref{gen} in Taylor series as
\[ \label{expan}
\Theta_{i,j} = \sum_{\vec{r}_j \in \Z_{\geq 0}^n} \nabla_{\vec{r}_j}^{(i,j)} (x_j) \; \prod_{\substack{l = 0 \\ l \neq j}}^n x_l^{r_l} \; .
\]
Therefore, using Theorem \ref{CQ} we find $\partial_j \nabla_{\vec{r}_j}^{(i,j)} (x_j) = 0$ which allows us to conclude that the set 
$\{ \nabla_{\vec{r}}^{(i,j)} (x) \mid \vec{r} \in \Z_{\geq 0}^n \; ; \; 0 \leq i < j \leq n \}$ is a family of conserved quantities.

\section{Solving the auxiliary linear problem} \label{sec:AUX}

In \Secref{sec:ALP} we have presented a rather general linear covering of non-linear functional equations mimicking the philosophy of the CISM.
An explicit realization of the proposed embedding was given in \Secref{sec:AFM} using the AF method.
In particular, we have shown that certain non-linear functional equations describing eigenvalues of the transfer matrix \eqref{tmat}
can be encoded in the compatibility condition \eqref{CC}. 
Examples of such non-linear equations have been made explicit in \eqref{egn1} and \eqref{egn2}. 
Hence, within our approach the transfer matrix's spectral problem corresponds to solving the aforementioned non-linear equations; which still seems to be 
a very complicated task. The latter is specially reinforced when inspecting the case \eqref{egn2}.
However, in this section we intend to show that this complication is deceptive and that the auxiliary linear problem \eqref{MF} can in fact assist us
in finding solutions.

\begin{lem} \label{FIF0}
Let $\mathcal{V}_i$ be matrices with entries defined in \eqref{VI}. Solutions of \eqref{MF} are then of the form 
$\mathcal{F}_n (X_i^0) = f_n \; \mathrm{det} (\mathcal{V}_i)$ with $f_n$ a function depending at most on the
set of variables  $\{x_0, x_1 , \dots , x_n \}$.
\end{lem}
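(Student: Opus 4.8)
The plan is to obtain the claimed factorization as an immediate repackaging of the Cramer's-rule computation already performed in the proof of Theorem~\ref{FIF0a}. The essential point is that equation~\eqref{FiF0} exhibits each solution component $\mathcal{F}_n(X_i^0)$ as a fixed multiple of $\mathcal{F}_n(X) = \mathcal{F}_n(X_0^0)$, the multiplier being precisely $\mathrm{det}(\mathcal{V}_i)/\mathrm{det}(\mathcal{V}_0)$. Clearing the common denominator in \eqref{FiF0} then suggests setting $f_n \coloneqq \mathcal{F}_n(X)/\mathrm{det}(\mathcal{V}_0)$, so that $\mathcal{F}_n(X_i^0) = f_n\,\mathrm{det}(\mathcal{V}_i)$, and the task reduces to checking three points: that this identity holds for every $i \in \{0,1,\dots,n\}$, that $f_n$ is genuinely index-independent, and that $f_n$ depends at most on $\{x_0,x_1,\dots,x_n\}$.

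First I would dispose of the indices $i \in \{1,\dots,n\}$, where \eqref{FiF0} applies verbatim and yields the desired form upon multiplying through by $\mathcal{F}_n(X)$. Then I would treat the boundary index $i=0$ separately: since the column index $\beta$ in \eqref{VI} ranges only over $\{1,\dots,n\}$, the alternative $\beta = i$ is never triggered when $i=0$, so $\mathcal{V}_0$ is simply the matrix $(\mathrm{M}_{\beta,\alpha})_{1\le\alpha,\beta\le n}$; substituting $i=0$ into the proposed formula returns $f_n\,\mathrm{det}(\mathcal{V}_0) = \mathcal{F}_n(X) = \mathcal{F}_n(X_0^0)$, confirming consistency. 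Finally I would record the variable dependence: $\mathcal{F}_n(X) = \mathcal{F}_n(x_1,\dots,x_n)$ while $\mathrm{det}(\mathcal{V}_0)$ is a polynomial in the coefficients $\mathrm{M}_{i,j}$, each of which lies in $\C\llbracket x_0^{\pm1},\dots,x_n^{\pm1}\rrbracket$ by \eqref{mij} and \eqref{MI}; hence $f_n$ is a function of $\{x_0,\dots,x_n\}$ alone, as required.

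The only substantive point beyond bookkeeping is that the \emph{same} scalar $f_n$ must rescale all the $\mathrm{det}(\mathcal{V}_i)$ simultaneously; this is guaranteed because \eqref{FiF0} arises from solving the single linear system \eqref{MFF} (for $j=1,\dots,n$) by Cramer's rule, so the denominator $\mathrm{det}(\mathcal{V}_0)$ and the prefactor $\mathcal{F}_n(X)$ are common to all components. I expect the main, and only mild, obstacle to be non-degeneracy hygiene: the argument presupposes $\mathrm{det}(\mathcal{V}_0) \neq 0$ so that Cramer's rule applies and $f_n$ is well defined. I would therefore note that the relation $\mathcal{F}_n(X_i^0) = f_n\,\mathrm{det}(\mathcal{V}_i)$ is naturally read up to the overall scale $f_n$, consistent with the one-dimensionality of the solution space forced by the compatibility condition \eqref{CC} of Lemma~\ref{det}; on the vanishing locus of $\mathrm{det}(\mathcal{V}_0)$ one simply renormalizes against a different component $\mathcal{V}_k$ with $\mathrm{det}(\mathcal{V}_k)\neq 0$, which is legitimate by the composition property already established in Theorem~\ref{FIF0a}.
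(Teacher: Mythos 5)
Your proof is correct and follows essentially the same route as the paper: the paper's own (one-line) proof simply observes that \eqref{tij} with the factorized form \eqref{vivj} holds for all $0 \leq i,j \leq n$, which is exactly the Cramer's-rule identity \eqref{FiF0} you repackage by setting $f_n = \mathcal{F}_n(X)/\mathrm{det}(\mathcal{V}_0)$. Your extra care with the $i=0$ index and the vanishing locus of $\mathrm{det}(\mathcal{V}_0)$ is sound but goes beyond what the paper records.
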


\begin{proof}
The statement of Lemma \ref{FIF0} follows naturally from the fact that \eqref{tij} with \eqref{vivj} is factorized and valid for $0 \leq i, j \leq n$.
\end{proof}

In order to illustrate how Lemma \ref{FIF0} works in practice let us first examine its statements for the particular cases
$n = 0, 1, 2 , 3$ before addressing the general case.

\subsection{Case $n=0$} \label{sec:N0}
This case is trivial and we shall mainly use it to set up some conventions. In particular, Eq. \eqref{MF} for $n=0$ simply gives the condition
\[ \label{sig0}
\Sigma_0 \coloneqq \Lambda(x) - \lambda_{+} (x) = 0
\]
with 
\[
\lambda_{\pm} (x) \coloneqq \pm \phi_1 \lambda_{\mathcal{A}} (x) + \phi_2 \lambda_{\mathcal{D}} (x) \; . 
\]
Although \eqref{sig0} does not involve $\lambda_{-}$, we have defined it in such a way for later convenience.

\subsection{Case $n=1$} \label{sec:N1}
This is the first non-trivial case contained in Lemma \ref{FIF0}. Although it is a rather simple case one can already extract a large deal
of non-trivial information about the spectrum of the transfer matrix $\mathrm{T}$. We start our analysis by evaluating explicitly
the relevant determinants. In this way we find,
\< \label{v0v1}
\mathrm{det} (\mathcal{V}_0) &=& \frac{1}{b(x_0 - x_1)} \left[ \phi_1 a(x_0 - x_1) \lambda_{\mathcal{A}}(x_1) - \phi_2 a(x_1 - x_0) \lambda_{\mathcal{D}}(x_1) - b(x_0 - x_1) \Lambda(x_1) \right] \nonumber \\
\mathrm{det} (\mathcal{V}_1) &=&  \frac{c(x_0 - x_1)}{b(x_0 - x_1)} \left[ \phi_1  \lambda_{\mathcal{A}}(x_0) - \phi_2  \lambda_{\mathcal{D}}(x_0)  \right] \; . \nonumber \\
\>
and separation of variables allows us to conclude
\< \label{f1f0}
\mathcal{F}_1 (x_0) &=& g(x_0) \; \sinh{(\gamma)} \left[ \phi_1  \lambda_{\mathcal{A}}(x_0) - \phi_2  \lambda_{\mathcal{D}}(x_0)  \right] \nonumber \\
\mathcal{F}_1 (x_1) &=& g(x_0)  \left[ \phi_1 a(x_0 - x_1) \lambda_{\mathcal{A}}(x_1) - \phi_2 a(x_1 - x_0) \lambda_{\mathcal{D}}(x_1) - b(x_0 - x_1) \Lambda(x_1) \right] \; . \nonumber \\
\>
The precise identification with Lemma \ref{FIF0} is obtained with $f_1 = \frac{g(x_0)}{b(x_0 - x_1)}$.
In summary we have found two expressions for the same function $\mathcal{F}_1$. The consistency between these two formulae should then constraint
both functions $g$ and $\Lambda$.

From the first expression in \eqref{f1f0} we can see that $g$ and $\mathcal{F}_1$ only differ by a simple known factor. In its turn, the second
expression in \eqref{f1f0} tells us that the role played by $g$ is to remove the dependence of its RHS with the variable $x_0$. Moreover, the RHS
of \eqref{f1f0} contains explicitly the function $\Lambda$ and this allows one for instance to write the eigenvalue $\Lambda$ in terms of functions $g$ or 
$\mathcal{F}_1$. 

There are several possible ways of imposing consistency on the system of equations \eqref{f1f0} and the most obvious choice
is the identity 
\< \label{GG}
&& g(x_1) \; \sinh{(\gamma)} \left[ \phi_1  \lambda_{\mathcal{A}}(x_1) - \phi_2  \lambda_{\mathcal{D}}(x_1)  \right] = \nonumber \\
&& \qquad \quad \;\;\;	 g(x_0)  \left[ \phi_1 a(x_0 - x_1) \lambda_{\mathcal{A}}(x_1) - \phi_2 a(x_1 - x_0) \lambda_{\mathcal{D}}(x_1) - b(x_0 - x_1) \Lambda(x_1) \right] \; . \nonumber \\
\>
Eq. \eqref{GG} is automatically satisfied when $x_0 = x_1$ but it can be further analyzed using the derivative method. 
Here we shall adopt a different route though.

Write $\partial_i \coloneqq \frac{\partial}{\partial x_i}$ and use the condition $\partial_0 \mathcal{F}_1 (x_1) = 0$, with
$\mathcal{F}_1 (x_1)$ being given by the second expression of \eqref{f1f0}.
By doing so we are left with the relation
\<
\label{logg1}
\Lambda (x_1) &=& \left[ \partial_0 b(x_0 - x_1) + h(x_0) b(x_0 - x_1) \right]^{-1} \left\{ \phi_1 \partial_0 a(x_0 - x_1) \lambda_{\mathcal{A}}(x_1) \right. \nonumber \\
&&  - \; \left. \phi_2 \partial_0 a(x_1 - x_0) \lambda_{\mathcal{D}}(x_1) +  h(x_0) \left[ \phi_1 a(x_0 - x_1) \lambda_{\mathcal{A}}(x_1) - \phi_2 a(x_1 - x_0) \lambda_{\mathcal{D}}(x_1)  \right] \right\} \nonumber \\
\>
or equivalently
\< \label{logg0}
h(x_0) = - \frac{\left[ \phi_1 \partial_0 a(x_0 - x_1) \lambda_{\mathcal{A}}(x_1) - \phi_2 \partial_0 a(x_1 - x_0) \lambda_{\mathcal{D}}(x_1) - \partial_0 b(x_0 - x_1) \Lambda(x_1) \right]}{\left[ \phi_1 a(x_0 - x_1) \lambda_{\mathcal{A}}(x_1) - \phi_2 a(x_1 - x_0) \lambda_{\mathcal{D}}(x_1) - b(x_0 - x_1) \Lambda(x_1) \right]} \; , \nonumber \\
\>
upon some trivial rearrangements. In \eqref{logg1} and \eqref{logg0} we have introduced the function $h(x_0) \coloneqq  \partial_0 \log{ g(x_0)}$. The reason for 
writing both formulae \eqref{logg1} and \eqref{logg0} will become clearer in what follows. 

We proceed by implementing the condition $\partial_0 \Lambda(x_1) = 0$ using expression \eqref{logg1}. After some trivial simplifications we find that this condition
translates to the function $h$ as
\[ \label{hh}
\partial_0 h(x_0) + 1 = h(x_0)^2   \; .
\]
Eq. \eqref{hh} can be recognized as a Riccati equation and such type of equation admits a simple \emph{linearization}.
As far as Eq. \eqref{hh} is concerned, the linearization map is simply given by $h(x_0) = - \partial_0 \log{ \bar{g}(x_0)}$. Such map is closely related to the original definition of $h$ in
terms of the function $g$. The relation between $g$ and $\bar{g}$ simply reads $g(x_0) \bar{g}(x_0) = 1$.
Hence, in terms of the function $\bar{g}$, Eq. \eqref{hh} becomes $\left( \partial_0^2 - 1 \right) \bar{g}(x_0) = 0$ which is just a simple harmonic 
oscillator equation. For convenience we write its solution as
\[ \label{gg}
\bar{g}(x_0) = A_1 \sinh{(w_1 - x_0)} 
\]
which produces
\[ \label{hh0}
h(x_0) = \coth{(w_1 - x_0)} \; .
\]
The parameters $A_1$ and $w_1$ in \eqref{gg} and \eqref{hh0} are integration constants.
The substitution of \eqref{hh0} in \eqref{logg1} then gives us the following formula for the function $\Lambda$, 
\[ \label{sol1}
\Lambda (x_1) = \phi_1 \frac{a(w_1 - x_1)}{b(w_1 - x_1)} \lambda_{\mathcal{A}}(x_1) +  \phi_2 \frac{a(x_1 - w_1)}{b(x_1 - w_1)} \lambda_{\mathcal{D}}(x_1) \; .
\]
Now one can write down the function $g$ from \eqref{gg} and finally obtain the function $\mathcal{F}_1$. The latter can then be substituted back
in \eqref{MF} together with the obtained expression \eqref{sol1}. By doing so one finds that our auxiliary linear problem is indeed solved for arbitrary
integration constants $A_1$ and $w_1$. 

As for the derivation of \eqref{hh} we have employed the condition $\partial_0 \Lambda(x_1) = 0$. However, one could have instead used the condition
$\partial_1 h (x_0) = 0$ with the help of formula \eqref{logg0}. By doing so one finds the surface equation $\Sigma_1 = 0$ with differential function $\Sigma_1$
defined as
\[ \label{sig1}
\Sigma_1 \coloneqq \omega_0 (x) + \left[ \omega_1 (x)  - \Lambda (x) \right] \Sigma_0 - \sinh{(\gamma)} \lambda_{-} (x) \; \partial \Sigma_0 \;
\]
The functions $\omega_0$ and $\omega_1$ in \eqref{sig1} are then given by
\<
\omega_0 (x) &\coloneqq& \left[ \sinh{(\gamma)} \lambda_{-}(x) \right]^2 - \left[ (\cosh{(\gamma)} - 1) \lambda_{+}(x) \right]^2  \nonumber \\
&& + \; \sinh{(\gamma)} (\cosh{(\gamma)} - 1) \left[ \lambda_{-}(x) \partial \lambda_{+}(x) - \lambda_{+}(x) \partial \lambda_{-}(x) \right]  \nonumber \\
\omega_1 (x) &\coloneqq& \left[ 2 \cosh{(\gamma)} - 1 \right] \lambda_{+}(x) + \sinh{(\gamma)} \partial \lambda_{-} (x) \; . \nonumber \\
\>
In terms of $\Lambda (x)$ and its derivatives, equation $\Sigma_1 = 0$ explicitly reads
\[ \label{riccati}
- \sinh{(\gamma)} \lambda_{-} (x) \; \partial \Lambda(x) + J_1 (x) \Lambda(x) - \Lambda(x)^2 = J_0 (x) \; ,
\]
which is also a \emph{Riccati equation}. As for the coefficients $J_0$ and $J_1$ in \eqref{riccati} we have
\< \label{coefJ}
J_0 (x) &\coloneqq& \left[ \cosh{(\gamma)} \lambda_{+} (x)  \right]^2 - \left[ \sinh{(\gamma)} \lambda_{-} (x)  \right]^2 \nonumber \\
&& + \; \sinh{(\gamma)} \cosh{(\gamma)} \left[ \lambda_{+}(x) \partial \lambda_{-}(x) - \lambda_{-}(x) \partial \lambda_{+}(x)   \right] \nonumber \\
J_1 (x) &\coloneqq& 2 \cosh{(\gamma)} \lambda_{+}(x) + \sinh{(\gamma)} \partial \lambda_{-} (x)
\>

\begin{rema}
The very same equation \eqref{riccati} is found by simply taking the limit $x_0, x_1 \to x$ in \eqref{egn1}.
\end{rema}

\begin{rema}
The dependence of the coefficients \eqref{coefJ} with $x$ only appears through the functions $\lambda_{\pm}$.
\end{rema}

Riccati equations can be linearized for arbitrary coefficients and, in particular, the linearized form of \eqref{riccati}
can be straightforwardly integrated for arbitrary functions $\lambda_{+}$ and $\lambda_{-}$. As for \eqref{riccati} we consider the 
linearization map 
\[
\Lambda (x) = \sinh{(\gamma)} \lambda_{-} (x) \; \partial \log{(u(x))} \; ,
\]
which translates our Riccati equation into the following linear second-order ODE,
\< \label{uu}
\left[ \sinh{(\gamma)} \lambda_{-} (x) \right]^2 \partial^2 u(x) - \sinh{(\gamma)} \lambda_{-} (x) \left[ J_1 (x) - \sinh{(\gamma)} \partial \lambda_{-} (x) \right] \partial u(x) + J_0 (x) \;  u(x) = 0 \; . \nonumber \\
\>
Here we shall not discuss the resolution of \eqref{uu} in details but it is important to remark that, at the end of the day,
one finds the very same representation \eqref{sol1}.

\subsection{Case $n=2$}  \label{sec:N2}

We proceed with the analysis of Lemma \ref{FIF0} for the case $n=2$. For that it is convenient to introduce matrices
$\widetilde{\mathcal{V}}_1$ and $\widetilde{\mathcal{V}}_2$ satisfying the conditions
\<
\mathrm{det} (\mathcal{V}_1) &=& \frac{c \; \mathrm{det} (\widetilde{\mathcal{V}}_1)}{b(x_0 - x_1) b(x_0 - x_2) b(x_2 - x_0)} \nonumber \\
\mathrm{det} (\mathcal{V}_2) &=& \frac{c \; \mathrm{det} (\widetilde{\mathcal{V}}_2)}{b(x_0 - x_1) b(x_0 - x_2) b(x_1 - x_0)} \; .
\>
Using elementary determinant-preserving operations we can define such matrices $\widetilde{\mathcal{V}}_1$ and $\widetilde{\mathcal{V}}_2$
as
\< \label{Vbar}
\left( \widetilde{\mathcal{V}}_1 \right)_{\alpha , \beta} &\coloneqq& \begin{cases}
- c^{-1} b(x_0 - x_1) b(x_0 - x_2) \left( \mathcal{V}_1 \right)_{\alpha, 1} \qquad \beta = 1 \cr
b(x_0 - x_2) \left( \mathcal{V}_1 \right)_{\alpha, 2} \qquad \; \qquad \qquad\qquad  \beta = 2 
\end{cases} \nonumber \\
\left( \widetilde{\mathcal{V}}_2 \right)_{\alpha , \beta} &\coloneqq& \begin{cases}
b(x_0 - x_1) \left( \mathcal{V}_2 \right)_{\alpha, 1} \qquad \; \qquad \qquad\qquad  \beta = 1 \cr 
- c^{-1} b(x_0 - x_1) b(x_0 - x_2) \left( \mathcal{V}_2 \right)_{\alpha, 2} \qquad \beta = 2 
\end{cases}  \; .
\>
Hence, as described in Lemma \ref{FIF0}, we have
\[
\frac{\mathcal{F}_2 (x_0 , x_2)}{\mathcal{F}_2 (x_0 , x_1)} = \frac{\mathrm{det} (\mathcal{V}_1)}{\mathrm{det} (\mathcal{V}_2)} = \frac{b(x_0 - x_1)}{b(x_0 - x_2)} \frac{\mathrm{det} (\widetilde{\mathcal{V}}_1)}{\mathrm{det} (\widetilde{\mathcal{V}}_2)} \; .
\]
Moreover, now using row/column transpositions, one can readily verify from \eqref{Vbar} that $\mathrm{det} (\widetilde{\mathcal{V}}_1) = \pi_{1,2} \; \mathrm{det} (\widetilde{\mathcal{V}}_2)$. In their turn, the explicit evaluation of 
the aforementioned determinants reveals that $\mathrm{det} (\widetilde{\mathcal{V}}_1) \in \C [x_0^{\pm 1} , x_2^{\pm 1}]$ and $\mathrm{det} (\widetilde{\mathcal{V}}_2) \in \C [x_0^{\pm 1} , x_1^{\pm 1}]$.
Such properties then allow us to conclude that
\[ \label{FDET1}
\mathcal{F}_2 (x_0 , x_2) = g \frac{\mathrm{det} (\widetilde{\mathcal{V}}_1)}{b(x_0 - x_2)} 
\]
and 
\[ \label{FDET2}
\mathcal{F}_2 (x_0 , x_1) = g \frac{\mathrm{det} (\widetilde{\mathcal{V}}_2)}{b(x_0 - x_1)} 
\]
for a given function $g$ depending at most on $x_0$, $x_1$ and $x_2$.  
In order to make contact with Lemma \ref{FIF0} we have $f_2 = g \; c^{-1} b(x_0 - x_1) b(x_2 - x_0)$.

\begin{prop} \label{G0}
The function $g$ depends only on $x_0$.
\end{prop}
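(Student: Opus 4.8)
The plan is to read off the variable dependence of $g$ from the two factorizations \eqref{FDET1} and \eqref{FDET2} by exploiting that $\mathcal{F}_2$ is a symmetric function of exactly two arguments. The key observation is that the left-hand side of \eqref{FDET1}, namely $\mathcal{F}_2(x_0, x_2)$, involves only the variables $x_0$ and $x_2$, so that $\partial_1 \mathcal{F}_2(x_0, x_2) = 0$; likewise the left-hand side of \eqref{FDET2} gives $\partial_2 \mathcal{F}_2(x_0, x_1) = 0$. I would then confront these two vanishing conditions with the right-hand sides, on which the determinantal factors have already been shown to depend only on a restricted set of variables.

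Concretely, first I would apply $\partial_1$ to \eqref{FDET1}. Since $\mathrm{det}(\widetilde{\mathcal{V}}_1) \in \C[x_0^{\pm 1}, x_2^{\pm 1}]$ and $b(x_0 - x_2)$ involves only $x_0$ and $x_2$, the ratio $\mathrm{det}(\widetilde{\mathcal{V}}_1)/b(x_0 - x_2)$ is annihilated by $\partial_1$; the product rule together with $\partial_1 \mathcal{F}_2(x_0, x_2) = 0$ then leaves $(\partial_1 g)\,\mathrm{det}(\widetilde{\mathcal{V}}_1)/b(x_0 - x_2) = 0$. An identical computation applied to \eqref{FDET2} with $\partial_2$, now using $\mathrm{det}(\widetilde{\mathcal{V}}_2) \in \C[x_0^{\pm 1}, x_1^{\pm 1}]$ and $\partial_2 \mathcal{F}_2(x_0, x_1) = 0$, yields $(\partial_2 g)\,\mathrm{det}(\widetilde{\mathcal{V}}_2)/b(x_0 - x_1) = 0$.

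The hard part---really the only point requiring care---will be to cancel the determinantal factors, i.e. to confirm that $\mathrm{det}(\widetilde{\mathcal{V}}_1)$ and $\mathrm{det}(\widetilde{\mathcal{V}}_2)$ are not identically zero. This is guaranteed by their explicit evaluation, which presents them as non-trivial Laurent polynomials in the indicated variables; dividing through then gives $\partial_1 g = 0$ and $\partial_2 g = 0$. As $g$ was permitted \emph{a priori} to depend only on $x_0, x_1, x_2$, these two relations force $g = g(x_0)$, which is the assertion of Proposition \ref{G0}.
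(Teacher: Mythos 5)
Your argument is correct and is essentially the paper's own proof: the paper argues directly that $g$ cannot depend on $x_1$ (resp.\ $x_2$) because both the left-hand side of \eqref{FDET1} (resp.\ \eqref{FDET2}) and the determinantal factor are independent of that variable, and your $\partial_1$/$\partial_2$ computation is just the differential phrasing of the same observation. Your explicit remark that one must check $\mathrm{det}(\widetilde{\mathcal{V}}_1)$ and $\mathrm{det}(\widetilde{\mathcal{V}}_2)$ are not identically zero before cancelling is a point the paper leaves implicit, and it is a worthwhile addition.
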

\begin{proof}
Since $\C [x_0^{\pm 1} , x_2^{\pm 1}] \ni \frac{\mathrm{det} (\widetilde{\mathcal{V}}_1)}{b(x_0 - x_2)}$, the function $g$ can not depend on $x_1$ as this would
contradict the LHS of \eqref{FDET1}. On the other hand, $\C [x_0^{\pm 1} , x_1^{\pm 1}] \ni \frac{\mathrm{det} (\widetilde{\mathcal{V}}_2)}{b(x_0 - x_1)}$
and a function $g$ depending on $x_2$ would contradict \eqref{FDET2}. Thus $g = g(x_0)$.
\end{proof}

Next we look at the function $\mathcal{F}_2 (x_1 , x_2)$ according to Lemma \ref{FIF0}. In this way we have
\[
\mathcal{F}_2 (x_1 , x_2) =  g(x_0) \; c^{-1} b(x_0 - x_1) b(x_2 - x_0) \; \mathrm{det} (\mathcal{V}_0)  \; ,
\]
and we can use this representation to determine the function $g(x_0)$. For that we consider the following sequence of steps:
\medskip
\par{\bf{Step $1$.}} Use the identity $\partial_0 \mathcal{F}_2 (x_1 , x_2) = 0$ to find an expression for $\Lambda (x_1)$ in terms of $\Lambda(x_2)$
and $h(x_0) \coloneqq \partial_0 \mathrm{log} g(x_0)$.
\medskip
\par{\bf{Step $2$.}} Use expression obtained for $\Lambda (x_1)$ and implement the condition $\partial_0 \Lambda(x_1) = 0$ to write $\Lambda (x_2)$
in terms of the function $h$ and its derivatives.
\medskip
\par{\bf{Step $3$.}} At last, we consider the identity $\partial_0 \Lambda(x_2) = 0$ using $\Lambda(x_2)$ obtained in \emph{Step 2}. This yields the following
ODE for the function $h$,
\[ \label{ric_n2}
\partial^2 h - 3 h \partial h + h (h^2 - 4)  = 0 \; ,
\]
where $h = h(x)$ and $\partial \coloneqq \frac{\dd}{\dd x}$.

Eq. \eqref{ric_n2} is a second-order non-linear ODE which can be regarded as a higher-order Riccati equation since it admits the same
kind of linearization map employed for Eq. \eqref{hh}. Then using the map $h(x) = - \partial \log{ \bar{g}(x)}$ we find that \eqref{ric_n2} turns into the
simple equation $\left( \partial^3 - 4 \partial \right) \bar{g}(x) = 0$. Although the latter is a third-order linear ODE, its structure allows one
to regard it as a second-order equation for the function $\partial \bar{g}(x)$. For convenience we write the solution of this equation as
\[ \label{gg_2}
\bar{g}(x) = A_2 \sinh{(w_1 - x)} \sinh{(w_2 - x)} \; ,
\]
which straightforwardly leads to 
\[ \label{h_2}
h(x) = \coth{(w_1 - x)} + \coth{(w_2 - x)} \; .
\]
In \eqref{gg_2} we have three integration constants, namely $A_2$, $w_1$ and $w_2$, which is a consequence of having $\bar{g}$ described by a third-order
linear ODE. Due to the particular form of our linearization map only two integration constants are transferred to formula \eqref{h_2}.
We can now replace solution \eqref{h_2} in the expression for $\Lambda$ found in \emph{Step 2}. By doing so we are left with the representation
\[ \label{sol2}
\Lambda (x) = \phi_1 \frac{a(w_1 - x)}{b(w_1 - x)} \frac{a(w_2 - x)}{b(w_2 - x)} \lambda_{\mathcal{A}}(x) +  \phi_2 \frac{a(x - w_1)}{b(x - w_1)} \frac{a(x - w_2)}{b(x - w_2)} \lambda_{\mathcal{D}}(x) \; .
\]

\begin{rema}
We have used repeated notation for the cases $n=1$ and $n=2$. In particular, for the functions $g$, $h$ and $\Lambda$, but we hope 
the context makes the distinction clear.
\end{rema}

So far in \Secref{sec:N2} we have taken the route of deriving differential equations for the function $h$. As a matter of fact this route yields a
much simpler derivation of the function $\Lambda$. Nevertheless, we can still write differential equations satisfied by the eigenvalue $\Lambda$ itself.
For that we consider the following alternative steps.
\medskip
\par{\bf{Step $\bar{1}$.}} Use the identity $\partial_0 \mathcal{F}_2 (x_1 , x_2) = 0$ to find an expression for $h(x_0) \coloneqq \partial_0 \mathrm{log} g(x_0)$
in terms of $\Lambda(x_1)$ and $\Lambda(x_2)$.
\medskip
\par{\bf{Step $\bar{2}$.}} Using the expression obtained in \emph{Step $\bar{1}$} we look at the condition $\partial_2 h(x_0) = 0$. This allows us to write
$\Lambda(x_1)$ in terms of $\Lambda(x_2)$ and $\partial_2 \Lambda(x_2)$.
\medskip
\par{\bf{Step $\bar{3}$.}} Next we substitute the expression for $\Lambda(x_1)$ obtained in \emph{Step $\bar{2}$} in the condition $\partial_2 \Lambda(x_1) = 0$.
This latter step leaves us with the non-linear ODE $\Sigma_2 = 0$.
\medskip

In its turn the differential function $\Sigma_2$ is given by
\<
\label{sig2}
\Sigma_2 &\coloneqq& \frac{(q-1)}{(q+1)} \left[ \xi_{0}(x) - 2 q \xi_1 (x) \Lambda (x)  \right] + 2 q (q -1)^2 \left[ \xi_2 (x) - 4 q \xi_3 (x) \Lambda(x)  \right] \partial \Lambda(x) \nonumber \\
&& + \; \frac{8 q^3}{(q^2 -1)} \left[ \xi_4 (x) - 8 q^2 \lambda_{-} (x) \Lambda(x) + 4 q (q^2 -1) \lambda_{+} (x)  \partial \Lambda(x)  \right] \Sigma_1 \nonumber \\
&& - \; 4 q^2 \left[ (q^2 +1) \xi_5 (x) + 4 q^2 \lambda_{+}(x)  \Lambda(x)  \right] \partial \Sigma_1 \nonumber \\
\>
with auxiliary functions $\xi_i (x)$ defined explicitly in \Appref{app:FUN}. In formula \eqref{sig2} we have defined $\Sigma_2$ in terms of the
differential function $\Sigma_1$ defined in \eqref{sig1}. The explicit substitution of \eqref{sig1} in \eqref{sig2} yields a non-linear differential equation for 
$\Lambda(x)$ which can be regarded as a higher-order version of the Riccati equation. However, it is important to remark here that \eqref{sig2}
is not the simplest differential equation satisfied by $\Lambda$ in the $n=2$ $\mathfrak{h}$-module. We shall return to this problem in 
\Secref{sec:6VC} where we also find a standard Riccati equation describing the eigenvalue $\Lambda$ in the sector $n=2$.

\subsection{Case $n=3$}  \label{sec:N3}

Our analysis of Lemma \ref{FIF0} for $n=3$ consists of a straightforward generalization of the $n=2$ case. For convenience we first introduce matrices
$\widetilde{\mathcal{V}}_1$, $\widetilde{\mathcal{V}}_2$ and $\widetilde{\mathcal{V}}_3$ fulfilling the condition
\[ \label{VbV}
\mathrm{det} \left( \mathcal{V}_i \right) = \frac{c \; b(x_0 - x_i)}{\prod_{j=1}^3 b(x_0 - x_j)^2} \mathrm{det} ( \widetilde{\mathcal{V}}_i ) \; .
\]
Although \eqref{VbV} seems to be an arbitrary condition, it is strongly supported by formulae \eqref{MI} and \eqref{VI}. In this way, we find
appropriate matrices $\widetilde{\mathcal{V}}_i$ with entries defined as
\<
( \widetilde{\mathcal{V}}_i )_{\alpha , \beta} \coloneqq \begin{cases}
( \mathcal{V}_i )_{\alpha , i} \; \displaystyle \prod_{j=1}^3 b(x_0 - x_j)  \qquad \qquad \beta = i \cr
( \mathcal{V}_i )_{\alpha , \beta} \; b(x_0 - x_{\beta}) \qquad \qquad \mathrm{otherwise}
\end{cases} \; .
\>
Moreover, using simple row/column transpositions one can readily show that  $\mathrm{det} ( \widetilde{\mathcal{V}}_i ) = \pi_{i,j} \mathrm{det} ( \widetilde{\mathcal{V}}_j )$.
Next we carefully inspect such determinants and the explicit evaluation of $\mathrm{det} ( \widetilde{\mathcal{V}}_1 )$ reveals that $\mathrm{det} ( \widetilde{\mathcal{V}}_1 ) \in \C [ x_0^{\pm 1} , x_2^{\pm 1} , x_3^{\pm 1}]$.
Since $\mathrm{det} ( \widetilde{\mathcal{V}}_i ) = \pi_{i,j} \mathrm{det} ( \widetilde{\mathcal{V}}_j )$ we can also conclude that 
$\mathrm{det} ( \widetilde{\mathcal{V}}_2 ) \in \C [ x_0^{\pm 1} , x_1^{\pm 1} , x_3^{\pm 1}]$ and $\mathrm{det} ( \widetilde{\mathcal{V}}_3 ) \in \C [ x_0^{\pm 1} , x_1^{\pm 1} , x_2^{\pm 1}]$.

Now using Lemma \ref{FIF0} we are left with the relation
\[ \label{fifj3}
\frac{\mathcal{F}_3 (X_i^0)}{\mathcal{F}_3 (X_j^0)} = \frac{b(x_0 - x_i)}{b(x_0 - x_j)} \frac{\mathrm{det} ( \widetilde{\mathcal{V}}_i )}{\mathrm{det} ( \widetilde{\mathcal{V}}_j )} \qquad \quad  i, j \in \{1 ,2 ,3 \} \; .
\]
In particular, since relation \eqref{fifj3} is fulfilled for $i, j \in \{1 ,2 ,3 \}$, one can readily conclude that
\[ \label{pf3}
\mathcal{F}_3 (X_i^0) = \frac{g}{\displaystyle \prod_{\substack{j=1 \\ j \neq i}}^3 b(x_0 - x_j)} \mathrm{det} ( \widetilde{\mathcal{V}}_i ) 
\]
for $i \in \{1 , 2 , 3 \}$ and a given function $g$ depending at most on $x_0$, $x_1$, $x_2$ and $x_3$. However, 
one can prove $g = g(x_0)$ along the same lines of Proposition \ref{G0}. Hence, we find no need to repeat such proof here.

Lemma \ref{FIF0} still contains more information. For instance, gathering our results so far we can also write
\[ \label{f3}
\mathcal{F}_3 (X) = c^{-1} g(x_0) \; \mathrm{det} (\mathcal{V}_0) \prod_{j=1}^3 b(x_0 - x_j) \; .
\]
In this way the complete determination of the multivariate function $\mathcal{F}_3$ is reduced to fixing the univariate functions
$g(x_0)$ and $\Lambda(x_i)$. The latter can be achieved through the following sequence of steps.

\medskip
\par{\bf{Step $1$.}} Substitute representation \eqref{f3} in the identify $\partial_0 \mathcal{F}_3 (X) = 0$ and use it to write
$\Lambda (x_1)$ in terms of $\Lambda (x_2)$, $\Lambda (x_3)$ and $h(x_0) \coloneqq \partial_0 \log{g(x_0)}$.

\medskip
\par{\bf{Step $2$.}} Implement the condition $\partial_0 \Lambda(x_1) = 0$ using the expression for $\Lambda(x_1)$ obtained in the previous
step. By doing so one finds an expression for $\Lambda (x_2)$ in terms of $\Lambda (x_3)$, $h(x_0)$ and its first derivative. 

\medskip
\par{\bf{Step $3$.}} Using the expression for $\Lambda (x_2)$ obtained in \emph{Step $2$} we consider the condition $\partial_0 \Lambda(x_2) = 0$.
This allows us to write $\Lambda (x_3)$ in terms of $h(x_0)$ and derivatives. 

\medskip
\par{\bf{Step $4$.}} Insert representation for $\Lambda (x_3)$ obtained in \emph{Step $3$} into the identity $\partial_0 \Lambda(x_3) = 0$. After simplifications
we uncover the following differential equation for the function $h$,
\[ \label{h3}
\partial^3 h - 4 h \partial^2 h -  \left( 10 - 6 h^2 + 3 \partial h \right) \partial h - (h^2 - 1)(h^2 - 9) = 0 \; .
\]

Although Eq. \eqref{h3} is a third order non-linear differential equation, it can be exactly solved using the same procedure described in 
\Secref{sec:N1} and \Secref{sec:N2}. For that we use again the map $h(x) = - \partial \log{\bar{g}(x)}$ which transforms 
\eqref{h3} into the linear equation $( \partial^2 - 1 ) ( \partial^2 - 9 ) \; \bar{g}(x) = 0$. The latter is a fourth order linear ODE
with solution given by
\[ \label{g3}
\bar{g} (x) = A_3 \sinh{(w1 - x)} \sinh{(w2 - x)} \sinh{(w3 - x)} \; .
\]
Solution \eqref{g3} contains four arbitrary integration constants, namely $A_3$, $w1$, $w2$ and $w3$; and by reversing our linearization
map  we find 
\[ \label{solh3}
h(x) = \coth{(w1 - x)} + \coth{(w2 - x)} + \coth{(w3 - x)} \; .
\]
The function $g$ entering formulae \eqref{pf3} and \eqref{f3} then follows straightforwardly from \eqref{solh3}. As far as the full determination
of the function $\mathcal{F}_3$ is concerned, we still need to fix the function $\Lambda$. For that we use the expression for $\Lambda$
obtained in \emph{Step $3$} which is given in terms of $h$ and its derivatives. Hence, by substituting \eqref{solh3} in the expression for 
$\Lambda$, we find at last
\< \label{sol3}
\Lambda (x) = \phi_1 \frac{a(w_1 - x)}{b(w_1 - x)} \frac{a(w_2 - x)}{b(w_2 - x)} \frac{a(w_3 - x)}{b(w_3 - x)}  \lambda_{\mathcal{A}}(x) +  \phi_2 \frac{a(x - w_1)}{b(x - w_1)} \frac{a(x - w_2)}{b(x - w_2)} \frac{a(x - w_3)}{b(x - w_3)}  \lambda_{\mathcal{D}}(x) \; . \nonumber \\
\>

In Sections \ref{sec:N1} and \ref{sec:N2} we have also presented non-linear differential equations satisfied by the eigenvalue $\Lambda$ itself.
They consist of (higher-order) Riccati equations but with rather involved expressions for their coefficients. 
Interestingly, the non-linear differential equations satisfied by the function $h$ are also of Riccati type, albeit with $\Z$-valued
coefficients. Here we shall not present differential equations for $\Lambda$ in the $n=3$ case but it is worth remarking that such equation
can be obtained using the alternative steps discussed in \Secref{sec:N2}.

\subsection{General case} \label{sec:NN}
In Sections \ref{sec:N0} through \ref{sec:N3} we have used a concrete example to illustrate how the auxiliary linear problem \eqref{aux}
can employed for solving non-linear functional equations encoded in Lemma \ref{det}. The functional equations we are considering describe 
the spectrum of the six-vertex model's transfer matrix and explicit examples are given by \eqref{egn1} and \eqref{egn2}.
The latter examples describe eigenvalues $\Lambda$ in the $n=1$ and $n=2$ $\mathfrak{h}$-modules respectively and a detailed analysis of those 
cases can be found in the previous subsections. 
As for such cases our analysis also reveals a close relation between our functional relations and Riccati non-linear ODEs.
Our approach has been implemented on a case-by-case basis and it is fair to say that an unified analysis for arbitrary $\mathfrak{h}$-modules
is still beyond our reach. Nevertheless, the results previously presented exhibit a clear pattern suggesting the following generalization for $n > 3$.

Let us introduce the set of matrices $\{ \widetilde{\mathcal{V}}_i \mid 1 \leq i \leq n \}$ with entries defined as
\< \label{tildeV}
( \widetilde{\mathcal{V}}_i )_{\alpha , \beta} \coloneqq \begin{cases}
( \mathcal{V}_i )_{\alpha , i} \; \displaystyle \prod_{j=1}^n b(x_0 - x_j)  \qquad \qquad \beta = i \cr
( \mathcal{V}_i )_{\alpha , \beta} \; b(x_0 - x_{\beta}) \qquad \qquad \mathrm{otherwise}
\end{cases} \; .
\>
From \eqref{tildeV} one can readily demonstrate the properties
\[
\mathrm{det} ( \widetilde{\mathcal{V}}_i ) = \pi_{i,j} \; \mathrm{det} ( \widetilde{\mathcal{V}}_j ) \quad \mbox{and} \quad \mathrm{det} \left( \mathcal{V}_i \right) = \frac{c \; b(x_0 - x_i)}{\prod_{j=1}^n b(x_0 - x_j)^2} \mathrm{det} ( \widetilde{\mathcal{V}}_i ) \; .
\]
In this way Lemma \ref{FIF0} leaves us with relations
\[ \label{fifjn}
\frac{\mathcal{F}_n (X_i^0)}{\mathcal{F}_n (X_j^0)} = \frac{b(x_0 - x_i)}{b(x_0 - x_j)} \frac{\mathrm{det} ( \widetilde{\mathcal{V}}_i )}{\mathrm{det} ( \widetilde{\mathcal{V}}_j )} \qquad \qquad \quad i, j \in \{1 ,2 , \dots , n \} \; .
\]

\begin{conjec} \label{jura} Let $\widetilde{\mathcal{V}}_i$ be a matrix with entries defined in \eqref{tildeV}. Then
\[
\mathrm{det} ( \widetilde{\mathcal{V}}_i ) \in \C [ x_0^{\pm 1} , x_1^{\pm 1}, \dots, \widehat{x_i}^{\pm 1}, \dots , x_n^{\pm 1}] \; .
\]
\end{conjec}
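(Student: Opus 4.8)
The plan is to split the statement into its two genuine contents---that $\mathrm{det}(\widetilde{\mathcal{V}}_i)$ has no poles (is truly a Laurent polynomial) and that it is free of $x_i$---and to treat them in that order. Before either, I would record two organising facts. First, composing the established relation $\mathrm{det}(\widetilde{\mathcal{V}}_i)=\pi_{i,j}\,\mathrm{det}(\widetilde{\mathcal{V}}_j)$ over a triple of indices gives $\pi_{j,k}\,\mathrm{det}(\widetilde{\mathcal{V}}_i)=\mathrm{det}(\widetilde{\mathcal{V}}_i)$ for all $j,k\ne i$; thus $\mathrm{det}(\widetilde{\mathcal{V}}_i)$ is symmetric in the spectator variables $\{x_j : 1\le j\le n,\ j\ne i\}$, and it suffices to prove the claim for one index, the general case following by applying $\pi_{1,i}$ (which turns ``no $x_1$'' into ``no $x_i$''). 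Second, by \eqref{tildeV}, \eqref{MI} and \eqref{mij} every entry of row $\alpha$ is $\pi_{0,\alpha}$ applied to a quantity built from the $\mathrm{M}$'s, the column rescalings being exactly those that clear $b(x_0-x_\beta)$ in column $\beta$; a short computation shows the distinguished column $i$ consists of the Laurent polynomials $c\,[\phi_1\lambda_{\mathcal{A}}(x_0)\prod_{j\ne\alpha}a(x_j-x_0)\pm\phi_2\lambda_{\mathcal{D}}(x_0)\prod_{j\ne\alpha}a(x_0-x_j)]$, a fact I would use repeatedly.

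For Laurent polynomiality I would examine the candidate poles, which lie on the loci $x_r=x_s$. The rescalings dispose of the family $x_0=x_\beta$, and the loci $x_0=x_i$ are harmless because column $i$ is polynomial and no other column is singular there. The substantial benign case is a coincidence $x_p=x_q$ of two spectator variables: exactly columns $p$ and $q$ acquire a simple pole, and computing the singular parts I would show, using $a(0)=c$ and $b(x_p-x_q)=-b(x_q-x_p)$, that $\mathrm{Res}\,\mathrm{M}_p$ and $\mathrm{Res}\,\mathrm{M}_q$ agree at the locus up to a sign, so the two singular columns are proportional with the $\alpha$-independent ratio $-b(x_0-x_p)/b(x_0-x_q)$. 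A determinant with two proportional columns vanishes, so the residue is zero and the pole removable. Matching the residues on the distinguished rows $\alpha\in\{p,q,i\}$, where $\pi_{0,\alpha}$ acts specially, is the part requiring care.

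Granting regularity in $x_i$, I would obtain independence from $x_i$ by a degree argument in $z_i:=e^{x_i}$. Expanding $\mathrm{det}(\widetilde{\mathcal{V}}_i)$ along column $i$: for $\alpha\ne i$ its entries behave like $a(x_i-x_0)\sim\tfrac12 e^{\gamma}e^{-x_0}z_i$ as $x_i\to+\infty$ (degree $+1$, with no $\lambda(x_i)$ or $\Lambda(x_i)$ present, since those would only enter through the suppressed entry $\mathrm{M}_{i,i}$), while $(\widetilde{\mathcal{V}}_i)_{i,i}$ is $x_i$-independent; dually, the entries of the complementary minor lying in row $i$ behave like $c/b(x_i-x_\beta)\sim 2c\,e^{x_\beta}z_i^{-1}$ (degree $-1$), all other entries tending to finite limits. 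Hence in each term of the Laplace expansion the degree $+1$ from column $i$ is offset by the degree $-1$ from row $i$ of the minor, so $\mathrm{det}(\widetilde{\mathcal{V}}_i)$ tends to a finite limit as $x_i\to+\infty$; the same bookkeeping with $q\mapsto q^{-1}$ gives a finite limit as $x_i\to-\infty$. A Laurent polynomial in $z_i$ with finite limits at $z_i\to0,\infty$ is constant, so $\mathrm{det}(\widetilde{\mathcal{V}}_i)$ is independent of $x_i$; combined with the spectator symmetry this yields $\mathrm{det}(\widetilde{\mathcal{V}}_i)\in\C[x_0^{\pm1},x_1^{\pm1},\dots,\widehat{x_i}^{\pm1},\dots,x_n^{\pm1}]$.

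The hard part will be the one case postponed above: regularity at the loci $x_p=x_i$ that involve the distinguished index. There only the single column $p$ is singular, so the cheap ``two proportional columns'' cancellation is unavailable, and one must instead show that the residue column lies in the span of the remaining columns evaluated at $x_p=x_i$---a genuine rank condition. Its validity encodes the Yang--Baxter structure of \eqref{MI}--\eqref{mij} and is precisely what the explicit evaluations for $n=2,3$ confirm; I would attempt it either by an induction on $n$ identifying $\mathrm{Res}_{x_p=x_i}\mathrm{det}(\widetilde{\mathcal{V}}_i)$ with a lower determinant of the same family, or by feeding in the relation $\mathcal{F}_n(X_i^0)=f_n\,\mathrm{det}(\mathcal{V}_i)$ of Lemma \ref{FIF0} together with the manifest $x_i$-independence of $\mathcal{F}_n(X_i^0)$ to constrain the pole structure of $f_n$. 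Making this final route rigorous without looping back through Proposition \ref{G0} (whose current proof already presupposes the conjecture) is the delicate point.
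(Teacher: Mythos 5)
First, a point of comparison: the paper does not prove this statement at all --- it is presented as Conjecture \ref{jura}, supported only by the explicit determinant evaluations carried out for $n=2$ and $n=3$ in Sections \ref{sec:N2} and \ref{sec:N3}, and the surrounding text concedes that a unified argument for general $n$ is ``still beyond our reach''. So any complete argument you supplied would go beyond the paper. Your overall strategy --- (i) reduce to one index via the $\pi_{i,j}$-covariance, (ii) establish regularity at each collision locus $x_r = x_s$, (iii) kill the $x_i$-dependence by a degree count in $z_i = e^{x_i}$ --- is sound in outline, and the pieces you do carry out (the polynomiality of the distinguished column $i$, the $z_i^{+1}/z_i^{-1}$ bookkeeping showing each term of the Laplace expansion has finite limits as $x_i \to \pm\infty$) check out against the explicit form of \eqref{MI}, \eqref{mij} and \eqref{tildeV}.

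However, the proposal is not a proof: you have explicitly deferred, and not closed, the step that carries essentially all of the content. At a locus $x_p = x_i$ only the single column $p$ is singular (its residue moreover vanishes in row $i$, where $\pi_{0,i}$ has removed the offending factor), so the residue of the determinant is itself an $n\times n$ determinant that must be shown to vanish by a genuine linear-dependence argument; nothing in your write-up establishes this, and it is exactly the identity that encodes the Yang--Baxter structure and that the paper only verifies by brute force for $n \leq 3$. The two escape routes you sketch are both problematic: the induction identifying the residue with a lower-$n$ determinant of the same family is not set up (the coefficients $\mathrm{M}_i$ at level $n$ are not shown to restrict to those at level $n-1$ on the collision locus), and invoking Lemma \ref{FIF0} together with the symmetry of $\mathcal{F}_n$ is circular, since the passage from \eqref{fifjn} to the factorized form of $\mathcal{F}_n(X_i^0)$ --- and the general-$n$ analogue of Proposition \ref{G0} --- is precisely what the conjecture is needed for, as you yourself note. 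A secondary, smaller gap: at spectator collisions $x_p = x_q$ you assert that the two singular columns have proportional residues with an $\alpha$-independent ratio, but the rows $\alpha \in \{p, q, i\}$ are governed by different branches of \eqref{mij}, and you acknowledge without performing it the check that the proportionality persists there. Until the $x_p = x_i$ residue identity is proved, the statement remains, as in the paper, a conjecture.
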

\medskip
Now equations \eqref{fifjn} together with Conjecture \ref{jura} implies in
\[ 
\mathcal{F}_n (X_i^0) = \frac{g(x_0)}{\displaystyle \prod_{\substack{j=1 \\ j \neq i}}^n b(x_0 - x_j)} \mathrm{det} ( \widetilde{\mathcal{V}}_i ) 
\]
for $i \in \{ 1, 2, \dots , n \}$. Moreover, from Lemma \ref{FIF0} we additionally have
\[ \label{fn}
\mathcal{F}_n (X) = c^{-1} g(x_0) \; \mathrm{det} (\mathcal{V}_0) \prod_{j=1}^n b(x_0 - x_j) \; .
\]
Hence, in order to determine solutions of the auxiliary linear problem, namely $\mathcal{F}_n$; we only need to fix functions  
$g$ and $\Lambda$. The latter can be obtained through the following sequence of recursive steps.

\medskip
\par{\bf{Step $1$.}} Substitute \eqref{fn} in the identity $\partial_0 \mathcal{F}_n (X) = 0$ and solve the resulting expression
for $\Lambda(x_1)$. This yields a formula for $\Lambda(x_1)$ in terms of $\Lambda(x_2)$, $\Lambda(x_3)$, $\dots$, $\Lambda(x_n)$ 
and $h(x_0) \coloneqq \partial_0 \log{g(x_0)}$.

\medskip
\par{\bf{Step $i$ ($2 \leq i \leq n$).}} Write expression for $\Lambda(x_i)$ using the identify  $\partial_0 \Lambda(x_{i-1}) = 0$ with
function $\Lambda(x_{i-1})$ obtained in \emph{Step $i-1$}.

\medskip
\par{\bf{Step $n+1$.}} Use expression for $\Lambda(x_n)$ obtained in \emph{Step $n$} to evaluate the trivial identity $\partial_0 \Lambda(x_{n}) = 0$.
This last step leaves one with a non-linear ODE for the function $h$.

The above described procedure involves a sequence of $n+1$ steps and, by executing it, we obtain a non-linear ODE for the function $h$. 
Here, however, we can only conjecture the explicit form of such equation for $n > 3$. For that we introduce the differential function $\Upsilon_n$ reading
\<
\Upsilon_n \coloneqq \begin{cases}
\displaystyle \partial \prod_{l=1}^{\frac{n}{2}} \left[ \partial^2 - (2 l)^2  \right] \qquad \quad\quad n \in 2 \Z \cr
\displaystyle \prod_{l=1}^{\frac{n+1}{2}} \left[ \partial^2 - (2 l - 1 )^2  \right] \qquad \quad n \in 2 \Z + 1
\end{cases} \; .
\>
In this way the aforementioned differential equation for $h$ reads 
\[ \label{ricN}
\Upsilon_n \; e^{- \int h(x) \dd x} = 0 \; ,
\]
with integration symbol being merely formal and simply used to denote the anti-derivative. The particular cases of \eqref{ricN}
worked out in Sections \ref{sec:N1}, \ref{sec:N2} and \ref{sec:N3} show the connection with Riccati type equations.
Moreover, Eq. \eqref{ricN} can be readily linearized through the map $h(x) = - \partial \log{\bar{g}(x)}$ and the solution of the resulting
equation is given by
\[
\bar{g}(x) = A_n \prod_{l=1}^n \sinh{(w_l - x)} \; .
\]
The corresponding function $h$ then reads
\[ \label{hn}
h(x) = \sum_{l=1}^n \coth{(w_l - x)} 
\]
from which one can readily write down the function $g$. As for the full determination of $\mathcal{F}_n$ we still need to fix the function
$\Lambda$. For convenience we then use expression for $\Lambda(x_n)$ obtained in \emph{Step $n$} and substitute formula \eqref{hn}. 
By doing so we are left with the following representation,
\[ \label{lambdaN}
\Lambda (x) = \phi_1 \prod_{l=1}^n \frac{a(w_l - x)}{b(w_l - x)} \lambda_{\mathcal{A}}(x) +  \phi_2 \prod_{l=1}^n \frac{a(x - w_l)}{b(x - w_l)}  \lambda_{\mathcal{D}}(x) \; .
\]

Some comments are in order at this stage. For instance, due to the definition of the transfer matrix \eqref{tmat}, its eigenvalues are a function of
the spectral parameter $x \in \C$. They also carry fixed complex parameters $\gamma$, $\mu_i$, $\phi_1$ and $\phi_2$. However, expression
\eqref{lambdaN} contains $n$ additional parameters, namely the set $\{ w_l \mid 1 \leq l \leq n \}$, which would then produce a continuous 
spectrum for \eqref{tmat}. That is clearly not the case since \eqref{tmat} is a matrix of dimension $2^L \times 2^L$. In fact, when looking to the 
transfer matrix's spectrum as solutions of functional/differential equations, one still needs to formulate it as an \emph{Initial or Boundary Value Problem}.
This will be discussed in \Secref{sec:DISCR}.

\subsection{Functions $\mathcal{F}_n$ as scalar products} \label{sec:scp}

As for the six-vertex model realization of the auxiliary linear problem \eqref{MF}, we can give a meaning to the functions $\mathcal{F}_n$.
For instance, from Theorem \ref{6V} we can see that $\mathcal{F}_n$ is the projection of a transfer matrix eigenvector $\ket{\Psi}$ on particular
vectors. However, our results are basis-independent in the sense that we have not chosen a particular representation for the eigenvector
$\ket{\Psi}$. In contrast to our approach, representations of such eigenvectors are fixed from the very beginning within the algebraic
Bethe ansatz method. In case one would like to link both approaches, this can be accomplished by simply setting 
\[ \label{aba}
\bra{\Psi} = \bra{0} \mathcal{C}(w_n) \; \mathcal{C}(w_{n-1}) \dots \mathcal{C}(w_1) 
\]
with parameters $w_i$ properly adjusted. Hence, by taking into account representation \eqref{aba}, one can precisely identify $\mathcal{F}_n$ with 
scalar products of Bethe vectors firstly studied in \cite{Korepin_1982} and \cite{Slavnov_1989}. In particular, the results presented here then make
contact with the continuous determinantal representations recently obtained in \cite{Galleas_2016c}.

\section{Non-linear PDEs} \label{sec:PDE}

In \Secref{sec:AUX} we have obtained non-linear ODEs from the analysis of the auxiliary linear problem \eqref{MI}. More precisely, in this section
we refer to equations \eqref{ricN} which can be regarded as higher-order analogues of Riccati equation.
Solutions $h$ can be neatly written as \eqref{hn} and they exhibit interesting features. For instance, let $\chi$ and $\tau$ denote \emph{spatial}
and \emph{temporal} coordinates respectively. Hence, by writing $x \eqqcolon \chi - \omega \tau$ one can regard $h$ as a superposition of travelling-waves
of equal amplitude and same velocity $\omega$. The \emph{phase} of such waves differs and they are characterized by the parameters $w_i$. 
In this way one can ask if there exist PDEs corresponding to the ODEs \eqref{ricN} within the travelling-wave ansatz. This relation between PDEs
and ODEs can also be found in the KdV equation and we shall describe it in what follows for illustrative purposes.

Let $f = f(\chi, \tau)$ be a solution of the 
KdV equation
\[ \label{kdv}
\partial_{\tau} f + 6 f \partial_{\chi} f + \partial_{\chi}^3 f = 0
\]
with $\partial_{\tau} \coloneqq \frac{\partial}{\partial \tau}$ and $\partial_{\chi} \coloneqq \frac{\partial}{\partial \chi}$.
Hence, using the travelling-wave ansatz $f(\chi , \tau) = \bar{f}(\chi - \omega \tau)$ we find that \eqref{kdv} reduces to
$ - \omega \partial \bar{f} + 6 \bar{f} \partial \bar{f} + \partial^3 \bar{f} = 0$. The latter can be integrated with respect to $x$ leaving us
with the following second-order equation,
\[ \label{rkdv}
- \omega \bar{f} + 3 \bar{f}^2 + \partial^2 \bar{f} = \theta \; .
\]
The parameter $\theta$ is an integration constant and we thus have a direct relation between solutions of \eqref{rkdv} and travelling-wave
solutions of \eqref{kdv} with particular boundary conditions. 

Now we would like to associate non-linear PDEs with equations \eqref{ricN} in the same manner as one has for the KdV equation. We then proceed
on a case-by-case basis keeping in mind that the connection with the six-vertex model for $n>3$ only holds up to Conjecture \ref{jura}.

\begin{thm}[Case $n=1$] \label{PDE_1}
Eq. \eqref{hh} describes particular travelling-wave solutions of the non-linear PDE 
\[ \label{pde_1}
\partial_{\tau}^2 \psi -  \omega^2  \partial_{\chi} \left( \psi^2 \right) = 0
\]
with velocity $\omega$.
\end{thm}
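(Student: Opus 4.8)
The plan is to mimic exactly the KdV computation carried out just above the statement, specialising the travelling-wave reduction to the PDE \eqref{pde_1}. First I would impose the travelling-wave ansatz $\psi(\chi,\tau) = \bar{\psi}(\chi - \omega \tau)$ and set $x \coloneqq \chi - \omega \tau$, so that $\partial_\chi \psi = \partial \bar\psi$ and $\partial_\tau \psi = -\omega\, \partial\bar\psi$, with $\partial \coloneqq \frac{\dd}{\dd x}$. Differentiating once more gives $\partial_\tau^2 \psi = \omega^2\, \partial^2 \bar\psi$, while the nonlinear term becomes $\partial_\chi(\psi^2) = \partial(\bar\psi^2)$. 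Substituting these into \eqref{pde_1} and cancelling the common factor $\omega^2$ (assumed nonzero) reduces the PDE to the ordinary differential equation $\partial^2 \bar\psi - \partial(\bar\psi^2) = 0$.

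Next I would integrate this reduced equation once with respect to $x$, which is immediate since every term is already a total derivative. This yields $\partial \bar\psi - \bar\psi^2 = \theta$ for an integration constant $\theta$, in direct analogy with the passage from the travelling-wave form of \eqref{kdv} to \eqref{rkdv}. At this point the identification with \eqref{hh} is transparent: setting $\bar\psi = h$ and fixing the integration constant to the particular value $\theta = -1$ turns the relation into $\partial h - h^2 = -1$, which is precisely $\partial_0 h + 1 = h^2$ after renaming $x \to x_0$. This establishes that solutions of \eqref{hh} are exactly the travelling-wave profiles of \eqref{pde_1} corresponding to that choice of boundary data, and hence that \eqref{pde_1} propagates them with velocity $\omega$.

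The computation is entirely routine, so the only point requiring care --- and the reason for the qualifier \emph{particular} in the statement --- is the role of the integration constant. One must observe that \eqref{hh} corresponds not to the general travelling wave but specifically to $\theta = -1$; a generic $\theta$ would instead produce a one-parameter family of Riccati equations of the form $\partial h - h^2 = \theta$, only one member of which is \eqref{hh}. I would therefore make explicit that the theorem asserts a correspondence with this distinguished slice of the solution set, parallel to how \eqref{rkdv} singles out travelling waves of \eqref{kdv} with prescribed boundary conditions through the constant $\theta$.
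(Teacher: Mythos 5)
Your proposal is correct and follows exactly the paper's route: substitute the travelling-wave ansatz, cancel $\omega^{2}$, integrate once in $x=\chi-\omega\tau$, and specialize the integration constant to $\theta=-1$ to recover \eqref{hh}. Your explicit remark that the qualifier \emph{particular} refers to this choice of $\theta$ is a useful clarification of what the paper leaves implicit, but it is the same argument.
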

\begin{proof}
Identify $\psi (\chi, \tau) = h(\chi - \omega \tau)$ in \eqref{pde_1} and integrate the resulting equation with respect to
$x = \chi - \omega \tau$ to find an ODE which can specialized to \eqref{hh}.
\end{proof}

\begin{thm}[Case $n=2$] \label{PDE_2}
Let $\psi = \psi (\chi, \tau)$ satisfy the following non-linear PDE,
\[ \label{pde_2}
\partial_{\tau}^3 \psi + \frac{3}{2} \omega^3 \partial_{\chi}^2 \left( \psi^2 \right) - \omega^3 \partial_{\chi} \left[ \psi (\psi^2 - 4)  \right] = 0 \; .
\]
Then particular travelling-wave solutions of velocity $\omega$ are described by Eq. \eqref{ric_n2}.
\end{thm}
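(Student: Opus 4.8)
The plan is to reproduce, for the cubic equation \eqref{pde_2}, the very same reduction that takes the KdV equation \eqref{kdv} to \eqref{rkdv} and that underlies the proof of Theorem \ref{PDE_1}. First I would impose the travelling-wave ansatz $\psi(\chi,\tau) = h(\chi - \omega \tau)$ and set $x \coloneqq \chi - \omega \tau$ together with $\partial \coloneqq \frac{\dd}{\dd x}$. Under this substitution each $\chi$-derivative acts as a plain $\partial$, whereas each $\tau$-derivative carries an extra factor $-\omega$; in particular $\partial_{\tau}^3 \psi = -\omega^3\,\partial^3 h$, while $\partial_{\chi}^2(\psi^2) = \partial^2(h^2)$ and $\partial_{\chi}[\psi(\psi^2-4)] = \partial(h^3 - 4h)$.

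Inserting these into \eqref{pde_2} and dividing by the common factor $-\omega^3$ (assuming $\omega \neq 0$) turns the PDE into the third-order ODE
\[
\partial^3 h - \frac{3}{2}\,\partial^2(h^2) + \partial(h^3 - 4h) = 0 \; .
\]
The left-hand side is manifestly a total $x$-derivative, so the next step is to integrate once with respect to $x$, which produces
\[
\partial^2 h - \frac{3}{2}\,\partial(h^2) + (h^3 - 4h) = \theta \; ,
\]
with $\theta$ an integration constant playing exactly the role of the constant in \eqref{rkdv}. Using $\partial(h^2) = 2 h\,\partial h$ and $h^3 - 4h = h(h^2 - 4)$, this is precisely $\partial^2 h - 3 h\,\partial h + h(h^2 - 4) = \theta$.

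Finally, I would isolate the \emph{particular} travelling-wave solutions by fixing the boundary data so that $\theta = 0$, which is the specialization alluded to in the statement and mirrors the way \eqref{rkdv} selects travelling-wave solutions of \eqref{kdv} with particular boundary conditions. With $\theta = 0$ the equation collapses onto \eqref{ric_n2}, completing the argument. I do not anticipate any genuine obstacle here, since the computation is entirely mechanical and parallel to Theorem \ref{PDE_1}; the only point deserving a word of justification is the choice of integration constant, namely the observation that \eqref{pde_2} is engineered precisely so that its once-integrated travelling-wave reduction coincides with \eqref{ric_n2} rather than with a $\theta$-deformed version of it.
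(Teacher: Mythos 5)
Your proposal is correct and follows essentially the same route as the paper: the paper's proof is simply ``analogous to Theorem~\ref{PDE_1}'', i.e.\ substitute the travelling-wave ansatz $\psi(\chi,\tau)=h(\chi-\omega\tau)$, integrate once in $x$, and specialize the integration constant to recover \eqref{ric_n2}. Your computation of the reduction and the identification of the ``particular'' solutions with the choice $\theta=0$ match exactly what the paper intends.
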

\begin{proof}
Analogous to the proof of Theorem \ref{PDE_1} but using \eqref{ric_n2} instead of \eqref{hh}.
\end{proof}

\begin{thm}[Case $n=3$]
Eq. \eqref{h3} corresponds to the reduced form of the non-linear PDE
\< \label{pde_3}
&& \omega^{-4} \partial_{\tau}^4 \psi +  \partial_{\chi} \left[ \psi^2 (10 - \psi^2) + (\partial_{\chi} \psi)^2  \right] \nonumber \\
&& \;\quad + \; 2 \partial_{\chi}^2 \left[ \psi (\psi^2 - 5) \right] - 2 \partial_{\chi}^3 \left( \psi^2 \right) = 0 
\>
capturing solutions $\psi(\chi, \tau) = h(\chi - \omega \tau)$.
\end{thm}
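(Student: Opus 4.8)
The plan is to reproduce, at one higher order, the reduction template already used in the proofs of Theorem~\ref{PDE_1} and Theorem~\ref{PDE_2}. First I would substitute the travelling-wave ansatz $\psi(\chi,\tau) = h(x)$ with $x \coloneqq \chi - \omega \tau$ into \eqref{pde_3}. The chain rule then replaces the partial derivatives by $\partial_\tau \mapsto - \omega \, \partial$ and $\partial_\chi \mapsto \partial$, where $\partial = \frac{\dd}{\dd x}$; in particular the leading term collapses to $\omega^{-4} \partial_\tau^4 \psi = \partial^4 h$, so that the explicit powers of $\omega$ cancel and the velocity drops out of the reduced equation entirely, consistent with the $\Z$-valued coefficients appearing in \eqref{h3}.

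The key structural observation is that, once evaluated at $\psi = h(x)$, the whole left-hand side of \eqref{pde_3} is a total $x$-derivative. Writing $\partial^4 h = \partial(\partial^3 h)$ and expanding the three flux terms $\partial_\chi[\psi^2(10-\psi^2) + (\partial_\chi \psi)^2]$, $2\partial_\chi^2[\psi(\psi^2 - 5)]$ and $-2\partial_\chi^3(\psi^2)$, one verifies that the result equals $\partial$ applied to the differential polynomial $\partial^3 h - 4 h \, \partial^2 h - 3 (\partial h)^2 + 6 h^2 \partial h - 10 \, \partial h + 10 h^2 - h^4$. Integrating once with respect to $x$ therefore lowers the order by one and introduces an integration constant $C$, yielding a third-order ODE. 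The computation is routine bookkeeping: the only care needed is in expanding the higher $\chi$-derivatives before collecting like terms.

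It then remains to specialize the integration constant. Comparing the integrated ODE with \eqref{h3}, one rewrites the quadratic and constant contributions as $-(h^2-1)(h^2-9) = -h^4 + 10 h^2 - 9$ and the terms linear in $\partial h$ as $-(10 - 6h^2 + 3\partial h)\,\partial h$, so that the two equations coincide precisely under the choice $C = 9$. This fixes the particular boundary conditions that single out the travelling-wave solutions captured by \eqref{h3}, in exact analogy with the role of the integration constant $\theta$ in the KdV reduction \eqref{rkdv}. I do not anticipate any genuine obstacle: the argument is identical in spirit to the $n=1$ and $n=2$ cases, and the sole subtlety is keeping track of the several derivative terms so that they collapse into the compact form of \eqref{h3}.
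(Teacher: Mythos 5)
Your proposal is correct and follows the same route as the paper: the paper's proof of this theorem simply invokes the argument of Theorems \ref{PDE_1} and \ref{PDE_2}, namely substituting $\psi(\chi,\tau)=h(\chi-\omega\tau)$, integrating once in $x$, and specializing the integration constant to recover \eqref{h3}. Your explicit verification that the reduced left-hand side is the total $x$-derivative of $\partial^3 h - 4h\,\partial^2 h - 3(\partial h)^2 + 6h^2\partial h - 10\,\partial h + 10h^2 - h^4$, with the constant fixed to $9$, checks out.
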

\begin{proof}
Same proof as for Theorems \ref{PDE_1} and \ref{PDE_2}.
\end{proof}

\begin{rema}
Explicit solutions of \eqref{ricN} are given by \eqref{hn} and $h(\chi - \omega \tau)$ solves \eqref{pde_1}, \eqref{pde_2} and \eqref{pde_3}
with the respective value of $n$.
\end{rema}

\section{Conserved quantities in the six-vertex model} \label{sec:6VC}

In \Secref{sec:QUANT} we have presented a generating function of conserved quantities associated with the auxiliary linear problem \eqref{MF}.
A realization of such linear problem describing the six-vertex model eigenvalue problem was then given in \eqref{MI}. The main result 
of \Secref{sec:QUANT} is Theorem \ref{CQ} which expresses the generating function $\Theta_{i,j}$ in terms of transport functions 
$\mathfrak{T}_{i \to j}$ defined by \eqref{tij}. In particular, although \eqref{gen} is rather simple and compact, it is worth remarking
that such formula is not limited to the eigenvalue problem of the six-vertex model discussed in the present paper.
Nevertheless, it is still worth having a closer look at such conserved quantities living in the spectrum of the six-vertex
model, namely the functions $\nabla_{\vec{r}}^{(i,j)} (x)$. This is the goal of this section and in what follows we shall inspect the explicit
expressions for the first terms of \eqref{expan} in the $\mathfrak{h}$-modules with labels $n = 1, 2$.

\subsection{Case $n=1$} The resolution of the auxiliary linear problem for $n = 1$ was discussed in details in \Secref{sec:N1} and one can use 
the explicit formulae \eqref{v0v1} to evaluate our conserved quantities. For instance, in that case we find
$\nabla_0^{(0,1)} (x) = \log{\left( \frac{\mathcal{G}_{+} (x)}{\mathcal{G}_{-} (x)} \right)}$ with
\< \label{delta0}
\mathcal{G}_{+} (x) &\coloneqq& \phi_1 \lambda_{\mathcal{A}}(x) \left[ \cosh{(\gamma - x)} \lambda_{-}(0) + \sinh{(\gamma - x)} \lambda_{-}^{\prime} (0)  \right] \nonumber \\
&& \;  + \; \phi_2 \lambda_{\mathcal{D}}(x) \left[ \cosh{(\gamma + x)} \lambda_{-}(0) + \sinh{(\gamma + x)} \lambda_{-}^{\prime} (0)  \right] \nonumber \\
&& \; - \; \Lambda(x) \left[ \cosh{(x)} \lambda_{-}(0) + \sinh{(x)} \lambda_{-}^{\prime} (0)  \right] \nonumber \\
\mathcal{G}_{-} (x) &\coloneqq& \lambda_{-}(0) \left[ \phi_1 \sinh{(x - \gamma )} \lambda_{\mathcal{A}}(x) + \phi_2 \sinh{(x + \gamma )} \lambda_{\mathcal{D}}(x) - \sinh{(x)} \Lambda(x) \right] \; . \nonumber \\
\>
In \eqref{delta0} we have written $f^{\prime} (x) \coloneqq \frac{\dd f(x)}{\dd x}$ for the derivative of a given function $f$.
The explicit formula for the next term $\nabla_1^{(0,1)} (x)$ is already quite extensive, although not prohibitive, for generic parameters
$\phi_i$ and $\mu_i$. We shall not present it here in order to avoid an overcrowded section. 
As for the quantity $\nabla_0^{(0,1)}$ we find important to remark that $\partial \nabla_0^{(0,1)} (x) \propto \Sigma_1$ with $\Sigma_1$ the  differential function defined previously in \eqref{sig1}.

\begin{rema}
The substitution of solution \eqref{sol1} in \eqref{delta0} yields the relation
\[
\mathrm{e}^{\nabla_0^{(0,1)}} = \coth{(w_1)} + \frac{\lambda_{-}^{\prime} (0)}{\lambda_{-}(0)} \; ,
\]
which corroborates $\nabla_0^{(0,1)}$ is a constant on the solutions manifold.
\end{rema}

\begin{rema}
Alternatively, one could have used the conserved quantity $\nabla_0^{(0,1)}$ to determine the solution $\Lambda$ without going through
the procedure described in \Secref{sec:N1}.
\end{rema}

\subsection{Case $n=2$} 
In that case Theorem \ref{CQ} gives us three generating functions of conserved quantities, namely $\Theta_{0, 1}$, $\Theta_{0, 2}$
and $\Theta_{1, 2}$. Such generating functions are defined in terms of transport functions through formula \eqref{gen}. Then, using expansion \eqref{expan}, one can find explicit
expressions for  conserved quantities $\nabla_{\vec{r}}^{(i,j)}$. Explicit expressions for our conserved quantities are already
quite extensive though, even for $\vec{r} = (0 , 0 )$, and we shall not present them here. Despite the latter issues their direct inspection reveals
$\nabla_{(0,0)}^{(0,1)} = \nabla_{(0,0)}^{(0,2)}$ while $\nabla_{(0,0)}^{(1,2)}$ is indeed genuine.

Such conserved quantities could also have been used for determining the solution $\Lambda$ without going through the procedure
of \Secref{sec:N2}. Although we are not presenting  explicit expressions for the aforementioned conserved quantities here, 
it is still worth having a closer look at the equations $\partial \nabla_{(0,0)}^{(0,1)} (x) = 0$ and $\partial \nabla_{(0,0)}^{(1,2)} (x) = 0$ 
originating from Theorem \eqref{CQ}. For instance, equation $\partial \nabla_{(0,0)}^{(1,2)} (x) = 0$ is readily satisfied
while $\partial \nabla_{(0,0)}^{(0,1)} (x) = 0$ corresponds to a standard Riccati equation. More precisely, as for the
latter we find
\[ \label{riccati2}
\sinh{(\gamma)} \bar{\mathcal{K}}(x) \; \partial \Lambda(x) = \mathcal{K}_0 (x)  + \mathcal{K}_1 (x)  \Lambda(x) + \mathcal{K}_2 (x)  \Lambda(x)^2 \; .
\]
The explicit form of the coefficients $\bar{\mathcal{K}}$ and $\mathcal{K}_i$ are quite involved for generic parameters
$\phi_i$ and $\mu_i$. Due to that we restrict ourselves to presenting such coefficients only for the specialization $\phi_1 = \phi_2 = 1$ and $\mu_i = 0$. 
In that case they drastically simplify and read as follow:
\< \label{kb}
\bar{\mathcal{K}}(x) &\coloneqq& \lambda_{\mathcal{A}}(x) \sinh{(x)} \left[ \Lambda_0 \sinh{(\gamma - x)} + \lambda_{\mathcal{A}}(0) \sinh{(\gamma + x)}  \right]  \nonumber \\
&& \;  + \;  \lambda_{\mathcal{D}}(x) \sinh{(x + \gamma)} \left[ \Lambda_0 \sinh{(x)} + \lambda_{\mathcal{A}}(0) \sinh{(x + 2\gamma)}  \right] 
\>

\< \label{k0}
\mathcal{K}_0 (x) &\coloneqq&  \lambda_{\mathcal{A}}(x)^2 \left[ \lambda_{\mathcal{A}}(0) \sinh{(x )}^2 - \Lambda_0 \sinh{(x - \gamma)}^2  \right]  \nonumber \\
&& \;  + \; \lambda_{\mathcal{D}}(x)^2 \left[ \lambda_{\mathcal{A}}(0) \sinh{(x + 2\gamma )}^2 - \Lambda_0 \sinh{(x + \gamma)}^2  \right]  \nonumber \\
&& \;  - \; \lambda_{\mathcal{A}}(x) \lambda_{\mathcal{D}}(x) \left[ \lambda_{\mathcal{A}}(0) \left( \cosh{(4 \gamma)} - \cosh{(2 \gamma)} \cosh{(2x + 2\gamma)} \right) \right. \nonumber \\
&& \qquad  \qquad\quad \;\qquad \left. - \; \Lambda_0 \left( \cosh{(4 \gamma)} - \cosh{(2 \gamma)} \cosh{(2x)} \right) \right] \nonumber \\
&& \;  + \; \sinh{(\gamma)} \cosh{(\gamma)} \left[ \lambda_{\mathcal{A}}^{\prime} (x) \lambda_{\mathcal{D}}(x) - \lambda_{\mathcal{A}}(x) \lambda_{\mathcal{D}}^{\prime }(x) \right] \nonumber \\
&& \qquad  \times  \left[ \lambda_{\mathcal{A}}(0) \left( \cosh{(2 \gamma)} - \cosh{(2x + 2\gamma)} \right)  - \Lambda_0 \left( \cosh{(2 \gamma)} - \cosh{(2x)} \right) \right] 
\\ \nonumber \\
\label{k1}
\mathcal{K}_1 (x) &\coloneqq& \lambda_{\mathcal{A}}^{\prime} (x) \sinh{(\gamma)} \sinh{(x)} \left[ \lambda_{\mathcal{A}}(0) \sinh{(\gamma + x)} + \Lambda_0 \sinh{(\gamma - x)} \right]  \nonumber \\
&& \;  + \; \lambda_{\mathcal{D}}^{\prime} (x) \sinh{(\gamma)} \sinh{(x + \gamma)} \left[ \Lambda_0 \sinh{(x)} - \lambda_{\mathcal{A}}(0) \sinh{(x + 2\gamma )}  \right] \nonumber \\
&& \;  + \; \lambda_{\mathcal{A}}(x) \left[ \lambda_{\mathcal{A}}(0) \left( \cosh{(2 \gamma)} - \cosh{(\gamma)} \cosh{( 2 x + \gamma)} \right) \right.  \nonumber \\
&& \qquad  \;\; \;\qquad - \; \left. \Lambda_0 \left( \cosh{(2 \gamma)} - \cosh{(\gamma)} \cosh{( 2 x - \gamma)} \right) \right] \nonumber \\
&& \;  + \; \lambda_{\mathcal{D}}(x) \left[ \lambda_{\mathcal{A}}(0) \left( \cosh{(2 \gamma)} - \cosh{(\gamma)} \cosh{( 2 x + 3 \gamma)} \right) \right. \nonumber \\
&& \qquad \;\; \;\qquad - \;  \left. \Lambda_0 \left( \cosh{(2 \gamma)} - \cosh{(\gamma)} \cosh{( 2 x + \gamma)} \right) \right]  
\\ \nonumber  \\
\label{k2}
\mathcal{K}_2 (x) &\coloneqq&  \lambda_{\mathcal{A}}(0) \sinh{(x + \gamma)}^2 - \Lambda_0 \sinh{(x)}^2  \; . \nonumber \\
\>
In Eqs. \eqref{kb} through \eqref{k2} we  have employed the notation $\Lambda_0 \coloneqq \Lambda(0)$. Also, we anticipate that 
having \eqref{riccati2} as a standard Riccati equation will pave the way for associating a stationary Schr\"odinger equation 
with the six-vertex model eigenvalue problem. This will be discussed in \Secref{sec:SCHROD}.

\section{Discretization of the spectrum} \label{sec:DISCR}

The potential well is a classical example of quantum mechanical system where boundary conditions lead to the discretization of the energy spectrum.
Boundary conditions play a similar role in our problem and in this section we aim to describe how they enter the previously discussed differential
description of the six-vertex model's spectrum. The requirement of such extra conditions can be anticipated since solutions of differential equations
are not unique in general. Unique solutions are then  characterized by initial or boundary value problems. As for the differential equations discussed in \Secref{sec:AUX}, such boundary conditions
will follow from the structure of the operators $\mathcal{A}$ and $\mathcal{D}$ forming the transfer matrix \eqref{tmat}.
The relevant structure of those operators has been already discussed in \cite{Galleas_2008} and \cite{Galleas_2015}. Here we only repeat that analysis
for the sake of completeness.

\begin{defi}[Change of variables] Let us introduce new variables $u$, $u_i$, $v_i$ and $q$ as
\begin{align}
u &\coloneqq e^{2 x} & u_i &\coloneqq e^{2 x_i} \nonumber \\
v_i & \coloneqq e^{2 \mu_i} & q &\coloneqq e^{\gamma} \; .
\end{align}
\end{defi}
Considering the six-vertex model $\mathcal{R}$-matrix \eqref{rmat} we then write
\[ \label{Rpol}
\mathcal{R}_{0 j} (x) = \bar{\mathcal{R}}_{0 j} (u) \eqqcolon \frac{1}{2} \begin{pmatrix}  u^{-\frac{1}{2}} \mathbb{a}_j (u) &  \mathbb{b}_j (u) \\ \mathbb{c}_j (u) &  u^{-\frac{1}{2}} \mathbb{d}_j (u) \end{pmatrix}
\]
in such a way that matrix representations $\mathbb{a}_j, \mathbb{b}_j, \mathbb{c}_j, \mathbb{d}_j \colon \C \to \mathrm{End}(\mathbb{V}^{\otimes L})$ are
explicitly given by
\begin{align} \label{pol_abcd}
\mathbb{a}_j (u) & = \begin{pmatrix} u q - q^{-1} & 0 \\ 0 & u - 1  \end{pmatrix}_j   & \mathbb{b}_j (u) & = \begin{pmatrix} 0 & 0 \\ q - q^{-1} & 0 \end{pmatrix}_j  \nonumber \\
\mathbb{c}_j (u) & = \begin{pmatrix} 0 & q - q^{-1} \\ 0 & 0 \end{pmatrix}_j   &  \mathbb{d}_j (u) & = \begin{pmatrix} u  - 1 & 0 \\ 0 & u q - q^{-1}  \end{pmatrix}_j \; . 
\end{align}
We have used tensor leg notation in \eqref{pol_abcd} and it is worth noticing that $\mathcal{R}_{0 j} (x - \mu_i) = \bar{\mathcal{R}}_{0 j} (u / w_i)$.

\smallskip
\noindent {\bf{Coproducts.}} Next we consider two copies of the Yang-Baxter algebra $\AR$ associated with the same six-vertex model 
$\mathcal{R}$-matrix \eqref{rmat}. Analogously to \eqref{ABCD}, we write
\< \label{ABCD12}
\mathcal{L}^{(1)} = \begin{pmatrix}  \mathcal{A}^{(1)} & \mathcal{B}^{(1)}  \\ \mathcal{C}^{(1)}  & \mathcal{D}^{(1)}  \end{pmatrix} \qquad \mbox{and} \qquad \mathcal{L}^{(2)} = \begin{pmatrix}  \mathcal{A}^{(2)}  & \mathcal{B}^{(2)}  \\ \mathcal{C}^{(2)}  & \mathcal{D}^{(2)}  \end{pmatrix}  
\>
for the generators of each copy. 
Coproducts $\Delta \colon \AR \to \AR \otimes \AR$ are then given by the standard formulae
\begin{align} \label{coprod}
\Delta (\mathcal{A}) &= \mathcal{A}^{(1)} \otimes   \mathcal{A}^{(2)}  + \mathcal{B}^{(1)} \otimes  \mathcal{C}^{(2)}  & \Delta (\mathcal{B}) &= \mathcal{A}^{(1)} \otimes   \mathcal{B}^{(2)}  + \mathcal{B}^{(1)} \otimes  \mathcal{D}^{(2)}  \nonumber \\
\Delta (\mathcal{C}) &= \mathcal{C}^{(1)} \otimes   \mathcal{A}^{(2)}  + \mathcal{D}^{(1)} \otimes  \mathcal{C}^{(2)}  & \Delta (\mathcal{D}) &= \mathcal{C}^{(1)} \otimes   \mathcal{B}^{(2)}  + \mathcal{D}^{(1)} \otimes  \mathcal{D}^{(2)}   \; .
\end{align}

\begin{lem} Let the pair $(\mathbb{V}^{\otimes L} , \widetilde{\mathcal{T}}_0 )$ be the $\AR$-module defined in \eqref{mono} associated 
with the six-vertex model $\mathcal{R}$-matrix \eqref{rmat}. Then, recalling the representation map 
$\pi \colon \AR \to \mathrm{End} (\mathbb{V}^{\otimes L})$, we have
\begin{align} \label{polAD}
\pi \left( \mathcal{A}(x) \right) & = u^{- \frac{L}{2}} \bar{\mathcal{A}}(u) & \pi \left( \mathcal{B}(x) \right) &= u^{- \frac{(L-1)}{2}} \bar{\mathcal{B}}(u)  \nonumber \\
\pi \left( \mathcal{C}(x) \right) &= u^{- \frac{(L-1)}{2}} \bar{\mathcal{C}}(u) & \pi \left( \mathcal{D}(x) \right) &= u^{- \frac{L}{2}} \bar{\mathcal{D}}(u) \; ,
\end{align}
where $\bar{\mathcal{A}}(u)$, $\bar{\mathcal{B}}(u)$, $\bar{\mathcal{C}}(u)$ and  $\bar{\mathcal{D}}(u)$ are polynomials in $u$.
In particular,  $\bar{\mathcal{A}}(u)$ and $\bar{\mathcal{D}}(u)$ are of degree $L$ while $\bar{\mathcal{B}}(u)$ and $\bar{\mathcal{C}}(u)$
are of degree $L-1$.
\end{lem}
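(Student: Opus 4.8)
The plan is to prove the statement by induction on the number of sites $L$, using the coproduct structure \eqref{coprod} to reduce a length-$L$ chain to a length-$(L-1)$ chain tensored with a single site. Equivalently, one peels off the last factor in the ordered product \eqref{mono}, writing $\pi(\widetilde{\mathcal{T}}_0^{(L)}) = \pi(\widetilde{\mathcal{T}}_0^{(L-1)})\,\bar{\mathcal{R}}_{0L}(u/v_L)$ in the auxiliary space (recall $\mathcal{R}_{0L}(x-\mu_L) = \bar{\mathcal{R}}_{0L}(u/v_L)$), so that each generator of the length-$L$ algebra is expressed through those of the length-$(L-1)$ algebra via exactly the bilinear combinations recorded in \eqref{coprod}.

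For the base case $L=1$ one has $\pi(\widetilde{\mathcal{T}}_0) = \bar{\mathcal{R}}_{01}(u/v_1)$, and reading off the four auxiliary entries of \eqref{Rpol} together with the explicit matrices \eqref{pol_abcd} gives $\pi(\mathcal{A}) = u^{-1/2}\cdot\tfrac{1}{2}v_1^{1/2}\mathbb{a}_1(u/v_1)$ and $\pi(\mathcal{D}) = u^{-1/2}\cdot\tfrac{1}{2}v_1^{1/2}\mathbb{d}_1(u/v_1)$, both of the form $u^{-1/2}$ times a degree-one polynomial, while $\pi(\mathcal{B}) = \tfrac{1}{2}\mathbb{b}_1$ and $\pi(\mathcal{C}) = \tfrac{1}{2}\mathbb{c}_1$ are $u$-independent. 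This matches \eqref{polAD} with $L=1$.

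For the inductive step I would feed the inductive hypotheses into each line of \eqref{coprod}: for the length-$(L-1)$ block, $\pi(\mathcal{A}^{(1)})$ and $\pi(\mathcal{D}^{(1)})$ equal $u^{-(L-1)/2}$ times a degree-$(L-1)$ polynomial, while $\pi(\mathcal{B}^{(1)})$ and $\pi(\mathcal{C}^{(1)})$ equal $u^{-(L-2)/2}$ times a degree-$(L-2)$ polynomial; for the single site, $\pi(\mathcal{A}^{(2)})$ and $\pi(\mathcal{D}^{(2)})$ equal $u^{-1/2}$ times a degree-one polynomial and $\pi(\mathcal{B}^{(2)}),\pi(\mathcal{C}^{(2)})$ are constants. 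The only point to check is the power counting. In $\Delta(\mathcal{A}) = \mathcal{A}^{(1)}\otimes\mathcal{A}^{(2)} + \mathcal{B}^{(1)}\otimes\mathcal{C}^{(2)}$ the two summands carry prefactors $u^{-(L-1)/2-1/2}=u^{-L/2}$ and $u^{-(L-2)/2}=u^{-L/2}\cdot u$, so a common factor $u^{-L/2}$ comes out and leaves polynomials of degrees $L$ and $L-1$ respectively; the same bookkeeping applied to $\Delta(\mathcal{D})$, $\Delta(\mathcal{B})$ and $\Delta(\mathcal{C})$ yields prefactors $u^{-L/2}$, $u^{-(L-1)/2}$, $u^{-(L-1)/2}$ with polynomial degrees $L$, $L-1$, $L-1$, exactly as claimed. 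The mechanism is uniform: the off-diagonal generators $\mathcal{B},\mathcal{C}$ always carry a prefactor one power of $u$ higher than $\mathcal{A},\mathcal{D}$, and this mismatch is precisely absorbed by the extra $u$ (or $u^0$) appearing in the cross term, which is what renders the two summands homogeneous in the prefactor.

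The main obstacle is not the recursion but showing the stated degrees are attained, i.e. that the leading coefficients do not vanish. For $\mathcal{A}$ and $\mathcal{D}$ this is immediate, since the top-degree term comes from a single summand ($\mathcal{A}^{(1)}\otimes\mathcal{A}^{(2)}$, respectively $\mathcal{D}^{(1)}\otimes\mathcal{D}^{(2)}$) and is a tensor product of the nonzero leading coefficients supplied by the inductive hypothesis. For $\mathcal{B}$ and $\mathcal{C}$ both summands reach the top degree $L-1$, so one must rule out cancellation; here I would invoke the explicit structure of \eqref{pol_abcd}, namely that $\mathbb{b}_L,\mathbb{c}_L$ act strictly off-diagonally on the new site $\mathbb{V}_L$ whereas the leading coefficients of $\mathbb{a}_L,\mathbb{d}_L$ act diagonally. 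The two contributions therefore lie in linearly independent components of $\mathrm{End}(\mathbb{V}_L)$ and cannot cancel, so the degree is exactly $L-1$. This completes the induction and establishes \eqref{polAD}.
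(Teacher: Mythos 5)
Your proposal is correct and follows essentially the same route as the paper: induction on $L$, peeling off the last site of the ordered product \eqref{mono} so that the coproducts \eqref{coprod} yield the recursion expressing $\mathcal{A}_L,\mathcal{B}_L,\mathcal{C}_L,\mathcal{D}_L$ through the length-$(L-1)$ generators multiplied by $\mathbb{a}_L,\mathbb{b}_L,\mathbb{c}_L,\mathbb{d}_L$, with the base case read off from \eqref{Rpol} and \eqref{pol_abcd}. Your power counting matches the paper's, and your explicit non-cancellation argument for $\mathcal{B}$ and $\mathcal{C}$ (diagonal versus strictly off-diagonal leading coefficients on the new site) supplies a detail the paper leaves implicit in its phrase ``by induction on $L$ using \eqref{pol_abcd}.''
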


\begin{proof}
Write $\V^{\otimes L} = \V^{(1)} \otimes \V^{(2)}$ with $\V^{(1)} = \V^{\otimes (L-1)}$ and $\V^{(2)} = \V$; and consider sub-modules
$(\mathbb{V}^{(1)} , \mathcal{L}^{(1)} )$ and $(\mathbb{V}^{(2)} , \mathcal{L}^{(2)} )$ with similar definition as \eqref{mono}.
Also, let us add the index $L$ to $\mathcal{A}$, $\mathcal{B}$, $\mathcal{C}$ and $\mathcal{D}$ in order to emphasize the associated
diagonalizable module $\V_{\mathcal{Q}} = \V^{\otimes L}$. More precisely, we write $\mathcal{A}_L$, $\mathcal{B}_L$, $\mathcal{C}_L$
and $\mathcal{D}_L$ to denote the matrix entries of \eqref{ABCD}.
Then using the coproducts \eqref{coprod} we find the following recursion relations,
\< \label{recur}
\mathcal{A}_L &=& \frac{1}{2} \left[ u^{-\frac{1}{2}} \left( \mathcal{A}_{L-1} \otimes 1 \right) \mathbb{a}_L (u) + \left( \mathcal{B}_{L-1} \otimes 1 \right) \mathbb{c}_L (u) \right] \nonumber \\
\mathcal{B}_L &=& \frac{1}{2} \left[ \left( \mathcal{A}_{L-1} \otimes 1 \right) \mathbb{b}_L (u) + u^{-\frac{1}{2}} \left( \mathcal{B}_{L-1} \otimes 1 \right) \mathbb{d}_L (u) \right] \nonumber \\
\mathcal{C}_L &=& \frac{1}{2} \left[ u^{-\frac{1}{2}} \left( \mathcal{C}_{L-1} \otimes 1 \right) \mathbb{a}_L (u) + \left( \mathcal{D}_{L-1} \otimes 1 \right) \mathbb{c}_L (u) \right] \nonumber \\
\mathcal{D}_L &=& \frac{1}{2} \left[ \left( \mathcal{C}_{L-1} \otimes 1 \right) \mathbb{b}_L (u) + u^{-\frac{1}{2}}  \left( \mathcal{D}_{L-1} \otimes 1 \right) \mathbb{d}_L (u) \right] \; .
\>
For $L=1$ we simply find that \eqref{polAD} holds by comparing \eqref{mono}, \eqref{ABCD}, \eqref{Rpol} and \eqref{pol_abcd}. The proof of \eqref{polAD}
for $L > 1$ then follows from \eqref{recur} by induction on $L$ using \eqref{pol_abcd}.
\end{proof}

\begin{cor}
From \eqref{tmat} and \eqref{polAD} we can conclude that $\pi \left( \mathrm{T}(x) \right) = u^{- \frac{L}{2}} \bar{\mathrm{T}}(u)$ with $\bar{\mathrm{T}}(u)$ 
a polynomial in $u$ of degree $L$.
\end{cor}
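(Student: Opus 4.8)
The plan is to treat this as a direct computation, since the two preceding Lemmas carry essentially all the weight. First I would use that the representation map $\pi \colon \AR \to \text{End}(\V^{\otimes L})$ is linear, so that applying it to the definition \eqref{tmat} of the transfer matrix gives $\pi(\mathrm{T}(x)) = \phi_1\,\pi(\mathcal{A}(x)) + \phi_2\,\pi(\mathcal{D}(x))$. Substituting the two relevant identities from \eqref{polAD}, namely $\pi(\mathcal{A}(x)) = u^{-\frac{L}{2}}\bar{\mathcal{A}}(u)$ and $\pi(\mathcal{D}(x)) = u^{-\frac{L}{2}}\bar{\mathcal{D}}(u)$, the common prefactor $u^{-\frac{L}{2}}$ factors out, and I would simply set $\bar{\mathrm{T}}(u) \coloneqq \phi_1\,\bar{\mathcal{A}}(u) + \phi_2\,\bar{\mathcal{D}}(u)$, so that $\pi(\mathrm{T}(x)) = u^{-\frac{L}{2}}\,\bar{\mathrm{T}}(u)$ as claimed.

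It then remains to pin down the degree. Since the Lemma guarantees that $\bar{\mathcal{A}}(u)$ and $\bar{\mathcal{D}}(u)$ are each polynomials in $u$ of degree $L$, their linear combination $\bar{\mathrm{T}}(u)$ is automatically a polynomial of degree at most $L$. To upgrade ``at most $L$'' to ``exactly $L$'' I would read off the leading coefficients directly from the recursion \eqref{recur}. The only contributions to the top power $u^L$ in $\bar{\mathcal{A}}_L$ and $\bar{\mathcal{D}}_L$ come from the diagonal blocks $\mathbb{a}_L$ and $\mathbb{d}_L$ of \eqref{pol_abcd}, whose leading terms are $u\,\mathrm{diag}(q,1)$ and $u\,\mathrm{diag}(1,q)$; the off-diagonal blocks $\mathbb{b}_L$ and $\mathbb{c}_L$ are constant and therefore lower the degree, so they never reach order $u^L$. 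Iterating the recursion one sees that the $u^L$-coefficients of $\bar{\mathcal{A}}_L$ and $\bar{\mathcal{D}}_L$ are nonzero scalar multiples $\kappa_{\mathcal{A}}\,\mathrm{diag}(q,1)^{\otimes L}$ and $\kappa_{\mathcal{D}}\,\mathrm{diag}(1,q)^{\otimes L}$ respectively.

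The main (and really the only) obstacle is thus to check that these two leading coefficients do not cancel inside $\phi_1\,\bar{\mathcal{A}} + \phi_2\,\bar{\mathcal{D}}$. I would settle this by evaluating the resulting diagonal operator on the highest-weight vector $\ket{0} = (e_1)^{\otimes L}$ of \Secref{sec:hwm}: there $\mathrm{diag}(q,1)^{\otimes L}$ acts as $q^L$ and $\mathrm{diag}(1,q)^{\otimes L}$ acts as $1$, so the $u^L$-coefficient equals $\phi_1\,\kappa_{\mathcal{A}}\,q^{L} + \phi_2\,\kappa_{\mathcal{D}}$ with $\kappa_{\mathcal{A}},\kappa_{\mathcal{D}}\neq 0$, which is nonzero for generic $\phi_1,\phi_2,q$. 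Hence the leading coefficient matrix of $\bar{\mathrm{T}}(u)$ is nonzero and $\bar{\mathrm{T}}(u)$ has degree exactly $L$, completing the argument; every remaining manipulation is a routine substitution requiring no input beyond the two preceding Lemmas.
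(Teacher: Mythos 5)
Your proposal is correct and follows the same route the paper takes: the corollary is immediate from linearity of $\pi$ applied to \eqref{tmat} together with the two identities of \eqref{polAD}, and the paper in fact states it without further argument. Your additional verification that the leading coefficients of $\phi_1\,\bar{\mathcal{A}}(u)+\phi_2\,\bar{\mathcal{D}}(u)$ do not cancel (so the degree is exactly $L$ rather than at most $L$ for generic $\phi_1,\phi_2,q$) is a point the paper glosses over, and your reading of the top-degree terms of $\mathbb{a}_L$ and $\mathbb{d}_L$ from \eqref{recur} and \eqref{pol_abcd} is accurate.
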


\begin{lem} \label{BVP}
The eigenvalues $\Lambda(x)$ of the transfer matrix \eqref{tmat} are of the form $\Lambda(x) = u^{- \frac{L}{2}} \bar{\Lambda}(u)$, where 
$\bar{\Lambda}(u)$ is a polynomial in $u$ of maximum degree $L$.
\end{lem}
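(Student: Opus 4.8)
The plan is to read off the result directly from the Corollary immediately preceding the statement, which supplies the polynomial structure $\pi\left(\mathrm{T}(x)\right) = u^{-\frac{L}{2}} \bar{\mathrm{T}}(u)$ with $\bar{\mathrm{T}}(u)$ a matrix polynomial in $u$ of degree $L$, and to combine it with the $x$-independence of the transfer matrix eigenvectors. First I would recall that the commutativity $[\mathrm{T}(x),\mathrm{T}(y)] = 0$ for all $x,y$, established earlier from the Yang-Baxter algebra \eqref{yba}, guarantees a common eigenbasis that may be chosen independent of the spectral parameter. Fixing such an eigenvector $\ket{\Psi}$, which is therefore $u$-independent, together with its eigenvalue $\Lambda(x)$, one has $\mathrm{T}(x)\ket{\Psi} = \Lambda(x)\ket{\Psi}$.

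Applying the representation map $\pi$ and inserting the Corollary, the eigenvalue equation becomes $\bar{\mathrm{T}}(u)\ket{\Psi} = u^{\frac{L}{2}}\Lambda(x)\ket{\Psi}$. The next step is to set $\bar{\Lambda}(u) \coloneqq u^{\frac{L}{2}}\Lambda(x)$ and to show it is a polynomial in $u$ of degree at most $L$; the factor $u^{L/2}$ here is purely a bookkeeping device, since it recombines with the $u^{-L/2}$ carried by $\mathrm{T}$ and the final object is an honest polynomial. Writing $\bar{\mathrm{T}}(u) = \sum_{k=0}^{L} u^{k}\,\mathrm{T}_k$ for $u$-independent matrices $\mathrm{T}_k$, I would select any index $m$ for which the component $\Psi_m$ of $\ket{\Psi}$ is non-zero. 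The $m$-th component of $\bar{\mathrm{T}}(u)\ket{\Psi}$ then equals $\bar{\Lambda}(u)\,\Psi_m$, and since the left-hand side is manifestly a polynomial in $u$ of degree at most $L$ (a sum of $u^k$ times constant entries), dividing by the non-zero constant $\Psi_m$ exhibits $\bar{\Lambda}(u)$ as exactly such a polynomial. This yields $\Lambda(x) = u^{-\frac{L}{2}}\bar{\Lambda}(u)$ with $\deg\bar{\Lambda} \leq L$, as claimed.

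The only genuinely delicate point is the $u$-independence of $\ket{\Psi}$: without it, $\bar{\Lambda}(u)$ would merely be an eigenvalue of a matrix with polynomial entries, hence in general only an algebraic function of $u$ rather than a polynomial. This is precisely what the commutativity of the transfer matrices resolves, so no additional work is needed there. The remainder is a routine component extraction, and the degree bound is inherited directly from the degree-$L$ polynomiality of $\bar{\mathrm{T}}(u)$ furnished by the Corollary — the word \emph{maximum} in the statement simply accommodates possible cancellations of the leading coefficient for particular eigenvalues.
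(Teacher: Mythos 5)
Your proof is correct and follows the same route as the paper: commutativity of the transfer matrices gives a spectral-parameter-independent eigenbasis, and the polynomiality of $\bar{\Lambda}(u)$ with $\deg \bar{\Lambda} \leq L$ is then inherited from the degree-$L$ polynomial $\bar{\mathrm{T}}(u)$. Your component-extraction step merely makes explicit the paper's remark that ``the dependence of $\bar{\Lambda}(u)$ on $u$ is solely due to $\bar{\mathrm{T}}(u)$,'' which is a welcome clarification but not a different argument.
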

\begin{proof}
Due to the Yang-Baxter equation \eqref{ybe} satisfied by the $\mathcal{R}$-matrix \eqref{rmat} we have the property 
$\left[ \mathrm{T}(x), \mathrm{T}(y)  \right] = 0$. Hence, there exists a basis of eigenvectors of $\mathrm{T}(x)$ independent of $x$.
The renormalized eigenvalue equation then reads $\bar{\mathrm{T}}(u) \ket{\Psi} = \bar{\Lambda}(u) \ket{\Psi}$ and the dependence of 
$\bar{\Lambda}(u)$ on $u$ is solely due to $\bar{\mathrm{T}}(u)$. Therefore, $\bar{\Lambda}$ is necessarily a polynomial and its maximum 
degree equals the degree of $\bar{\mathrm{T}}$. This concludes our proof and establishes boundary conditions associated with
the differential equations described in \Secref{sec:AUX}.
\end{proof}

Lemma \ref{BVP} completes the boundary value problem for the eigenvalues of the six-vertex model's transfer matrix \eqref{tmat}.
In particular, notice the function $\Lambda(x)$ given by \eqref{lambdaN} exhibits poles when $x \to w_i$. However, as polynomials, the 
eigenvalues $\Lambda$ can only exhibit poles at infinity. Therefore, the poles at $x = w_i$ must be a removable singularity and the residue
of \eqref{lambdaN} at these poles must vanish. The latter condition imposes the following constraint on the parameters $w_i$, namely
\[ \label{BAE}
\frac{\lambda_{\mathcal{A}}(w_i)}{\lambda_{\mathcal{D}}(w_i)} = (-1)^{n+1} \frac{\phi_2}{\phi_1} \prod_{\substack{j = 1 \\ j \neq i}}^{n} \frac{a(w_i - w_j)}{a(w_j - w_i)}  \; .
\]
As expected, the constraint \eqref{BAE} can be identified with standard \emph{Bethe ansatz equations}. 

\begin{rema}
There exist several solutions of \eqref{BAE} and each solution will then describe a different eigenvalue $\Lambda$.
Hence, through our approach we can clearly see the discretization of the transfer matrix's spectrum as a consequence of
the polynomial structure described in Lemma \ref{BVP}.
\end{rema}

\begin{rema}
The combined formulae \eqref{lambdaN} and \eqref{BAE} are very well known in the literature and they are usually obtained 
through \emph{Bethe ansatz} in all of its versions. However, within our approach the parameters $w_i$ are not introduced as an \emph{ansatz}. Here they are integration constants and follow
naturally from the resolution of differential equations. In particular, the number of such parameters is a consequence of the order of the associated
differential equation.
\end{rema}

\section{From Riccati to Schr\"odinger} \label{sec:SCHROD}

Riccati equation, albeit non-linear, has a close connection with Sturm-Liouville problems. Moreover, when the coefficients of a Riccati
equation are properly fixed, one can find a rather simple map between Riccati and Schr\"odinger equations. This is what we intend to 
discuss in this section regarding  equations \eqref{riccati} and \eqref{riccati2}.

Eq. \eqref{riccati} describes the transfer matrix's eigenvalue $\Lambda$ in the $n=1$ $\mathfrak{h}$-module sector. In that case one can use the map
$\Lambda(x) = \alpha_1(x) \partial \log{(\psi(x))} + \beta_1(x)$ with
\<
\alpha_1 (x) &\coloneqq& \lambda_{-} (x) \sinh{(\gamma)} \nonumber \\
\beta_1 (x) &\coloneqq& \lambda_{+} (x) \cosh{(\gamma)} 
\>
to find the equation $\partial^2 \psi(x) - \psi(x) = 0$. The latter can be regarded as a Schr\"odinger equation with constant potential.

As for the $\mathfrak{h}$-module with label $n=2$ the situation is fortunately much more interesting. We proceed in an analogous manner and substitute
$\Lambda(x) = \alpha_2(x) \partial \log{(\psi(x))} + \beta_2(x) \bar{\beta_2}(x)$ in Eq. \eqref{riccati2}. The choices
\<
\alpha_2 &\coloneqq& \sinh{(\gamma)} \left( \Lambda_0 \sinh{(x)}^2 - \lambda_{\mathcal{A}}(0) \sinh{(x+\gamma)}^2   \right)^{-1} \nonumber \\
&& \times \left\{ \lambda_{\mathcal{A}}(x) \sinh{(x)} \left[ \Lambda_0 \sinh{(\gamma - x)} + \lambda_{\mathcal{A}}(0) \sinh{(\gamma + x)} \right]  \right. \nonumber \\ 
&& \quad + \;   \left. \lambda_{\mathcal{D}}(x) \sinh{(x+\gamma)} \left[ \Lambda_0 \sinh{(x)} - \lambda_{\mathcal{A}}(0) \sinh{(2\gamma + x)} \right]  \right\} \nonumber \\ 
\nonumber \\
\beta_2 &\coloneqq& \lambda_{\mathcal{A}}(x) \left\{ \sinh{(x)} \cosh{(\gamma)} \left[ \lambda_{\mathcal{A}}(0)^2 \sinh{(x+\gamma)}^3 - \Lambda_0^2 \sinh{(\gamma - x)} \sinh{(x)}^2 \right] \right. \nonumber \\
&& \qquad \qquad + \; 16^{-1} \lambda_{\mathcal{A}}(0) \Lambda_0 \left[ 1 - 2 \sinh{(2 \gamma)} \sinh{(4 x)} \right. \nonumber \\ 
&& \qquad \qquad \quad \;\; + \; 4 (\cosh{ (2 \gamma)} + 3 ) \cosh{(\gamma)}^2 \cosh{(2 x)} - 4 \cosh{(\gamma)}^2 \cosh{(4 x)} \nonumber \\
&& \qquad \qquad \quad \;\; + \; \left. \left. 8 \sinh{(\gamma)} \cosh{(\gamma)}^3 \sinh{(2 x)} - 14 \cosh{(2 \gamma)} + \cosh{(4\gamma)} \right] \right\} \nonumber \\
&& + \; \lambda_{\mathcal{D}}(x) \left\{ \sinh{(x + \gamma)} \cosh{(\gamma)} \left[ \lambda_{\mathcal{A}}(0)^2 \sinh{(x+ 2 \gamma)} \sinh{(x+\gamma)}^2 + \Lambda_0^2  \sinh{(x)}^3 \right] \right. \nonumber \\
&& \qquad \qquad + \; 16^{-1} \lambda_{\mathcal{A}}(0) \Lambda_0 \left[ 1 + 2 \cosh{(\gamma} \cosh{(\gamma - 2 x)} + 8 \cosh{(\gamma)} \cosh{(\gamma + 2 x)} \right. \nonumber \\
&& \qquad \quad \qquad \qquad \qquad \quad \quad + \;  6 \cosh{(\gamma)} \cosh{(3\gamma + 2 x)} - 4 \cosh{(\gamma)} \cosh{(3 \gamma + 4x)} \nonumber \\
&& \qquad \quad \qquad \qquad \qquad \qquad - \; \left. \left. 14 \cosh{(2 \gamma)} + \cosh{(4 \gamma)} \right] \right\} \nonumber \\
\nonumber \\
\bar{\beta_2}(x) &\coloneqq& \left( \Lambda_0 \sinh{(x)}^2 - \lambda_{\mathcal{A}}(0) \sinh{(x+\gamma)}^2 \right)^{-2}  \nonumber \\
\>
render the following Schr\"odinger equation for the function $\psi$,
\[ \label{schroedinger}
\partial^2 \psi(x) - \frac{3 c^2 }{\left[ \omega_0 b(x)^2 - \omega_0^{-1} a(x)^2 \right]^2} \psi(x) = \psi(x) \; .
\]
In \eqref{schroedinger} we have written $\Lambda_0 \eqqcolon c^L \omega_0^2$ and we can also identify
\[ \label{Vx}
\mathscr{V}(x) \coloneqq - \frac{3 c^2 }{\left[ \omega_0 b(x)^2 - \omega_0^{-1} a(x)^2 \right]^2}
\]
as the potential. In particular, it is also worth remarking that the stationary Schr\"odinger equation associated with the
potential \eqref{Vx} would normally read
\[ \label{normalS}
\left( \partial^2 + \mathscr{V}(x) \right) \psi(x) = \mathscr{E} \psi(x) \; .
\]
Hence, Eq. \eqref{schroedinger} can be regarded as a Schr\"odinger equation with fixed \emph{energy}. More precisely, the function $\Lambda$
representing eigenvalues of the six-vertex model's transfer matrix is mapped to solutions of \eqref{normalS} having energy $\mathscr{E} = 1$.

As extensively discussed in \Secref{sec:DISCR}, the characterization of the spectrum of the transfer matrix \eqref{tmat} through 
functional/differential equations is not complete until we declare \emph{initial} or \emph{boundary conditions} for the eigenvalues
$\Lambda$. Such conditions are then responsible for the discretization of the spectrum in consonance with \eqref{tmat} being a 
finite-dimensional matrix. Interestingly, within our approach we can identify the origins of such conditions. While the algebra (Yang-Baxter)
is solely responsible for the functional/differential equations, the particular representation under consideration supply the required boundary conditions.
In this way there is the possibility that the spectrum of the Schr\"odinger operator $\partial^2 + \mathscr{V}(x)$, with potential 
$\mathscr{V}(x)$ given by \eqref{Vx}, is not discrete for generic parameters $\omega_0$ since the latter
contains the initial conditions associated with the six-vertex model's spectral problem. In order to fix such parameter one can use properties of the 
representation \eqref{rmat}.

\begin{thm}[Initial condition] \label{om0}
The parameter $\omega_0^2$ is a $L$-th root-of-unity.
\end{thm}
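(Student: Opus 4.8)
The plan is to evaluate the transfer matrix \eqref{tmat} at the distinguished point $x=0$ in the specialization $\mu_i = 0$ and $\phi_1 = \phi_2 = 1$ that underlies the Schr\"odinger equation \eqref{schroedinger}, and to exploit the fact that the $\mathcal{R}$-matrix \eqref{rmat} degenerates to a permutation operator there. Since $c = c(x) = \sinh{(\gamma)}$ is a constant and $\Lambda_0 = \Lambda(0) = c^L \omega_0^2$ by the definition preceding \eqref{schroedinger}, proving the claim amounts to showing that $\Lambda(0)/\sinh{(\gamma)}^L$ is an $L$-th root of unity.

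First I would observe that $a(0) = c(0) = \sinh{(\gamma)}$ while $b(0) = 0$, so that $\mathcal{R}_{0j}(0) = \sinh{(\gamma)}\, \mathcal{P}_{0j}$, where $\mathcal{P}_{0j}$ denotes the permutation operator on $\mathbb{V}_0 \otimes \mathbb{V}_j$. Hence, with $\mu_i = 0$, the monodromy matrix \eqref{mono} collapses to $\widetilde{\mathcal{T}}_0(0) = \sinh{(\gamma)}^L \, \mathcal{P}_{01}\mathcal{P}_{02}\cdots\mathcal{P}_{0L}$. Taking the partial trace over the auxiliary space $\mathbb{V}_0$ and using $\phi_1 = \phi_2 = 1$ (so that $\Gamma_0 = \mathrm{Id}$), a direct computation gives $\mathrm{T}(0) = \sinh{(\gamma)}^L \, \mathrm{U}$, where $\mathrm{U} \coloneqq \mathrm{tr}_0(\mathcal{P}_{01}\cdots\mathcal{P}_{0L})$ is the one-site cyclic shift (translation) operator on $(\mathbb{C}^2)^{\otimes L}$.

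The key structural input is then that the shift operator has finite order: translating by one site $L$ times returns every basis vector to itself, so $\mathrm{U}^L = \mathrm{Id}$. Consequently every eigenvalue of $\mathrm{U}$ is an $L$-th root of unity. Since $\ket{\Psi}$ can be taken $x$-independent because $[\mathrm{T}(x),\mathrm{T}(y)]=0$, we have $\mathrm{T}(0)\ket{\Psi} = \Lambda(0)\ket{\Psi} = \sinh{(\gamma)}^L\,\mathrm{U}\ket{\Psi}$, whence $\mathrm{U}\ket{\Psi} = (\Lambda_0/\sinh{(\gamma)}^L)\ket{\Psi} = \omega_0^2\ket{\Psi}$. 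Thus $\omega_0^2$ is an eigenvalue of $\mathrm{U}$, and therefore $(\omega_0^2)^L = 1$, which is precisely the assertion.

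The main obstacle, such as it is, lies in the trace identity $\mathrm{tr}_0(\mathcal{P}_{01}\cdots\mathcal{P}_{0L}) = \mathrm{U}$ and in respecting the ordered-product convention $\overrightarrow\prod$ of \eqref{mono}; both are standard but must be handled with care, since the opposite ordering merely yields the inverse shift $\mathrm{U}^{-1}$ (harmless here, as it also satisfies $(\mathrm{U}^{-1})^L = \mathrm{Id}$). I would also stress that the argument only invokes $\Lambda(0)$ in the specialization fixed above; for the genuinely twisted case $\Gamma_0 = \mathrm{diag}(\phi_1,\phi_2)$ the same degeneration produces a \emph{twisted} shift whose spectrum is modified by the boundary factors, which is exactly why the root-of-unity statement is phrased for the symmetric specialization entering \eqref{schroedinger}.
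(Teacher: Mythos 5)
Your proposal is correct and follows essentially the same route as the paper: specialize to $\phi_1=\phi_2=1$, $\mu_i=0$, use $\mathcal{R}_{0j}(0)=\sinh{(\gamma)}\,\mathcal{P}_{0j}$ to reduce $\mathrm{T}(0)$ to $\sinh{(\gamma)}^L$ times a cyclic shift, and conclude from the shift having order $L$. The only cosmetic difference is that you invoke the translation operator's order directly, whereas the paper establishes $\mathscr{O}^L=\mathrm{Id}$ by an explicit induction on products of permutation matrices.
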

\begin{proof}
The proof is essentially the same as the one previously described in \cite{Galleas_2008}. We first set $\phi_1 = \phi_2 = 1$ and 
$\mu_i = 0$. Next we notice $\mathcal{R}_{0 j} (0) = \sinh{(\gamma)} \mathcal{P}_{0 j}$ with $\mathcal{P}_{0 j}$ the permutation matrix
satisfying $\mathcal{P}_{0 i}^2 = \mathrm{Id}$ and $\mathcal{P}_{0 j} \mathcal{P}_{0 i} = \mathcal{P}_{0 i} \mathcal{P}_{i j}$.
The latter properties allow one to show $\mathrm{T}(0) = \sinh{(\gamma)}^L \mathscr{O}$ where
\[
\mathscr{O} \coloneqq \mathcal{P}_{1 L} \mathcal{P}_{1 L-1} \dots \mathcal{P}_{1 2} \; .
\]
Using induction and properties of the permutation matrix $\mathcal{P}_{i j}$ one can then prove
\[ \label{ON}
\mathscr{O}^n = \PROD{1}{j}{n} \left( \mathcal{P}_{j L} \mathcal{P}_{j L-1} \dots \mathcal{P}_{j n+1} \right) \; .
\]
Therefore, by setting $n=L$ in \eqref{ON}, we find $\mathscr{O}^L = \mathrm{Id}$ and consequently $\left( \omega_0^2 \right)^L = 1$. 
This concludes our proof. 
\end{proof}

As far as the potential function \eqref{Vx} is concerned, it is also interesting to examine the conditions on which
$\mathscr{V}(x) \in \R$. For instance, let us first assume $x, \gamma \in \R$. Next we would like to fix $\omega_0$ satisfying
Theorem \ref{om0} such that $\mathscr{V}(x)$ remains real. We can readily see $\omega_0 = \pm 1$ satisfy our requirements, and so does
$\omega_0 = \pm \ii$ for $L \in 2 \Z$. However, notice $\mathscr{V}(x)$ is invariant under the map $\omega_0 \to - \omega_0$ and
we can restrict our analysis to $\omega_0 = 1 , \ii$.  

Focusing first on the case $\omega_0 = \ii$, we find \eqref{Vx} consists of a potential barrier with shape depicted in Figure 2. 
The inspection of $\mathscr{V}(x)$ for a range of values of its parameters then suggests $\gamma$ controls both the shape of the barrier
as well as the location of its peak. In particular, we notice the barrier shape is very sensitive to $\gamma$ in the region
$0 < \gamma \lesssim 1$; while for $\gamma \gtrsim 1$ we see that $\gamma$ mainly shifts the center of the barrier to negative values
of $x$.

As for $\omega_0 = 1$ the behavior of \eqref{Vx} is drastically different. In that case $\mathscr{V}(x)$ corresponds to an infinite well
potential as shown in Figure 3. Similarly, in this case we also notice the shape of the potential well is much more sensitive to 
the parameter $\gamma$ in the region $0 < \gamma \lesssim 1$. Controlling the center of the well also seems to be the role of $\gamma$
in the region $\gamma \gtrsim 1$.

\begin{figure} \label{fig:OMI}
\centering
\subfloat[$\gamma = 0.1$]{\includegraphics[width=0.45\textwidth]{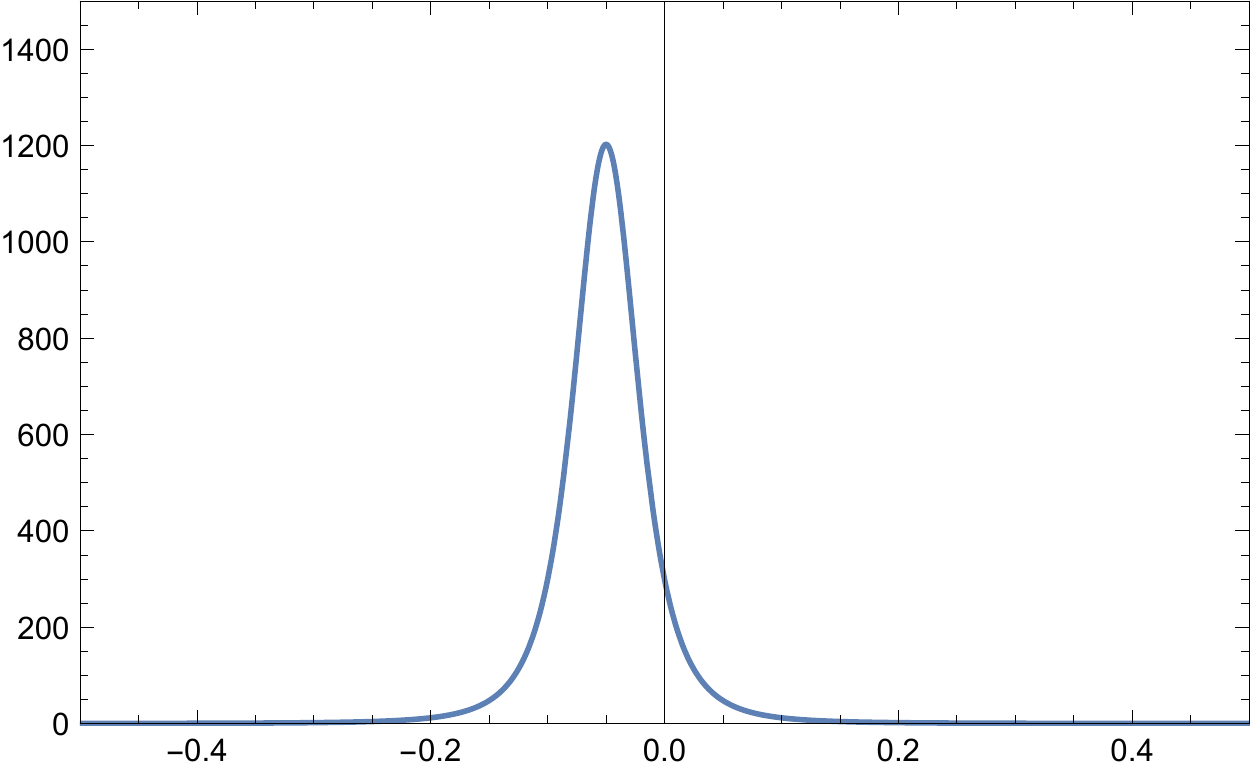}\label{fig:f1}}
\hfill
\subfloat[$\gamma = 0.3$]{\includegraphics[width=0.45\textwidth]{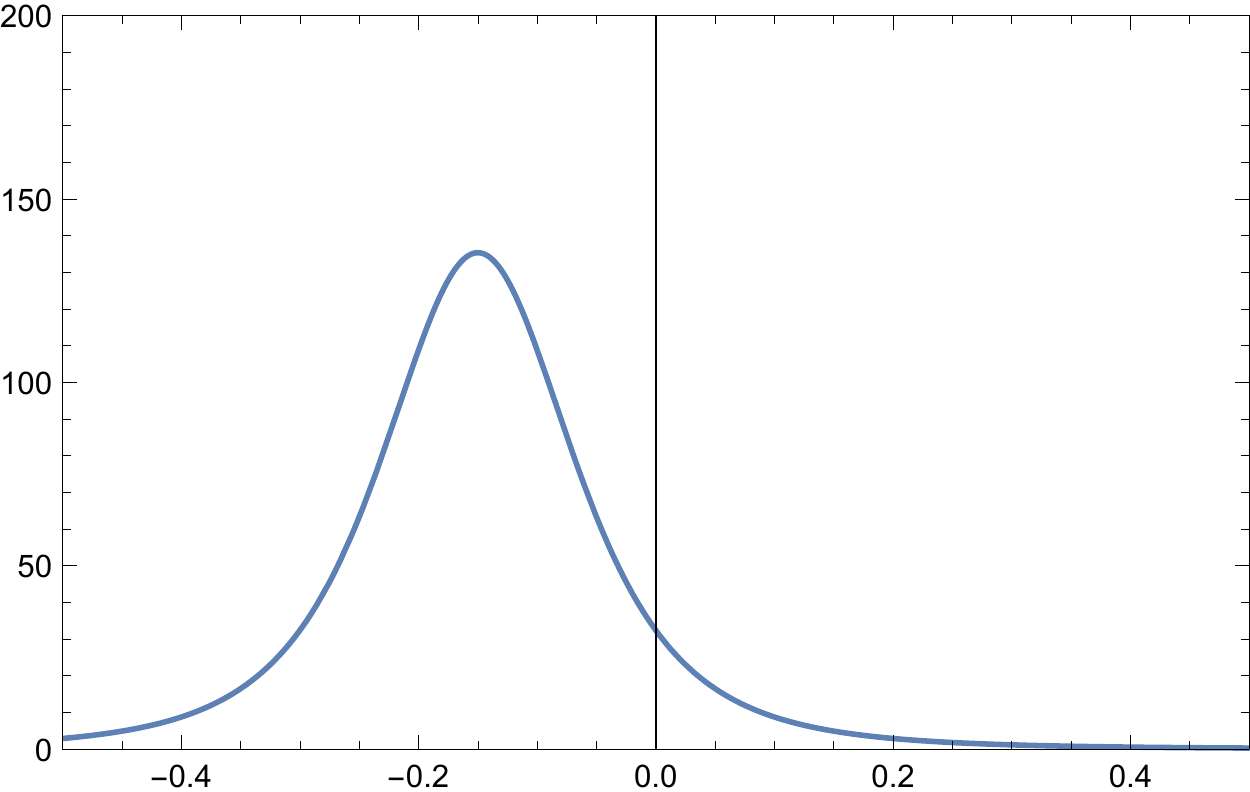}\label{fig:f2}}
\\
\subfloat[$\gamma = 5.43$]{\includegraphics[width=0.45\textwidth]{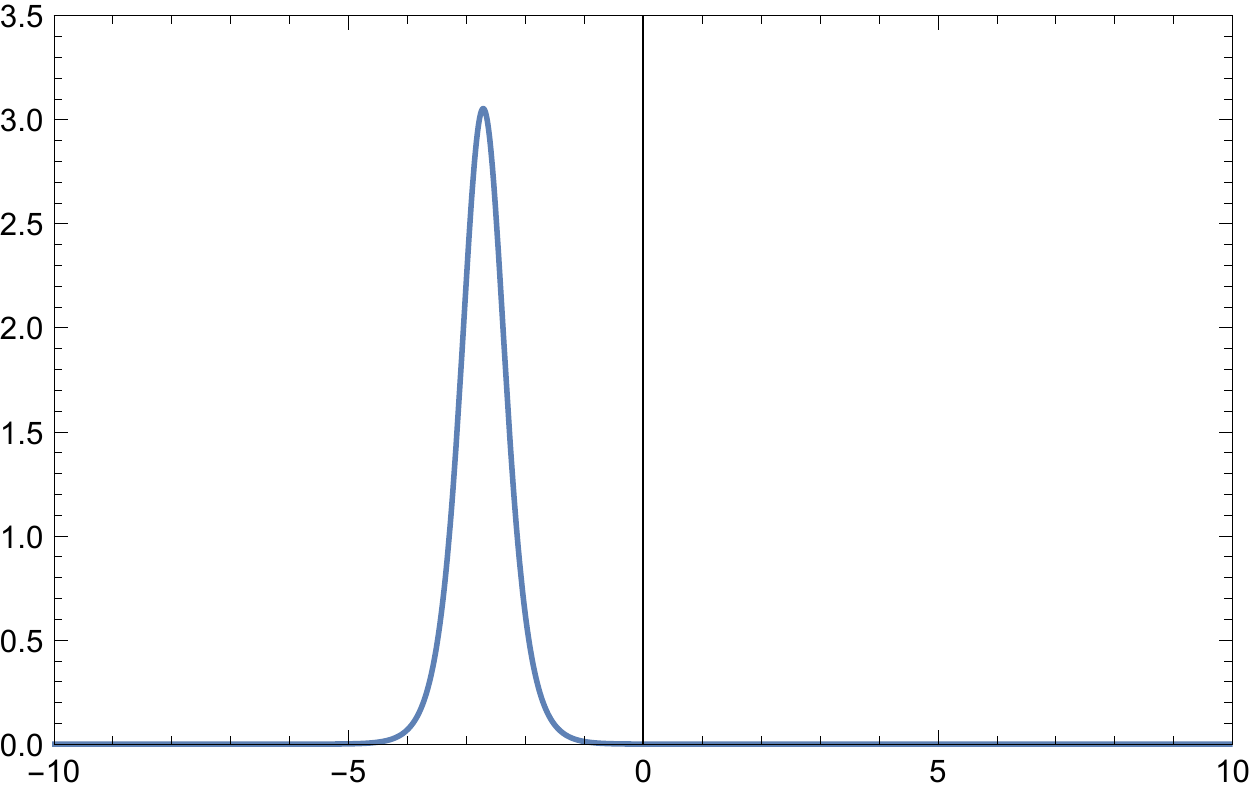}\label{fig:f3}}
\hfill
\subfloat[$\gamma = 8.12$]{\includegraphics[width=0.45\textwidth]{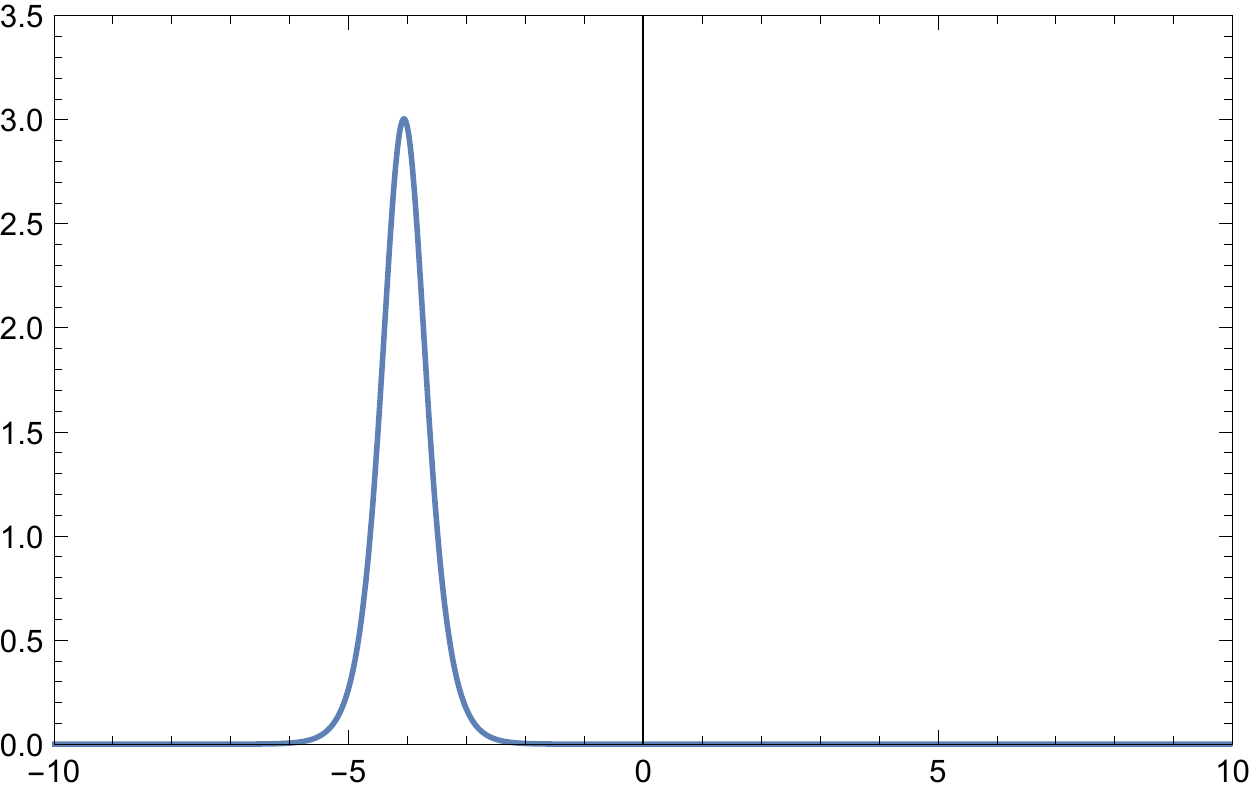}\label{fig:f4}}
\caption{Plot of the potential function \eqref{Vx} with $\omega_0 = \ii$ and given values of $\gamma$. The variable $x$ is represented in the horizontal axis while the vertical one denotes the potential $\mathscr{V}(x)$.}
\end{figure}

\begin{figure} \label{fig:OM1}
\centering
\subfloat[$\gamma = 0.1$]{\includegraphics[width=0.45\textwidth]{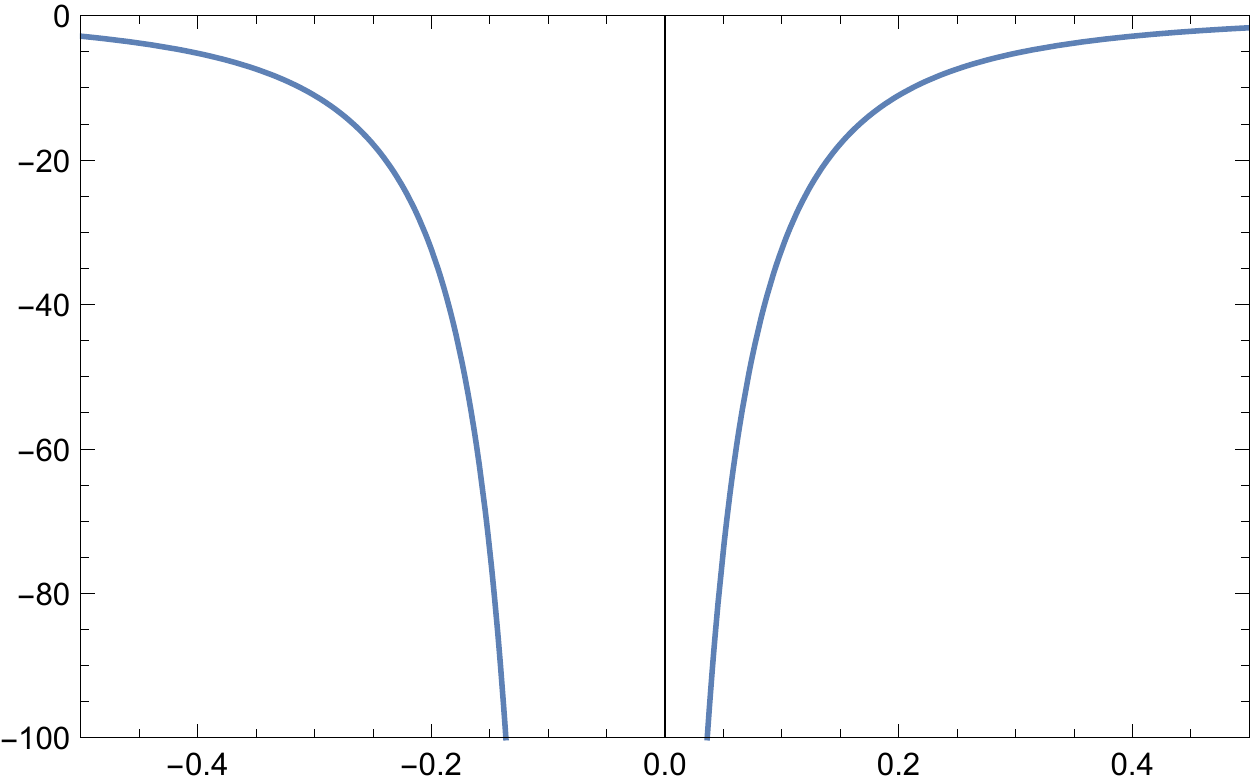}\label{fig:F1}}
\hfill
\subfloat[$\gamma = 0.3$]{\includegraphics[width=0.45\textwidth]{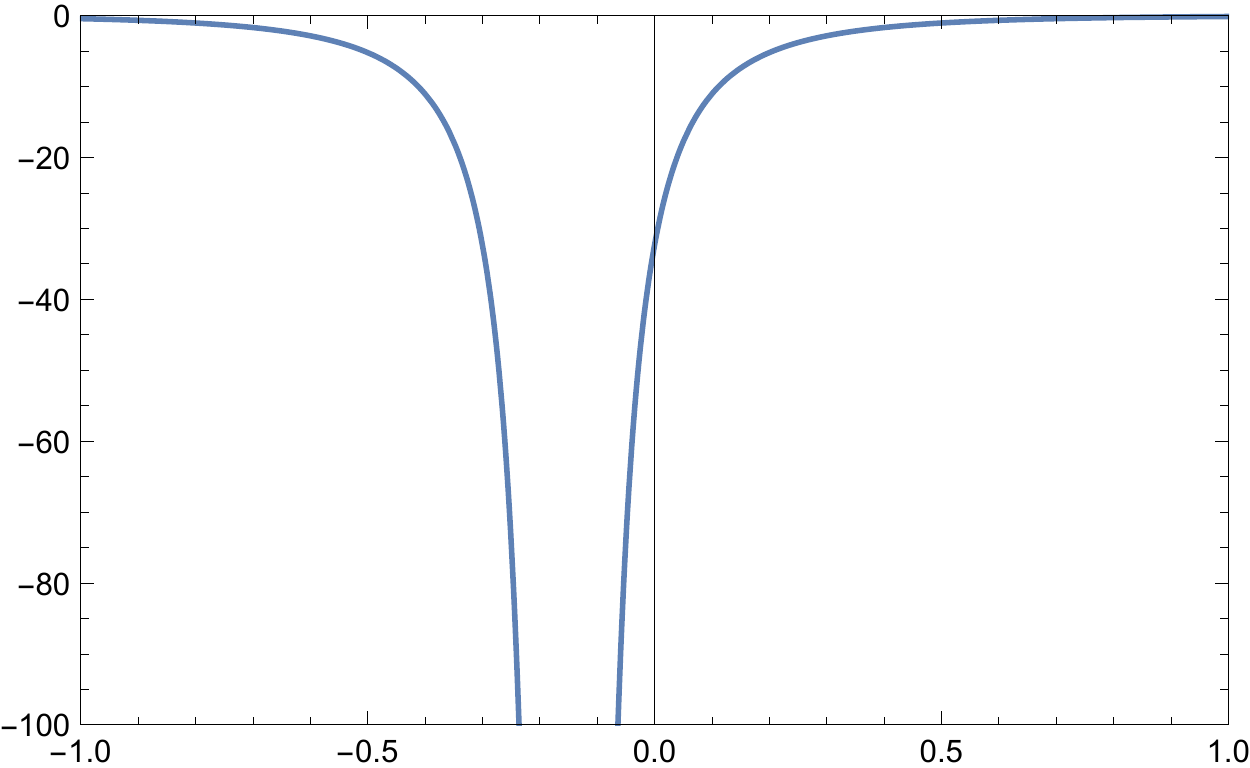}\label{fig:F2}}
\\
\subfloat[$\gamma = 5.43$]{\includegraphics[width=0.45\textwidth]{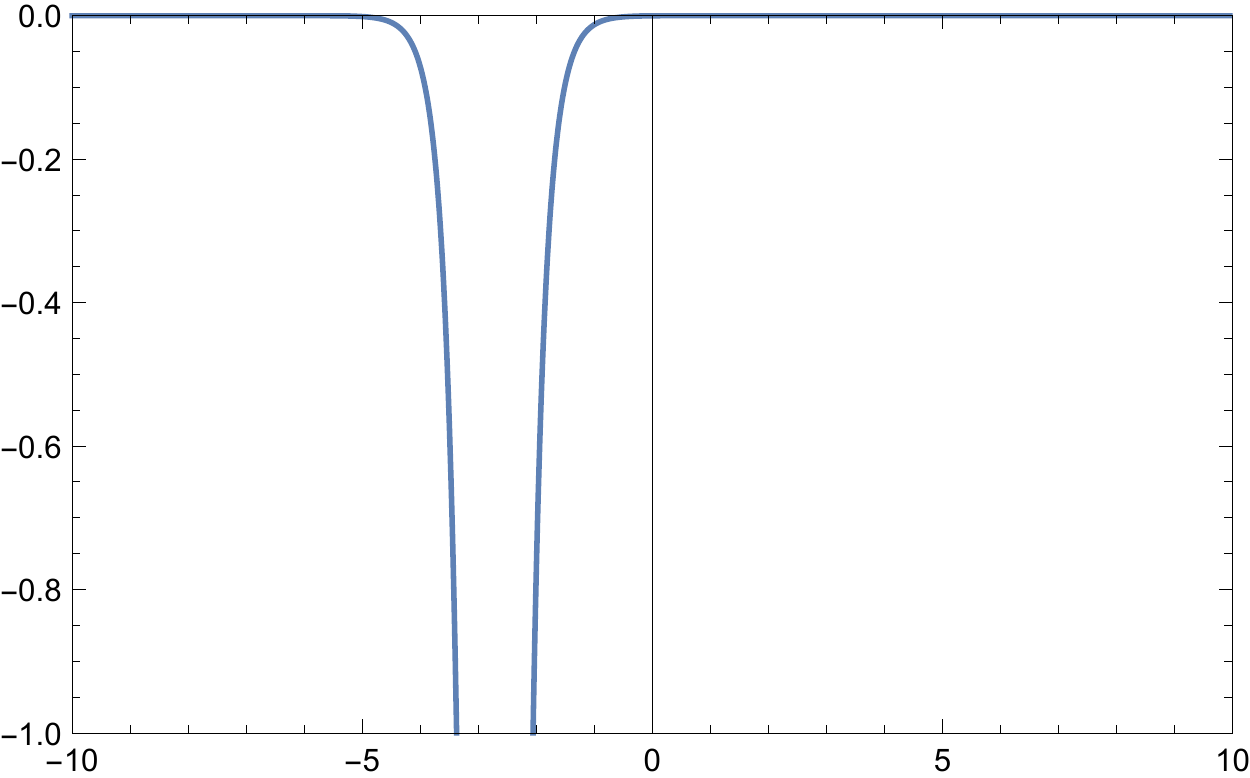}\label{fig:F3}}
\hfill
\subfloat[$\gamma = 8.12$]{\includegraphics[width=0.45\textwidth]{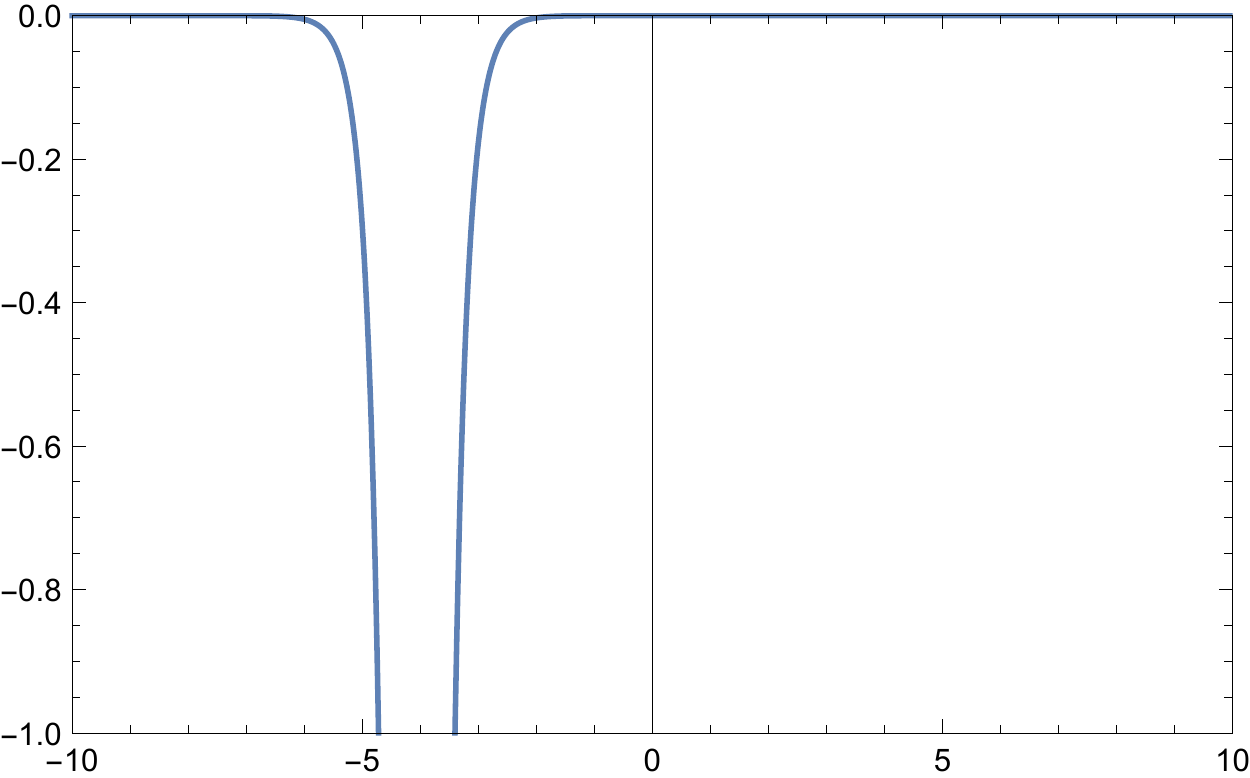}\label{fig:F4}}
\caption{Plot of the potential function \eqref{Vx} with $\omega_0 = 1$ and given values of $\gamma$. The variable $x$ is represented in the horizontal axis while the vertical one denotes the potential $\mathscr{V}(x)$.}
\end{figure}

\section{Concluding remarks} \label{sec:CONCL}

In this paper we have reported on a series of relations between non-linear differential equations and the eigenvalue
problem of the six-vertex model's transfer matrix. The cornerstone of such relations is a formulation of the 
six-vertex model spectral problem as a \emph{boundary value problem} of certain non-linear differential equations. 

One important aspect of our analysis is the origin of such differential equations. The latter can be traced back to the 
Yang-Baxter algebra, which is a common algebraic structure underlying integrability of vertex models in the sense of Baxter.
In particular, we have shown that the mechanism leading to such formulation is in fact analogous to the framework of the 
\emph{classical inverse scattering method}. This analogy has motivated several results obtained in the present paper, for example,
the existence of generating functions of conserved quantities in the function space spanned by transfer matrix's eigenvalues.

The non-linear differential equations satisfied by eigenvalues of the six-vertex model's transfer matrix are \emph{Riccati equations} and they
can also be derived as specializations of certain non-linear functional equations describing the transfer matrix's spectrum. Riccati equation is one of
the first non-linear differential equations studied after the formulation of \emph{differential calculus} \cite{Riccati_1724}
and it has found several applications which includes the theory of conformal mapping through the Schwarzian 
derivative \cite{Ovsienko_Tabachnikov_book}. 
Interestingly, within our analysis such type of differential equation offers two routes for studying the transfer matrix's spectral problem. One of them is in 
terms of standard Bethe ansatz equations while the second option is through an underlying Schr\"odinger equation. We have elaborated on the latter
possibility in \Secref{sec:SCHROD}. Moreover, this Riccati representation might still open new possibilities for describing transfer matrix's
eigenvalues, without relying on Bethe ansatz equations, which we have not envisaged in the present paper. 

It is also important to remark that the description of certain eigenvalues associated with integrable vertex models through differential 
equations have appeared previously in the literature. For instance, Bazhanov and Mangazeev have shown in the works \cite{Bazhanov_Mangazeev_2005, Bazhanov_Mangazeev_2006, Bazhanov_Mangazeev_2010}
that a particular eigenvalue of the eight-vertex model \textsc{q}-operator satisfy certain differential equations when the model's
anisotropy parameter assumes a very special value. Here, however, we are describing eigenvalues of the transfer matrix and our analysis is
valid for generic values of the model's parameters.

Several results presented here do not depend on having a covering integrable lattice model. In particular, the auxiliary linear problem
put forward in Proposition \ref{aux} and the general mechanism leading to conserved quantities discussed in \Secref{sec:QUANT}. 
As a matter of fact, the auxiliary linear problem \eqref{MF} only relies on symmetric functions and the generating function \eqref{gen}
is a direct consequence of the formalism described in the present paper. In this way one could also think of embedding certain non-linear
differential equations within our approach along the lines of Lax representation of the KdV equation. That can be accomplished as follows. Firstly, one needs to find non-linear functional equations
such that the differential equation of interest is obtained as a specialization of the former. Next one needs to exhibit coefficients $\mathrm{M}_i$
such that Lemma \ref{det} produces the targeted non-linear functional equation.
However, finding such coefficients $\mathrm{M}_i$ might be a task far from trivial  and this is the point where the AF method plays an important role. As for 
integrable vertex models one can use the AF method to produce such coefficients encoding quantities of interest.

By using the AF method Eq. \eqref{MF} was also shown to accommodate several quantities including scalar products of Bethe vectors and partition functions
with special boundary conditions. For instance, by setting $\phi_1 = 1$, $\phi_2 = 0$,
$n=L$ and $\Lambda(x) = \lambda_{\mathcal{D}}(x)$; one can recognize \eqref{MF} with \eqref{MI} as the equation type A describing the partition
function of the six-vertex model with domain-wall boundaries \cite{Galleas_2016b}. A similar specialization also gives equation type D.	
Moreover, as discussed in \Secref{sec:scp} the symmetric functions $\mathcal{F}_n$ satisfying \eqref{MF} with \eqref{MI} also contains information
about the transfer matrix's eigenvectors. In a follow-up publication we intend to study in more details the role of the function
$\mathcal{F}_n$ in the six-vertex model.

\bibliographystyle{alpha}
\bibliography{references}

\newpage
\appendix

\section{Functions $\xi_i (x)$} \label{app:FUN}

In this appendix we define the functions $\xi_i (x)$ entering the definition of the differential function $\Sigma_2$
in \Secref{sec:N2}. They read as follows:

\<
&&  \xi_0 (x) \coloneqq 2 (q^4 - 1)^2 \lambda_{+}^{\prime} (x) \lambda_{-}^{\prime} (x) \left[ (\lambda_{+} (x))^2  + (\lambda_{-} (x))^2  \right] \nonumber \\
&& \quad + \; (q^2 -1)^2 (q^4 -1) \left[ (\lambda_{+} (x))^2 \lambda_{-}^{\prime} (x) \lambda_{-}^{\prime \prime} (x) + (\lambda_{-} (x))^2 \lambda_{+}^{\prime} (x) \lambda_{+}^{\prime \prime} (x)   \right] \nonumber \\
&& \quad + \; 4 (q^2 -1) \left[ (q^2 +1)^2 (\lambda_{+}(x))^3 \lambda_{+}^{\prime} (x) \sum_{n=0}^2 q^n  + (q^2 - 1)^2 (\lambda_{-}(x))^3 \lambda_{-}^{\prime} (x) \sum_{n=0}^2 (-1)^n q^n \right] \nonumber \\
&& \quad + \; 2 (q^2 -1)^2 \sum_{n=0}^2 q^n \left[ (q^2 +1) (\lambda_{+}(x))^3 \lambda_{-}^{\prime \prime} (x) + (q - 1)^2 (\lambda_{-}(x))^3 \lambda_{+}^{\prime \prime} (x)  \right] \nonumber \\
&& \quad - \lambda_{+} (x) \lambda_{-} (x) \left\{   16 q^2 \sum_{n=0}^2 q^n \left[ (q - 1)^2 (\lambda_{+} (x))^2 - (q + 1)^2 (\lambda_{-} (x))^2   \right] \right. \nonumber \\
&& \quad \qquad + \; 2 (q^2 -1)^2 \sum_{n=0}^2 q^n \left[ (q - 1)^2 \lambda_{-} (x) \lambda_{-}^{\prime \prime} (x)   + (q + 1)^2 \lambda_{+} (x) \lambda_{+}^{\prime \prime} (x)   \right] \nonumber \\
&& \quad \qquad + \;  2 (q^4 - 1)^2 \left[ (\lambda_{+}^{\prime} (x))^2  + (\lambda_{-}^{\prime} (x))^2 \right] + (q^2 + 1) (q^2 - 1)^3 \left[ \lambda_{+}^{\prime} (x) \lambda_{-}^{\prime \prime} (x) + \lambda_{-}^{\prime} (x) \lambda_{+}^{\prime \prime} (x)  \right] \nonumber \\
&& \quad \qquad + \;  4 (q + 1)^2 (q^4 - 1) (1 - 3 q + q^2)  \lambda_{-} (x) \lambda_{+}^{\prime} (x) \nonumber \\
&& \quad \qquad \left. + \; 4 (q^2 -1) \sum_{n=0}^2 (-1)^n q^n \; (1 + 2 q + 6 q^2 + 2 q^3 + q^4) \lambda_{+} (x) \lambda_{-}^{\prime} (x)  \right\} \\
\nonumber \\
\nonumber \\
\nonumber \\
&& \xi_1 (x) \coloneqq  8 (1+q)^2 (\lambda_{-}(x))^3 \sum_{n=0}^2 q^{2 n} + (1 - q) \lambda_{-}(x) \left[ 2 \lambda_{+}(x) \sum_{n=0}^2 q^n - (1 + q^2) \lambda_{-}^{\prime} \right] \nonumber \\
&& \quad   + \;  2(q+1)^2 (q^2-1) \left[ 4(q^2+1) (\lambda_{-}(x))^2 \lambda_{+}^{\prime}  + (q^2-1) \lambda_{-}(x) ( \lambda_{+}^{\prime}(x) )^2  \right]  \nonumber \\
&& \quad  + \; (q^2 - 1) \lambda_{+}(x) \left[ 2(q^2 + 1) \lambda_{+}^{\prime} (x) + (q^2 - 1) \lambda_{-}^{\prime \prime} (x) \right] \left[ 2 \lambda_{+}(x) \sum_{n=0}^2 q^n + (q^2 -1) \lambda_{-}^{\prime} \right] \nonumber \\
&& \quad\quad \quad \quad \times \; \left[ (1 + q) (q^2 - 1) \lambda_{+}^{\prime \prime} (x) + 4 \left( (q -1) \lambda_{+}(x) + (q + 1) \lambda_{-}^{\prime} (x) \right) \sum_{n=0}^2 (-1)^n q^n \right] 
\>

\<
&& \xi_2 (x) \coloneqq  2 (q^2 -1) (3 + 2 q + 3 q^2) \lambda_{+} (x) \lambda_{-} (x) \lambda_{+}^{\prime} (x) \nonumber \\
&& \;\; \quad \quad \quad + \; (q^2 -1)^2 \lambda_{-}(x) \left[ \lambda_{-} (x) \lambda_{+}^{\prime \prime} (x) - \lambda_{+} (x) \lambda_{-}^{\prime \prime} (x)  \right]  \nonumber \\
&& \;\; \quad \quad  \quad + \; 4(1 + q^2) \lambda_{+} (x) \left[ (1 + 5 q + q^2) (\lambda_{-}(x))^2 - (\lambda_{+}(x))^2 \sum_{n=0}^2 q^n \right] \nonumber \\
&& \;\; \quad \quad  \quad \quad - \; 2 (q^2 - 1) \lambda_{-}^{\prime} \left[ (1 + q^2) (\lambda_{+}(x))^2 +  2 (\lambda_{-}(x))^2 \sum_{n=0}^2 (-1)^n q^n  \right] \\
\nonumber \\
&& \xi_3 (x) \coloneqq (q^2 -1) \left[ \lambda_{-} (x) \lambda_{+}^{\prime} (x) - \lambda_{+} (x) \lambda_{-}^{\prime} (x)  \right]  \nonumber \\
&& \qquad \qquad + \; 2 ( \lambda_{-} (x) )^2 \sum_{n=0}^2 (-1)^n q^n - 2 ( \lambda_{+} (x) )^2 \sum_{n=0}^2 q^n 
\>

\<
&&\xi_4 (x) \coloneqq   8 \lambda_{+}(x) \lambda_{-}(x) \sum_{n=0}^4 (-1)^n q^n  + 2 (q^2 -1) (3 q^2 - 2 q + 3) \lambda_{-}(x) \lambda_{-}^{\prime} (x)  \nonumber \\
&& \qquad \qquad  - \; 2 (q^4 -1) \lambda_{+}(x) \lambda_{+}^{\prime} (x)  + (q^2 - 1)^2 \left[ \lambda_{-}(x) \lambda_{+}^{\prime \prime} (x)  - \lambda_{+}(x) \lambda_{-}^{\prime \prime} (x)   \right] \\
\nonumber \\
&& \xi_5 (x) \coloneqq  2(q^2 +1) (\lambda_{-}(x))^2 - 2(q^2 -1) (\lambda_{+}(x))^2 \nonumber \\
&& \qquad \qquad + \; (q^2 -1) \left[ \lambda_{-}(x) \lambda_{+}^{\prime}(x) - \lambda_{+}(x) \lambda_{-}^{\prime}(x)    \right]  \; .
\>

\end{document}